\newcommand{\dchia}{ R \, \mathrm{d} _A \phi _A }
\newcommand{\uu}{ \mathcal{U} }
\newcommand{\hi}{\Phi }
\newcommand{\f}{\mathcal{F} _{1\mathrm{c}  }}
\newcommand{\UU}{ \mathcal{V} }
\renewcommand{\L}{ \Lambda   }
\newcommand{\rrho}{\check{ \rho }   }
\newcommand{\qi}{ \mathbf{i}  }
\newcommand{\qj}{ \mathbf{j}  }
\newcommand{\qk}{ \mathbf{k}  }
\newcommand{\gh}{ \hat{g}  }
\newcommand{\pphi}{ \phi _{ \Alc}}
\newcommand{\ph}{ \phi ^{(3)}}
\newcommand{\cchi}{\vartheta}
\newcommand{\Ans}{ A _\nu ^s }
\newcommand{\gph}{g_{ P ^1 (\mathbb{H}  )  }}
\newcommand{\Aaa}{A ^{ (3) } _\lambda }
\newcommand{\Aa}{A ^{ (3) }  }
\newcommand{\Fff}{F ^{ (3) }}
\newcommand{\MM}{\mathcal{M} }
\newcommand{\MMf}{\mathcal{M} ^{ \mathrm{f} } }
\newcommand{\MMM}{\mathcal{M}_{1\mathrm{c}}  }
\newcommand{\MMMf}{\mathcal{M} _{1\mathrm{c}}^{ \mathrm{f} } }
\newcommand{\gm}{ g _{ \MMMf}  }
\newcommand{\gmm}{ g _{ \MMM}  }
\newcommand{\GG}{\mathcal{N} }
\newcommand{\HH}{\mathcal{H} }
\newcommand{\fp}[2]{\left(#1,#2 \right)_{ \mathbb{H}} }
\newcommand{\pg}[1]{#1 _{\perp} }
\newcommand{\AAA}{ A ^\prime  }
\newcommand{\A}{\hat A _{ \mathrm{std} }}
\newcommand{\Ar}{\hat A _{  R  }}
\newcommand{\Al}{\hat A _{ \lambda  }}
\newcommand{\An}{ A _{ \nu   }}
\newcommand{\All}{ A _{ \lambda    }}
\newcommand{\Ala}{\hat A _{ \lambda, \alpha   }}
\newcommand{\Alaa}{A _{ \lambda, \alpha   }}
\newcommand{\Alc}{\hat {A} ^{ \prime} _{ \lambda  }}
\newcommand{\fs}{  F }
\newcommand{\Agr}{\hat A _g ^{R}}
\newcommand{\Fgr}{\hat F _g ^{R}}
\newcommand{\omr}{ \omega _{ \mathrm{std} } ^R}
\newcommand{\aalpha}{ \varphi }
\renewcommand{\P}{P} 
\newtheorem{thm}{Theorem}
\newtheorem{lem}[thm]{Lemma}
\newtheorem{prop}[thm]{Proposition}
\newtheorem{coro}[thm]{Corollary}
\begin{document}

\title{Adiabatic dynamics of instantons on $S ^4 $ }
\author{
\bf{Guido Franchetti}\\
Institut f\"ur Theoretische Physik, Leibniz Universit\"at\\
Hannover, Germany\\
\tt{franchetti@math.uni-hannover.de}\\  \\
\bf{Bernd J. Schroers }\\
Department of Mathematics,
Heriot-Watt University\\  Edinburgh, United Kingdom\\
\tt{B.J.Schroers@hw.ac.uk}
}

\maketitle
\begin{abstract}
We define and  compute  the   $L^2$ metric  on the framed moduli space of circle invariant 1-instantons on the 4-sphere. This moduli space is four dimensional and our  metric is  $SO(3) \times U(1)$ symmetric. We study the behaviour of generic geodesics and show that the metric is geodesically incomplete. Circle-invariant instantons on the 4-sphere can also be viewed as hyperbolic monopoles, and we interpret our results from this  viewpoint. We  relate our results to work by Habermann on unframed instantons on the 4-sphere  and,  in the limit where the radius of the 4-sphere tends to infinity, to results on instantons on Euclidean 4-space. 
\end{abstract}

\textbf{Keywords:} Self-dual connections, Moduli space, $L^2 $-metric, Geodesic motion

\vspace{0.1cm}
\textbf{Mathematical Subject Classification:} 53C07, 53C80

\section{Introduction}
\label{intro} 
In field theories with topological solitons, moduli spaces of solitons typically  inherit a natural $L^2$ metric from their embedding  in the infinite-dimensional configuration space of  the field theory \cite{Manton:2004tk}.  The explicit determination of  this metric  is  difficult, and often only possible via indirect methods relying on special properties like K\"ahler or hyperK\"ahler structures.  This is the case for some of the best-studied examples, like the  moduli spaces of abelian Higgs vortices or  of   BPS monopoles. 

There are few cases where  non-trivial metrics on moduli spaces can be obtained by a  parameterisation of  the  moduli space  and  an explicit computation of the integrals which define the $L^2$ metric. Examples include the moduli spaces of  two $P ^1 (\mathbb{C}  )$ lumps  on Euclidean 2-space \cite{ward:1985}, of a single $P ^1 (\mathbb{C}  )$ lump on the 2-sphere \cite{Speight:1997ub},  and of a single instanton on  Euclidean 4-space or on the 4-sphere \cite{Groisser:1987uq,Habermann:1988wv,Habermann:1993ud}.  Both for lumps and instantons, the conformal invariance of the defining equations allows one to switch from Euclidean to spherical geometry without changing the fields (and hence the moduli spaces). However,  the $L^2$ metric depends on the spatial geometry, and  is more particularly interesting in  the spherical case. 

In this paper we study   the framed  moduli space of  circle invariant instantons on the 4-sphere.  One  motivation   stems from the correspondence between circle invariant instantons and hyperbolic monopoles \cite{Atiyah:1987ua}.  
Since the $L^2$ metric on the moduli space of hyperbolic monopoles diverges  it has not yet been possible to determine the adiabatic dynamics of hyperbolic monopoles from the geometry of their moduli space, despite several efforts.  As we shall explain, the metric we compute here can, in a certain sense, be viewed as an answer to this problem.

A second motivation, which is at the same time more straightforward and more fundamental, is that circle invariant instantons on the 4-sphere provide a natural setting for exhibiting and studying  the various  issues associated with the framing of moduli spaces for gauge theories on   a compact manifold.  

Framing essentially amounts to declaring the value of gauge transformations at a  chosen point  as `physically relevant' and including it in the moduli space. It is often natural to do this for moduli spaces of topological solitons  in gauge theories since framing  restores overall  degrees of freedom which are  otherwise only visible as a relative phases in multi-soliton moduli spaces.

In order to compute the $L^2$ metric on the framed moduli space, one  requires the values of   gauge transformations not only at  the chosen point but on the entire manifold. On $\mathbb{R}^4$, the chosen point is usually `infinity' and the gauge transformations on the entire manifold are  determined  via Gauss's law. However, on a compact (oriented, simply connected) 4-manifold, Gauss's law has no non-trivial solutions, so that  the  generator of  `physically relevant' gauge transformations has to be constructed differently. In this paper we propose a natural and explicit construction of this generator as the vertical component, in the decomposition determined by the connection, of the vector field generating the circle action on the total space of the instanton bundle.  

Our framing prescription equips the  moduli space with  an additional circle. This fits well with  the interpretation of circle invariant instantons as hyperbolic monopoles where framing also leads to  an additional circle of gauge transformations,  generated by  the Higgs field of the monopole. In fact we will show how to obtain this Higgs field directly  from our construction.

The moduli space of all 1-instantons  on the 4-sphere, as defined and studied in detail  in  \cite{Habermann:1988wv,Groisser:1987uq}, is 5-dimensional, with one parameter for  the scale  of the instanton and four for  its position on the 4-sphere. The moduli space of circle-invariant instantons  can be identified with a submanifold of this space. In fact, invariance under a circle action forces the   centre of the instanton to lie on a fixed 2-sphere  inside the 4-sphere,  and therefore cuts down the total number of parameters to three.
As we shall see, the restriction to circle-invariant instantons is mathematically very natural. In particular we show that, in  a description of the instanton gauge fields in terms of quaternions, it essentially amounts to replacing quaternionic parameters with complex ones.

The framed moduli space of circle invariant instantons is therefore a 4-dimensional Riemannian manifold, and we shall see that the additional circle allows for a much more interesting geodesic behaviour than that on the unframed moduli space. In fact, the qualitative behaviour of geodesics is remarkably similar to that found for a single lump on a 2-sphere in \cite{Speight:1997ub}. 
 
Throughout this paper we work on a 4-sphere of arbitrary radius $R$. This allows us to obtain  the $L ^2 $ metric of the moduli space of instantons over Euclidean $\mathbb{R}  ^4 $ in the limit $R\rightarrow \infty$. While this limit has been investigated before    \cite{Habermann:1993ud},  our treatment is more direct and explicit.

Our paper is organised as follows.  In Section \ref{s1} we review some basic material about instantons. In particular we introduce two convenient parameterisations of the moduli space of 1-instantons and explicitly describe their relation and geometrical meaning. In Section  \ref{s2} we carefully define the notion of circle invariance and  describe the framed moduli space of circle-invariant 1-instantons, which   turns out to be a trivial circle bundle over an open 3-ball.  In Section \ref{s3},   we define and compute the $L ^2 $-metric on this moduli space.
The resulting metric has an $SO (3) \times U (1) $ symmetry, is non-singular but incomplete and has positive, non-constant scalar curvature. We then discuss some properties of this metric and the behaviour of its geodesics. In the short final Section \ref{s5} we discuss our results and  draw some conclusions. 

\section{Preliminaries}
\label{s1} 
In this section we recall some material about instantons on the 4-sphere. We introduce two convenient parameterisations of the moduli space of 1-instantons, discuss  their geometrical meaning, and describe the framed moduli space. Readers interested in more details on this background material can consult e.g.~\cite{Naber:1500641,Groisser:1987uq,Groisser:1989vl,Atiyah:EHEgiu5y}. 

\subsection{Instantons on  $S ^4 _R   $}
We consider pure $SU (2) $ Yang-Mills theory on $S ^4 _R $, the round 4-sphere of radius $R$. We are keeping $R$ arbitrary as we will be interested in the limit $R \rightarrow \infty $.  It is convenient to identify $ S ^4 _R $ with the quaternionic projective plane $P ^1 (\mathbb{H}  )$. Our conventions for quaternions, which are mostly standard and follow those in \cite{Naber:1500641}, are described in Appendix \ref{qnot}, where we also introduce the quaternionic-valued matrix groups we shall consider in this paper, namely $SL(2, \mathbb{H}  ) $, $Sp (2) $ and $Sp (1) $. The quaternionic projective plane can be defined as the quotient of the 7-sphere by the group $S p (1)  \simeq  SU (2) $ of unit quaternions,
\begin{equation}
\begin{split} 
&P ^1 (\mathbb{H}  )  
=\{ [ q ^1 , q ^2 ]:( q ^1 , q ^2 )\in  S ^7 _{ \sqrt{ R } }\subset \mathbb{H}  ^2 \}, \\
& [q ^1 , q ^2]=[r ^1 , r ^2 ] \Leftrightarrow ( r ^1 , r ^2 ) =( q ^1 h, q ^2 h ) \text{ for some } h\in \mathbb{H} , |h |=1.
 \end{split} 
\end{equation}
Moreover, we identify $SU (2) $ with $Sp (1) $, and Euclidean $\mathbb{R}  ^4 $ with the space $\mathbb{H}  $ of quaternions  carrying the flat metric 
\begin{equation}
\label{flm} 
\gh = \mathrm{d} \bar q \, \mathrm{d} q .
\end{equation}
The manifolds $S ^4 _R $ and $ P ^1 (\mathbb{H}  ) $  are diffeomorphic, but  the  metric
\begin{equation}
\label{gp1h} 
\gph =\frac{R ^4 }{(R ^2  + |q| ^2 ) ^2 }\,  \gh
\end{equation} 
on $P ^1 (\mathbb{H}  )$ inherited via Hopf projection from the standard metric on 
$\mathbb{C}  ^4 $  is one quarter of the round metric on $S ^4 _R $.
The advantage of working with (\ref{gp1h}) is that it converges to $\gh $ in the limit $R \rightarrow \infty $.

Our mathematical framework  is that of  a principal bundle with a connection $\omega$ on it. We will focus on the principal bundle $P$ given by
\begin{equation}
\xymatrix{
Sp (1) \ar@{^{(}->}[r] & S ^7 _{ \sqrt{ R }} \ar[d] ^ \pi \\
& P ^1 (\mathbb{H}  ).
}
\end{equation} 
The total space of $P$ is a 7-sphere of  radius  $\sqrt{ R }$ which is mapped onto $P ^1 (\mathbb{H}  ) $ by the Hopf projection. 
Let 
\begin{equation}
\hat \phi _N : P ^1 (\mathbb{H}  ) \setminus [R,0] \rightarrow \mathbb{H}, \qquad   [q ^1 , q ^2 ] \mapsto R \, q ^1  (q ^2  )^{-1}
\end{equation}
be the map corresponding to stereographic projection from the north pole of $S ^4 _R $, see Appendix \ref{stp} for more details.
We will frequently consider the pullback bundle
$\hat P = (\hat\phi _N ^{-1} ) ^\ast P $.
A connection $\omega$ on $P$ is  uniquely determined by the gauge potential $A =\sigma ^\ast \omega $, where $\sigma$ is a section of the pullback bundle 
$\hat P $. In fact, it follows from the standard theory of principal bundles and connections, see e.g.~\cite{Naber:1500641}, that $\omega$ is determined by the pair $\{ \sigma ^\ast \omega , \tau ^\ast \omega \}$, where $\tau$ is a local section of $P$  over  $P ^1 (\mathbb{H}  ) \setminus \{ [0,R] \}$. On the intersection $ P ^1 (\mathbb{H}  ) \setminus \{[R,0],[0,R]\} $ of the domains of the local sections $\sigma$ and $\tau$, the gauge potential $\tau ^\ast \omega $ can be expressed in terms of $\sigma ^\ast \omega $ and of the transition function between the two sections, and at the remaining point $[0,R] $ it is determined by continuity.
%

If $ \zeta $ is some object defined on $P $, 
we denote by  $ \hat  \zeta  =  (\hat \phi _N ^{-1} )^\ast \zeta  $ the corresponding object on $\hat P $, and vice versa if $\hat\xi $ is some object defined on $ \hat P $, we denote by  $  \xi   =  (\hat \phi _N  )^\ast \hat\xi   $ the corresponding object on $P $.
On $\Lambda ^p (P ^1 (\mathbb{H}  ), \mathrm{ad} (P)  ) $ we take the $L ^2 $ inner product
\begin{equation}
\label{ip} 
\langle \zeta   ,\xi   \rangle 
=  - \frac{1}{2} \int_{P ^1 (\mathbb{H}  )} \mathrm{Tr} \left( \zeta   \wedge * \xi    \right),
\end{equation} 
and denote by $|| \cdot || $ the induced norm.
If $\hat \zeta  , \hat \xi   \in \Lambda ^p  (\mathbb{H}  , \mathfrak{sp} (1)) $, we define
\begin{equation}
\begin{split} 
\fp{ \hat \zeta}{ \hat \xi   }&
=  - \frac{1}{2} \mathrm{Tr} \left( \hat \zeta  \wedge \hat * \hat \xi    \right),
\end{split} 
\end{equation} 
with $\hat * $ the Hodge operator with respect to the metric $\gh $, and by $| \hat \zeta   | ^2 _{ \mathbb{H}  } \, \mathrm{vol} _{ \mathbb{H}  } =  \fp{ \hat \zeta  }{\hat  \zeta   } $.

The curvature of a connection  defines the field strength $\fs \in \Lambda ^2 (P ^1 (\mathbb{H}  ), \mathrm{ad}(\P)) $, with $\mathrm{ad}(\P)$  the linear adjoint bundle $ P \times _{ \mathrm{ad}} \mathfrak{sp} (1) $.  By remarks similar to those made above, 
$\fs $ is uniquely determined by the field strength $ (\hat\phi _N ^{-1}) ^\ast  \fs $  induced by $ \fs $ on  $\hat P $. 
A connection $\omega$ is said to be  a Yang-Mills connection if  $\mathrm{d} _\omega *  F  =0$,  to be anti self-dual if its field strength satisfies the anti self-dual Yang-Mills (ASDYM) equations  $ *  \fs = -  \fs $, and to be self-dual if $*\fs =\fs $, with $ * $  the Hodge operator with respect to the metric (\ref{gp1h})  on $P ^1 (\mathbb{H}  ) $. Because of the Bianchi identities, a self-dual/anti self-dual connection is a Yang-Mills connection.
A non-singular self-dual/anti self-dual connection with finite action is called an instanton.

Principal  $Sp (1) $ bundles over $P ^1 (\mathbb{H}  ) $ are classified by the second Chern number, the integer
\begin{equation}
\label{c2c2c2} 
c _2  (P ) =\frac{1}{8 \pi ^2 }\int _{ P ^1 (\mathbb{H}  ) }\mathrm{Tr} \left( \fs \wedge \fs \right) .
\end{equation} 
The quantity $-c _2 $ is known as instanton number. 
By the usual Bogomolny argument the Yang-Mills action can be written as
\begin{equation}
- \int _{ P ^1 (\mathbb{H}  ) }  \mathrm{Tr}( F \wedge * F )
= - \frac{1}{2} \int \mathrm{Tr} \left[ (F \pm * F )\wedge * (F \pm * F )\right]  \pm 8 \pi ^2 c _2.
\end{equation} 
Therefore, for each value of $c _2 $, instantons are absolute minima of the Yang-Mills action which takes the value $ 8 \pi ^2 |c _2 |$.
Note that the Yang-Mills action is conformally invariant.

The first non-trivial solution of the ADSYM equations was discovered by Belavin, Polyakov, Schwartz and Tyupkin \cite{Belavin:428654} and has instanton number one.
In quaternionic notation this solution, which we call the standard instanton, is given by 
\begin{equation}
\label{stdinst} 
\A 
=  \frac{\Im ( \bar q\, \mathrm{d} q) }{1 + |q |^2 }.
\end{equation} 
We regard it as the pullback to $\hat P $ of the corresponding gauge potential on $P ^1 (\mathbb{H}  ) $.

The moduli space of  Yang-Mills instantons on a principal $Sp (1) $ bundle over $P ^1 (\mathbb{H}  )$ is defined to be $ \MM =   \mathcal{A} ^- / \mathcal{G} $, where $ \mathcal{A} ^- $ is the space of anti self-dual connections and  $\mathcal{G} $ the group of bundle automorphisms,
\begin{equation} 
\label{bauto} 
\mathcal{G} =\{ f: P \rightarrow P\  | \ g\in Sp (1), \,  p\in P \Rightarrow  f(p \cdot g )= f (p) \cdot g ,\,  \pi \circ f = \pi \},
\end{equation} 
acting on a connection by pullback. We say that two connections $ \omega $, $ \omega ^\prime $ are equivalent if  $ \omega ^\prime =f ^\ast \omega $ for $f\in \mathcal{G} $.
We recall that a bundle automorphism $f$ determines and is determined by
 either
  the $\mathrm{ad}$-equivariant map $g _f: P \rightarrow Sp (1) $ defined by 
\begin{equation}
\label{gadeqv} 
f(p ) = p \cdot g _f (p)
\end{equation}
or by the map $h _f : P ^1 (\mathbb{H}  ) \rightarrow P \times _ \mathrm{ad} Sp (1) $ defined for any $p\in \pi ^{-1} (x) $ by 
\begin{equation}
\label{hf} 
h _f (x) =[p, g _f (p) ]  .
\end{equation} 
The moduli space $\MM _ {n} $  of instantons  with instanton number $n$ is a smooth manifold of dimension $8n - 3 $ \cite{Atiyah:EHEgiu5y}. Because of the conformal invariance of the ASDYM equations and of Uhlenbeck's removable singularities theorem \cite{Uhlenbeck:1982bk}, $\MM _n $ is diffeomorphic to the moduli space of Yang-Mills instantons with instanton number $n$ on a principal $Sp (1) $ bundle over $\mathbb{H}  $.\footnote{A Yang-Mills  instanton on a $Sp (1) $ principal bundle over $\mathbb{H}  $ is an anti self-dual connection whose field strength has finite $L  ^2 $ norm. Because of the latter condition, the second Chern number is still defined by (\ref{c2c2c2}).  }

Let us consider the case $n =1 $. Since the ASDYM equations are conformally invariant in 4-dimensions, it is natural to consider the action of the conformal group of the 4-sphere $SO(5,1) $ on solutions of the ASDYM equations. The double cover $\mathrm{Spin}(5,1)= SL(2, \mathbb{H}  )$ of $SO(5,1) $  acts on $ S ^7 _{ \sqrt{ R } } \subset \mathbb{H}  ^2 $ as
\begin{equation}
\label{acts7} 
\left( g ^{-1}  =
\left(\begin{array}{cc} a  & b  \\ c  & d \end{array}\right), \left(  \begin{array}{c}q ^1 \\ q ^2 \end{array} \right) \right) \xmapsto{ \rho }
\frac{ \sqrt{R}}{|g ^{-1} (q ^1 , q ^2 )|}\left( \begin{array}{c} a  q ^1 + b  q ^2 \\ c  q ^1 + d  q ^2 \end{array} \right),
\end{equation}
where $|g ^{-1} (q ^1 , q ^2 )|  = \sqrt{ | a  q ^1 + b  q ^2 |^2 +  | c  q  ^1 +  d q ^2  |^2} $,
and induces an $SL(2 , \mathbb{H}  ) $ action on the space $\mathcal{A} $ of connections given by  
\begin{equation} 
(g, \omega )\mapsto g \cdot \omega =  \rho _{ g ^{-1} }^\ast \omega  ,
\end{equation}
where $ \rho _g =\rho (g, \cdot ) $.
  We pull back by $\rho _{ g ^{-1} } $ in order to obtain a left action. Note that $\rho$ descends to an action $\rrho $ of $SL(2, \mathbb{H}  )$  on $P ^1 (\mathbb{H}  ) $, given by
\begin{equation} 
\label{sl2ha} 
\left(g ^{-1} =\begin{pmatrix} a &b\\ c &d\end{pmatrix},  [ q ^1 , q ^2 ] \right)  \xmapsto{ \rrho }
\left[\sqrt{R} \, \frac{ a  q ^1 + b  q ^2}{ |g ^{-1} (q ^1 , q ^2 )|}, \sqrt{R} \, \frac{cq ^1 + d  q ^2}{ |g ^{-1} (q ^1 , q ^2 )|} \right].
\end{equation} 
Elements $\pm g\in SL(2, \mathbb{H}  ) $ induce the same transformation on $ P ^1 (\mathbb{H}  ) $, whence 
the group homomorphism $g \mapsto \rrho _g $ is surjective with kernel $\pm I $. In other words,  only $SL(2, \mathbb{H}  )/ \mathbb{Z}  _2 $ acts faithfully on $ P ^1 (\mathbb{H}  ) $.
 The isometry group  $SO (5)  $ of the 4-sphere has $ \mathrm{Spin}  (5)  \simeq   Sp (2) \subset SL(2, \mathbb{H}  )$ as its double cover. 


Consider the canonical connection
$\omr $ on $ S ^7 _{ \sqrt{ R } } $ , 
\begin{equation}
\omr=  \frac{1}{R}\Im  \left( \bar q ^1\, \mathrm{d} q ^1 + \bar q ^2 \,  \mathrm{d} q ^2  \right),
\end{equation} 
where we need to divide by $R$ to keep into account the fact that the connection is defined on a sphere of radius $ \sqrt{ R } $ rather than unitary.
The action of $SL(2, \mathbb{H}  )$  on $\omr $ is 
\begin{equation}
\begin{split} 
\label{acto} 
g \cdot \omr &
=\frac{\Im \left[ 
 (|a |^2  + |c |^2 )\bar q ^1 \mathrm{d} q ^1 
 +( |b |^2 + |d|^2 ) \bar q ^2 \mathrm{d} q ^2
 + \bar q ^2 (\bar b a + \bar d c) \mathrm{d} q ^1
 + \bar q ^1 (\bar a b + \bar c d) \mathrm{d} q ^2 
  \right] }{|a q ^1 + b q ^2 |^2 + |c q ^1 + dq ^2  |^2 }.
  \end{split} 
\end{equation}  
It is easy to check that the stabiliser of $\omr $ is $S p (2) $.

For later use, we derive the induced action on gauge potentials. Let $V _N =\{[ q ^1 , q ^2 ]\in P ^1 (\mathbb{H}  ): q ^2 \neq 0 \} $, take the section
\begin{equation}
s _N : V _N \rightarrow \pi ^{-1}  (V _N) , \quad [ q ^1 , q ^2 ] \mapsto \left( q ^1  \left( q ^2  \right) ^{-1}  |q ^2 |, |q ^2 | \right) 
\end{equation}  
of the principal bundle $ \P$ and compose it with $\hat \phi _N ^{-1} $   to obtain a section $s$ 
 of  $\hat P $,
 \begin{equation}
 \label{ours} 
s =s _N \circ \hat\phi ^{-1} _N : \mathbb{H}   \rightarrow S ^7 _{ \sqrt{ R}} \subset \mathbb{H} ^2, \quad
q \mapsto \sqrt{\frac{R}{ R ^2  + |q |^2 }} (q, R ).
\end{equation} 
The pullback via $s$  of the canonical connection $\omr $ is
\begin{equation}
\Ar
= 
 s ^\ast \omr
=\frac{\Im \left( \bar q\, \mathrm{d} q \right)  }{R ^2 +  |q |^2 }.
\end{equation} 
In particular for $R =1 $, we get the standard instanton (\ref{stdinst})  on $S ^4 $.
The pullback via $s$ of (\ref{acto}), for $g ^{-1} $ as in (\ref{acts7}), is
\begin{equation}
\label{atrgen} 
\Agr
=(\rho _{g ^{-1} } \circ s  ) ^\ast \omr 
= \frac{1}{|aq + bR|^2 + |cq + dR  |^2 } \Im  \left[ R( \bar b a +  \bar d c) \,\mathrm{d}q  + \left(  |a |^2 + |c|^2 \right) \bar q \, \mathrm{d} q\right] .
\end{equation} 
We also write  $ g \cdot \Ar $ for the gauge potential obtained pulling back $ g \cdot \omr $ by $s$ as above. 
We write $\Fgr=\mathrm{d} \Agr + \Agr \wedge \Agr$ for the field strength associated to $\Agr$.

\subsection{Two parameterisations of $ \MM _1 $ and their geometrical meaning}
\label{ss2} 

It was shown in \cite{Atiyah:EHEgiu5y} that the action of $SL(2, \mathbb{H}  ) $  on $\MM _1 $, the (unframed) moduli space of 1-instantons, is transitive with stabiliser $Sp (2) $, so that $\MM _1 $ is diffeomorphic to $SL(2, \mathbb{H} )/ Sp (2) $. By choosing a convenient parameterisation of $SL(2, \mathbb{H}  ) $ we obtain a corresponding description of $\MM _1 $.  We  now review two such descriptions, both given in \cite{Naber:1500641},  and interpret them geometrically. 
 
The first corresponds to the Iwasawa decomposition
\begin{equation}
SL (2, \mathbb{H}  ) =N A \, Sp (2) ,
\end{equation}
 where
\begin{align} 
\label{nnn} 
N &  
=\left \{b _n =   \begin{pmatrix}
1 &n/R\\ 0 &1
\end{pmatrix}\in SL(2, \mathbb{H}  )
: n\in \mathbb{H}  
\right\},\\
\label{aaa} 
A &
=\left \{ a _ \nu    
=  \begin{pmatrix}
\sqrt{\nu/R   } &0\\ 0 & \sqrt{R/\nu } 
\end{pmatrix}\in SL(2, \mathbb{H}  )
: \nu    >0
\right\}.
\end{align} 
The factor of $R$ has been inserted in order for the parameters $\nu $, $n$ to retain their usual meaning of scale and centre of an instanton, defined below,
for instantons on a 4-sphere of radius $R$.

The action of elements in $N A $ on $\Ar $  is
\begin{equation}
\label{instnn} 
b _n \,  a _ \nu   \cdot \Ar
=\frac{\Im \left[ (\bar q -  \bar n) \, \mathrm{d} q  \right] }{\nu ^2 + |q -  n |^2 }.
\end{equation} 
By computing the gauge invariant quantity
$| \Fgr |^2 _{ \mathbb{H}  } $ one can verify that as $g = b _n \,  a _\nu   $ varies in $N  A$, the corresponding gauge potentials  are all  inequivalent.
It follows that 
\begin{equation}
\label{nava} 
\MM _1 \simeq   SL(2, \mathbb{H}  )/Sp (2) = NA
\end{equation}
 is diffeomorphic to $\{(n, \nu )\in \mathbb{H}  \times \mathbb{R}   : \nu  >0 \} $, i.e.~the upper half space $\nu  >0 $ in $\mathbb{R}  ^5 $. 
Since $ N A = N \rtimes A $ is a semidirect product,
its action on a point $(n, \nu)$ of the moduli space is given by
\begin{equation}
(  n, \nu  )  \xmapsto{ n ^\prime \nu ^\prime  } (  n ^\prime + \nu^\prime n, \nu ^\prime \nu).
\end{equation}

This parameterisation is well suited to instantons on $\mathbb{H}  $, where the interpretation of the parameters $\nu$ and $n$ is particularly clear. The action density $ - \mathrm{Tr} ( F \wedge * F  ) $ has an absolute maximum for $q =n $. Exactly one half of the total action, $4 \pi ^2 $ for instanton number one, is obtained by integrating the action density over the ball of centre $n$ and radius $\nu$. Therefore, $n$ can be thought as the centre of the instanton and $\nu$ as its scale. 

The second parameterisation is obtained from the decomposition  
\begin{equation}
SL(2, \mathbb{H}  ) =Sp (2)  \, \AAA \, Sp(2) ,
\end{equation} 
where $\AAA $ is the subset of $A$ given by
\begin{equation}
\label{aares} 
\AAA = \{ a _\lambda \in A : \lambda \in(0,R] \}.
\end{equation}
To parametrise $ \mathcal{M} _1 $ it is convenient to write  $Sp (2) $ as the union of its left cosets $g ( S p (1) \times S p (1) )$, $g\in Sp (2) $. 
The action of $Sp (2) $ on $P ^1 (\mathbb{H}  ) $ is transitive with stabiliser $Sp (1) \times Sp (1) $, the double cover of $SO (4) $. Therefore $P ^1 (\mathbb{H}  )\sim  Sp (2) / Sp (1) \times Sp (1) $. 
For $m\in \hat { \mathbb{H}  }$, the one-point compactification of $\mathbb{H}  $, define $g _m \in Sp (2) $ by
\begin{equation}
\label{gm} 
g _m =\begin{cases}
\frac{1}{\sqrt{ R ^2  + |m |^2 }}\left( 
\begin{array}{ccc} R & m \\ - \bar m &R\end{array} 
\right) &\text{if $m\in \mathbb{H}  $},\\
\left( \begin{array}{ccc} 0 & 1 \\ - 1 &0\end{array} 
\right) &\text{if $m =\infty   $}.
\end{cases} 
\end{equation} 
The action of $\{ g _m : m\in \hat{\mathbb{H}} \}$ on $P ^1 (\mathbb{H}  )$ is free and transitive. Therefore
\begin{equation}
\label{cdec} 
SL (2, \mathbb{H}  ) 
=\bigcup_{g\in Sp (2) } g \left(  Sp (1) \times Sp (1) \right) \, \AAA \, Sp (2) 
=\bigcup_{m\in \hat{\mathbb{H}}} g _m \, \AAA\,  Sp (2) ,
\end{equation} 
where we used $[\AAA, Sp (1) \times Sp (1) ]=0$ and $ (Sp (1) \times Sp (1)) Sp (2) =Sp (2) $.
 Note  that $\{ g _m : m\in \hat{\mathbb{H}} \}$ is not a subgroup of $S p (2) $.

The action of an element $g _m a _\lambda\in  \cup_{m} g _m \, \AAA$ on $\Ar $ is 
\begin{equation}
\label{ppp2} 
g _m a _{ \lambda  } \cdot \Ar 
=\frac{\Im \left[\left( \lambda ^2/ R ^2  - 1 \right)\bar m\,   \mathrm{d} q + \left(1 + \lambda ^2  |m |^2/ R ^4    \right) \bar q \, \mathrm{d} q  \right] }{|q -  m  |^2 +  (\lambda ^2/ R ^2 )  |\bar m q/R + R |^2 }.
\end{equation} 
For future reference, the gauge potential obtained pulling back $ a _\lambda \cdot  \omr $ via $s$ is
\begin{equation}
\label{al} 
\Al = a _\lambda \cdot \Ar = \frac{\Im ( \bar q \, \mathrm{d} q )}{\lambda ^2 + |q| ^2 }.
\end{equation} 
By computing $| \hat F ^R _{ g } |^2 _{ \mathbb{H}  } $, with $g =g _m a _\lambda $, it is possible to verify that the instanton with parameters $m\in \hat{ \mathbb{H}  }$, $\lambda\in(R, \infty )$ is equivalent to the one with parameters $ - R ^2 /\bar m$, $ R ^2 / \lambda $.   Hence we obtain all the inequivalent instantons by restricting $\lambda$ to the interval $ (0,R] $.
The standard instanton $\Ar $  is $Sp (2) $-invariant, hence $ g _m \cdot \Ar =\Ar$ for all $m\in \hat{ \mathbb{H}  }$. Therefore
\begin{equation} 
\label{opppp}
\MM _1 
\simeq   SL(2, \mathbb{H}  )/Sp (2) = \bigcup _{m\in \hat{ \mathbb{H}  } } g _m \AAA
\end{equation}
 is  diffeomorphic to the open ball $ \mathring B ^5 _R  $  with radius $R$. The fixed point of the $Sp (2) $ action occurs for $\lambda =R $, which corresponds to the standard instanton  $\Ar $. The boundary of $ \mathring B ^5 _R  $ parametrises  singular ``small'' instantons for which $\lambda =0 $ and is not part of $\MM _1 $. A sketch of the moduli space can be found in Figure \ref{mods}.
\begin{figure}[htbp]
\begin{center}
\includegraphics[scale=0.5]{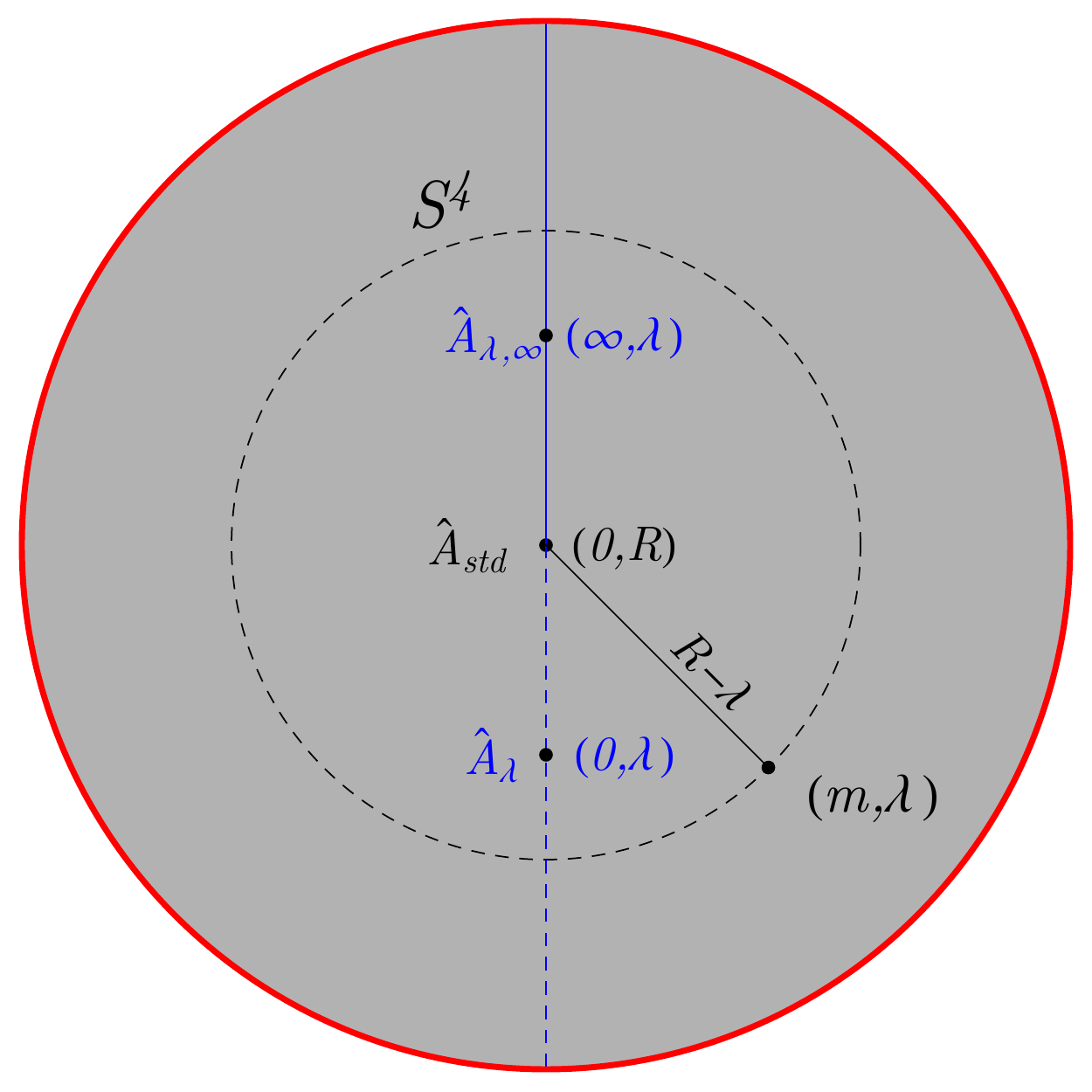} 
\caption{Sketch of $\MM _1  \simeq   \mathring{B} ^5 _R \subset \mathbb{R}  ^5 $. The standard instanton $\Ar $ has $m =0, \lambda =R $ and sits at the centre of the open 5-ball whose boundary, in red, corresponds to singular ``small'' instantons for which $\lambda =0 $. The vertical axis in $\mathbb{R}  ^5$ parametrises instantons with $\lambda \in(0, R ] $ and centre $m =0 $ (dashed vertical line)  or $m =\infty $ (continuous vertical line). }
\label{mods}
\end{center}
\end{figure}


The parameterisation (\ref{opppp})  is better suited to instantons on $P ^1 (\mathbb{H}  )$. To the best of our knowledge, its geometrical meaning has not been described in the literature, so we shall spend a few words doing so. For this purpose, it is convenient to temporarily view $P ^1 (\mathbb{H}  )$ as the 4-sphere $ S ^4 _R $. 

The centre $c$ and scale $l$ of an instanton on $ S ^4 _R $ are defined similarly to the centre and scale of an instanton on $ \mathbb{H}  $, $c$  being the location of the absolute maximum of the action density, and half of the action density being enclosed in the ball $B$ of radius $l$ and centred at $c$ (with respect to the metric on $ S ^4 _R $ induced by Euclidean metric on $\mathbb{R}  ^5 $). Note that the value of $l$ is equal to the length of the shortest great circle arc connecting $c$ to any point on the boundary of $B$, see Figure \ref{geo}.
\begin{figure}[htbp]
\begin{center}
\includegraphics[scale=0.6]{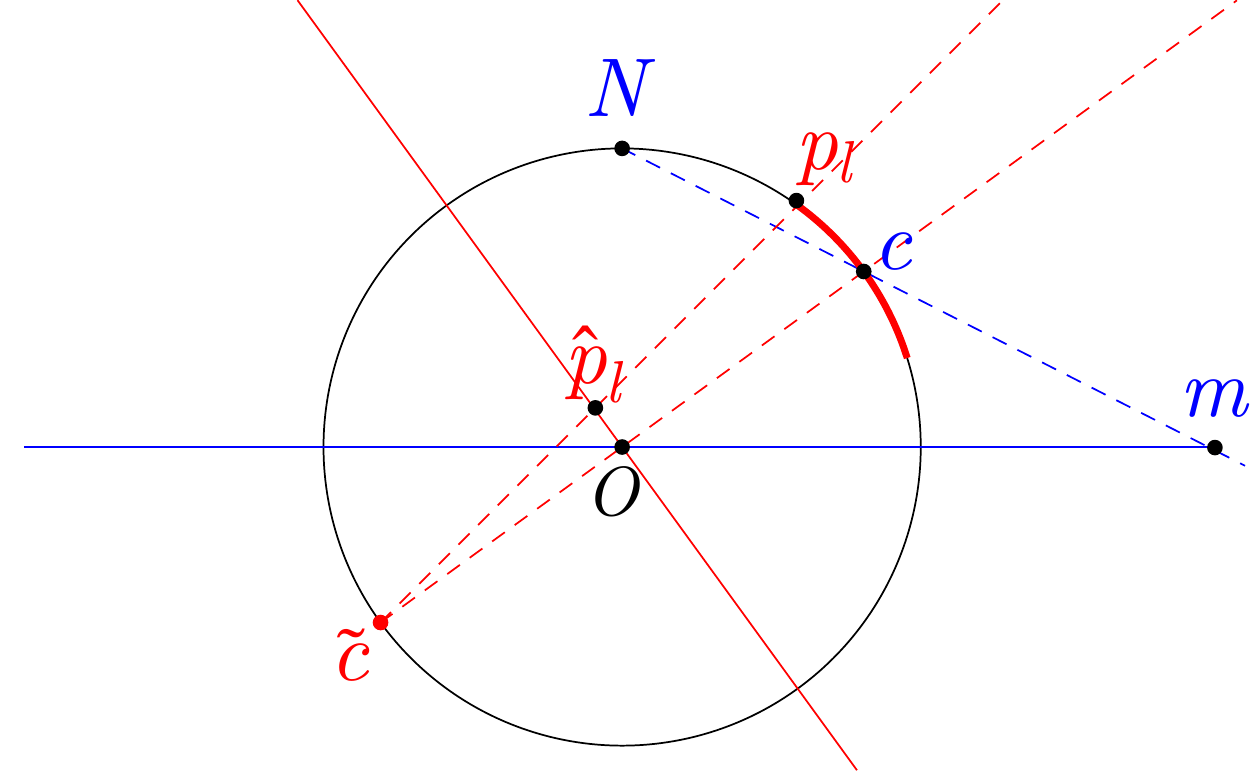}
\caption{The black circle represents the round 4-sphere $S ^4 _R \subset \mathbb{R}  ^5 $ with centre $O $. The red  arc corresponds to the 4-ball $B$ mentioned in the text. The red and blue continuous lines represent the 4-planes through $O$ orthogonal to $ \overline{ O  \tilde c } $ and $ \overline{ O  N } $.  The point $c$ is the centre of an instanton on $S ^4 _R $,  and the instanton scale $l$ is equal to the length of the shortest great circle arc connecting $c$ to $p _l $. The parameter $m$ is equal to $ \phi _N (c) $, while $\lambda$ is equal to the length   of the segment $ \overline{ O  \hat p _l } $, with $\hat p _l =\phi _{ \tilde c }(p _l )$.}
\label{geo}
\end{center}
\end{figure}

For $p\in S ^4 _R $, denote by $\tilde p $ the antipodal point and by $\phi _p: S ^4 _R \rightarrow \mathbb{H}  $ the stereographic projection from $p$. We write $N$ for the north pole $ (0,0,0,0,R) $. 
\begin{prop} 
The centre $c$ and scale $l$ of an instanton on $S ^4 _R $ are related to the parameters $m$ and $\lambda$  by  the relations
\begin{equation}
\label{geomint} 
 m =\phi _N (c), \qquad 
 \lambda = |\phi _{ \tilde c }(p _l  )| ,
\end{equation} 
where $ p _l $  is any point on $ S ^4 _R $ at distance $l$  from  $c$. Equivalently
\begin{equation}
\label{cott} 
\frac{ \lambda}{R} = \tan \left( \frac{l}{2R} \right).
 \end{equation}
 \end{prop}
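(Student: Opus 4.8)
The plan is to establish the proposition first for the distinguished instanton $\Al$ of~(\ref{al}), which has $m=0$, and then transfer it to a general $g_m a_\lambda \cdot \Ar$ of~(\ref{ppp2}) using that $g_m \in Sp(2)$ acts on $S^4_R$ as an isometry. Since the $SL(2,\mathbb{H})$-action on connections is a left action, $g_m a_\lambda \cdot \Ar$ is obtained from $\Al$ by applying $g_m$, and $\rrho_{g_m}$ is an isometry of the round metric on $S^4_R$ (equivalently of $\gph$, which is one quarter of it). Both the centre (the location of the maximum of the action density) and the scale (the radius of the geodesic ball about the centre carrying half the total action) transform covariantly under isometries, so it suffices to compute them for $\Al$ and then apply $\rrho_{g_m}$.

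To locate the centre of $\Al=\Im(\bar q\,\mathrm{d}q)/(\lambda^2+|q|^2)$, note that it is invariant, up to a constant gauge transformation, under $q\mapsto uqv$ for unit quaternions $u,v$; hence its action density on $S^4_R$ is invariant under the corresponding $SO(4)$-action, whose only fixed points are $q=0$ and $q=\infty$. Combining the standard identity $|\hat F_\lambda|^2_{\mathbb{H}}\propto\lambda^4/(\lambda^2+|q|^2)^4$ with the conformal factor relating $\gh$ to the round metric via~(\ref{gp1h}) shows that the $S^4_R$-action density equals a positive constant times $\left(\frac{R^2+|q|^2}{\lambda^2+|q|^2}\right)^4$, which is strictly decreasing in $|q|$ when $\lambda<R$. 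Its maximum therefore lies at $q=0$, i.e.\ at the south pole $c_0=[0,R]$, so $\phi_N(c_0)=0=m$ (the case $\lambda=R$ being the homogeneous standard instanton recovered in the limit).

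For the scale of $\Al$, the Yang--Mills density $\mathrm{Tr}(F\wedge *F)$ is conformally invariant, so the action inside the geodesic ball $B$ of radius $l$ about $c_0$ equals the Euclidean action of $\Al$ (viewed as an instanton on $\mathbb{H}$) inside $\hat\phi_N(B)$. By the rotational symmetry just used, $\hat\phi_N(B)$ is a Euclidean ball $\{|q|<r\}$, and the stereographic relation for the round $S^4_R$ of radius $R$ gives $r=R\tan(l/2R)$ --- the geodesic distance from $c_0$ to a point of colatitude $\theta$ is $R\theta$, and that point has $|\phi_N|=R\tan(\theta/2)$. Since exactly half of the Euclidean action of $\Al$ lies in the ball of radius $\lambda$ (the BPST property recalled after~(\ref{nava})), the condition defining the scale forces $r=\lambda$, which is~(\ref{cott}); and any $p_l$ at distance $l$ from $c_0$ lies on $\partial B$, so, since $\tilde c_0=N$, $|\phi_{\tilde c_0}(p_l)|=|\phi_N(p_l)|=r=\lambda$, giving the second relation in~(\ref{geomint}).

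It remains to transfer to general $m$. Applying $\rrho_{g_m}$, the centre of $g_m a_\lambda \cdot \Ar$ is $\rrho_{g_m}(c_0)$; a direct computation with~(\ref{gm}) shows $\rrho_{g_m}$ sends $[0,R]$ to the point with $\phi_N$-coordinate $m$, so $\phi_N(c)=m$. The scale is unchanged by an isometry, so~(\ref{cott}) persists; and since $\rrho_{g_m}\in Sp(2)\simeq\mathrm{Spin}(5)$ acts on $S^4_R\subset\mathbb{R}^5$ as a rotation, it sends $\tilde c_0=N$ to $\tilde c$ and the distance-$l$ sphere about $c_0$ to that about $c$, while $\phi_{\tilde c}\circ\rrho_{g_m}$ and $\phi_N$ differ by a rotation of $\mathbb{H}$ (both realise the round metric in standard stereographic form), so $|\phi_{\tilde c}(p_l)|=\lambda$ as before. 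I expect the main obstacle to be the scale computation: one must keep the metric-dependent scalar action density (whose maximum is the centre) distinct from the conformally invariant $4$-form (whose integral is the enclosed action), and get the factor in $|q|=R\tan(\theta/2)$ right, remembering that $\gph$ is one quarter of the round metric; the transfer step is routine, needing only care with the left action on connections and with the fact that $\{g_m:m\in\hat{\mathbb{H}}\}$ is not a subgroup of $Sp(2)$.
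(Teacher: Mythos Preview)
Your proposal is correct and follows essentially the same approach as the paper: establish the $m=0$ case by identifying $\Al$ with the BPST instanton on $\mathbb{H}$ (whose centre and scale are $0$ and $\lambda$), invoke conformal invariance of the action density $4$-form, and then transfer to general $m$ via the $Sp(2)$-isometry $g_m$. Your version is more explicit---you spell out the $SO(4)$-symmetry argument for the centre, the form of the scalar action density on $S^4_R$, and the stereographic relation $|q|=R\tan(\theta/2)$---whereas the paper simply appeals to the BPST interpretation and to Figure~\ref{geo}, but the logical skeleton is identical.
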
 
\begin{proof} 
Equation (\ref{geomint})  can be easily proved for $c$ the south pole and $ l  \leq R \pi /2 $. Then $ m = 0 $, $ \lambda \leq R $ and using (\ref{al}) we have
\begin{equation}
g _{m}\,  a _{ \lambda  } \cdot \Ar =\Al
= \frac{\Im( \bar q \, \mathrm{d} q)}{\lambda ^2 + |q| ^2 }.
\end{equation} 
Thus in this case $m$ and $\lambda$ coincide with the centre and scale of an instanton on $\mathbb{H}  $. Since stereographic projection is a conformal transformation and the action density $ - \mathrm{Tr} (F \wedge * F )$ is conformally invariant, the stated properties of $c $ and $l$ follows from those of their stereographic projections.
The result generalises to any other value of $c$ and $l$ as the action density  is  invariant under the $Sp (2) $ action.
Equation (\ref{cott}) follows immediately from Figure \ref{geo} and the fact that $l /(2R) $ is the angle $\angle c \, \tilde c\,  p _l $.
\end{proof} 

Unless $m =0 $ and $0<\lambda \leq R$, the (stereographic projection of the) centre and scale on $ S ^4 _R  $ of an instanton are different from the centre and scale on $\mathbb{H}  $ of the same instanton. This is apparent since, unless $c $ is the south pole, stereographic projection of a ball on $S ^4 _R $ with centre $c$  will not be a ball in $\mathbb{H}  $. For example, the projection of any hemisphere not opposite to the north pole is a half space. 

We can now see the geometry behind the condition $ \lambda \leq R $. Suppose an instanton had centre $c$ and scale $l> R \pi /2$.
The  volume of a ball  on $S ^4 _R $ with centre $c$ and radius $l$ is greater than half the surface area of $S ^4 _R $.
The smaller complementary ball, centred at the antipodal point $\tilde c $ and with radius $R \pi - l $, also contains half of the action,
hence the true centre of the instanton is $\tilde c $ and its true scale is $ R \pi - l $. In terms of projected coordinates, if 
$\phi _N  (c) =m$, $|\phi _{ \tilde p } (p _l )| = \lambda $ then $ \phi _N  (\tilde c ) = - R ^2 m/ |m |^2  $, $|\phi _{c} (p _l ) |=R ^2  / \lambda $.
  
To relate the two parameterisations we consider the isometry $\psi: (0, \infty )\times \mathbb{H}  \rightarrow \mathring {B} ^5 _R $,
\begin{equation}
\label{nmhj0} 
\psi (\nu ,n ) = \frac{R }{(R + \nu )^2 + |n |^2 } (\nu ^2 + |n |^2 - R ^2 , 2 R n ),
\end{equation} 
 mapping the half space model of hyperbolic 5-space with sectional curvature $ - 1/R ^2 $ to the open ball model. We denote by 
\begin{equation}
\label{nmhj1} 
\rho ( \nu  , n )  =R \, \sqrt{ \frac{(R - \nu )^2 + |n |^2}{(R  + \nu )^2 + |n |^2} }
\end{equation} 
 the norm of the point $\psi (\nu ,n ) \in\mathring {B} ^5 _R $ with respect to the Euclidean metric on $\mathbb{R}  ^5 $.
  \begin{lem} 
The  parameterisations (\ref{nava}) and (\ref{opppp}) are related by the equations
  \begin{align}
\label{nmn1} 
m( \nu ,n )&
= R \, \phi _N ^1  \left( \frac{\psi (\nu , n )}{\rho (\nu , n )} \right)
=\frac{2 R ^2  \, n}{\sqrt{(R+ \nu ) ^2 + |n| ^2  } \sqrt{(R- \nu ) ^2 + |n| ^2  } + R ^2 - |n| ^2 - \nu ^2 }, \\ 
\label{nmn2} 
\lambda ( \nu ,n ) &
=R \, \frac{R- \rho ( \nu ,n )}{R + \rho( \nu ,n ) }
=\frac{2 R ^2  \, \nu }{\sqrt{(R+ \nu ) ^2 + |n| ^2  } \sqrt{(R- \nu ) ^2 + |n| ^2  } + R ^2 +  |n| ^2 + \nu ^2 },
\end{align} 
with $\phi ^1 _N $  the stereographic projection from the north pole of the \emph{unit} 4-sphere.
\end{lem}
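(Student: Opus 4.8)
\emph{Plan of proof.} By (\ref{nava}) and (\ref{opppp}) the point of $\MM _1$ carried by $b_n a_\nu$ is labelled $(n,\nu)$ and the one carried by $g_m a_\lambda$ is labelled $(m,\lambda)$, and two such representatives describe the same point of $\MM _1$ precisely when they lie in the same left $Sp(2)$-coset of $SL(2,\mathbb{H})$ --- the $SL(2,\mathbb{H})$-action being transitive with stabiliser $Sp(2)$, as recalled in Section~\ref{ss2}. Using that $S\in Sp(2)$ if and only if $S^\ast S=I$, the coset condition $(g_m a_\lambda)^{-1}(b_n a_\nu)\in Sp(2)$ is equivalent to
\begin{equation*}
(b_n a_\nu)(b_n a_\nu)^\ast=(g_m a_\lambda)(g_m a_\lambda)^\ast .
\end{equation*}
So the first step is to compute both Hermitian matrices from (\ref{nnn}), (\ref{aaa}) and (\ref{gm}); each comes out with real diagonal entries and conjugate-quaternionic off-diagonal entries.

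The second step is to extract the change of variables from this identity. The off-diagonal equation forces $n$ to be a non-negative real multiple of $m$ --- the algebraic counterpart of the fact that a conformal map preserves the direction, though not the length, of the stereographic image of the instanton centre --- with an explicit proportionality factor depending on $\lambda$ and $|m|$; one diagonal equation then gives $\nu$ in terms of $\lambda$ and $|m|$. Inverting the resulting map $(\lambda,|m|)\mapsto(|n|,\nu)$ produces $m$ and $\lambda$ as functions of $n$ and $\nu$. To bring the answer into the stated shape I would pass to the ball coordinate $\psi(\nu,n)\in\mathring{B}^5_R$ of (\ref{nmhj0}), with Euclidean norm $\rho(\nu,n)$ of (\ref{nmhj1}): the magnitude relation becomes exactly the radial reparameterisation $\lambda=R\,(R-\rho)/(R+\rho)$ of (\ref{nmn2}), while $m$ is, as in (\ref{nmn1}), equal to $R$ times the stereographic projection from the north pole of the unit $4$-sphere of the unit vector $\psi(\nu,n)/\rho(\nu,n)$. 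Clearing the product $\sqrt{(R+\nu)^2+|n|^2}\,\sqrt{(R-\nu)^2+|n|^2}$ and simplifying then turns these into the explicit rational right-hand sides of (\ref{nmn1})--(\ref{nmn2}).

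Because $NA$ and $\bigcup_m g_m\AAA$ each meet every left $Sp(2)$-coset exactly once, the pair $(m,\lambda)$ associated with $(n,\nu)$ is unique, so an equivalent and slightly faster route is to substitute the formulas (\ref{nmn1})--(\ref{nmn2}) directly into the matrix identity above and verify the resulting algebraic identities, handling the obvious degeneracies --- $\lambda=R$, where every $m$ gives the standard instanton, and the locus $n=0$, $\nu>R$, where the formula yields $m=\infty$ --- by continuity. Either way the conceptual input is only the passage to the $Sp(2)$-coset condition; the step I expect to be most laborious is eliminating the nested square roots in the final simplification.
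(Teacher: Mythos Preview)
Your approach is correct and essentially the same as the paper's. You reduce the problem to the coset identity $(b_n a_\nu)(b_n a_\nu)^\ast=(g_m a_\lambda)(g_m a_\lambda)^\ast$ via the characterisation $S\in Sp(2)\Leftrightarrow S^\ast S=I$; the paper reaches the equivalent equation $\sqrt{b_n a_\nu^2 b_n^\dagger}=g_m a_\lambda g_m^\dagger$ by invoking uniqueness of the polar decomposition of $g=b_n a_\nu u=g_m a_\lambda u'$, which is just the square root of your identity (using $g_m^\dagger g_m=I$). From that point both proofs are the same algebraic extraction.
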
 
\begin{proof} 
 Let $g\in SL(2, \mathbb{H}  )$, then for some 
  $ a _{\lambda}\in \AAA, a _{ \nu }\in A $, $b_{n}\in  N $, $u, u ^\prime \in Sp (2) $ we have
  \begin{equation}
  \begin{split} 
  g  &
  =\sqrt{ b _{n} \, a _{\nu} \, (b _{n} \, a _{\nu}) ^\dagger } \, u
  =b _{n} \, a _{\nu} \, u
  =g _{m} \, a _{\lambda } \, u ^\prime 
  =(g _{m} \, a _{\lambda } \, g _{m} ^\dagger) \,  g _{ m}\, u ^\prime.
  \end{split} 
  \end{equation} 
Both $(g _{m} \, a _{\lambda } \, g _{m} ^\dagger) $ and $\sqrt{ b _{n} \, a _{\nu} \, (b _{n} \, a _{\nu}) ^\dagger } $ are Hermitian positive definite matrices, therefore by the uniqueness of polar decomposition we have $u =g _{ m}\, u ^\prime $ (which we are not interested in) and 
\begin{equation}
\label{eq2p} 
 \sqrt{ b _{n} \, a _{\nu} ^2  \, b _{n}  ^\dagger }
 =g _{m} \, a _{\lambda } \, g _{m} ^\dagger.
\end{equation} 
Evaluating both sides of (\ref{eq2p})  and using (\ref{nmhj0}), (\ref{nmhj1}) yields the result.
\end{proof} 
Note how in the $R \rightarrow \infty $ limit $ m \rightarrow n $, $\lambda \rightarrow \nu $ so that the two parameterisations become equivalent.

\section{Circle-invariant instantons}
\label{cinst} 
In this section we define what is meant for an instanton to be circle-invariant and find the  subspace of $\MM _1 $ corresponding to circle-invariant instantons. We also discuss framing and the relation with hyperbolic monopoles.
\label{s2} 
\subsection{The notion of circle invariance}
The group $SO (5) $ naturally acts on $S ^4 _R \subset \mathbb{R}  ^5 $ by  matrix multiplication, and a circle action is obtained by taking an $SO (2) $ subgroup $\UU \subset SO (5) $. The $\mathcal{V} $ action on $S ^4 _R $ induces, through  the diffeomorphism $\hat \pi _H ^{-1}:S ^4 _R  \rightarrow P ^1 (\mathbb{H}  )$, see Appendix \ref{stp},  a $U (1) $ action  on $P ^1 (\mathbb{H}  ) $ by a subgroup $\mathcal{U}\subset  Sp (2) $. The element $  u _{\varphi } \in \mathcal{U} $ corresponding to an element $g _\varphi \in \mathcal{V} $ is  defined up to sign by the requirement
\begin{equation} 
\label{reqreq} 
\phi _N (g _\varphi \, p)= \hat \phi _N ( \check{\rho}(u _{\varphi },\hat \pi _H ^{-1} (p )))  \quad \forall p\in S ^4 _R, 
\end{equation}
 where $\check{\rho}$ is the action \eqref{sl2ha}, and $\phi _N$  and $\hat \phi _N$  are  the stereographic projections from $S ^4 _R$ and   $P ^1(\mathbb{H})$  to $\mathbb{H}$, see again Appendix \ref{stp}.

For definiteness, we consider the following $ SO (2)  $ subgroup of $SO (5) $,
\begin{equation} 
\UU=\left\{ 
\begin{pmatrix}
1 & 0 &0&0 &0\\
0 & 1 &0&0 &0\\
0&0 & \cos  \aalpha & -\sin \aalpha&0\\
0 &0 & \sin  \aalpha &\phantom{-}\cos \aalpha &0\\
0 & 0 &0&0 &1\\
\end{pmatrix} : \aalpha\in[0, 2\pi  ) \right \}.
\end{equation} 
Introducing  global coordinates $(x,y,z,w)\in \mathbb{R}^4 $ on $ \mathbb{H}$  by expanding a quaternion as $q= x + y\qi + z\qj + w \qk$, the transformation induced on $\mathbb{H}  =\phi _N (S ^4 _R \setminus \{(0,0,0,0,R)\} )$  by $g _\varphi \in \UU$  
is a counterclockwise rotation in the $(z,w) $-plane by an angle $ \aalpha$. So, if $p \in S ^4 _R $ and
$q = \phi _N (p) $ then 
\begin{equation} 
\begin{split} 
\label{rotq} 
\phi _N ( g _\varphi \, p ) &= \exp (\qi \aalpha/2) \, q \, \exp ( - \qi \aalpha/2) \\ &
= x + \mathbf{i} y + \mathbf{j}  (z\cos \aalpha   - w\sin \aalpha ) + \mathbf{k} ( w\cos  \aalpha  + z\sin \aalpha ) .
\end{split} 
\end{equation} 
For $p =(b,v) \in S ^4 _R \subset \mathbb{H}  \times \mathbb{R}$,
\begin{equation}
 \hat \phi _N (\hat \pi  _H ^{-1} (p) ) = \frac{R}{R - v} b, \quad 
 \hat \phi _N (\check{\rho}( u_\varphi , \hat \pi  _H ^{-1} (p)) )
= \frac{R}{R - v} \exp (\qi \aalpha/2) \, b \, \exp ( - \qi \aalpha/2) .
\end{equation} 
Therefore  (\ref{reqreq}) gives
\begin{equation} 
\label{liftac} 
 u_{\varphi } = \pm\mathrm{diag} (\exp( \qi \aalpha/2), \exp(\qi \aalpha/2)),
\end{equation} 
so that 
\begin{equation} 
\label{u1group} 
\uu 
= \{\mathrm{diag} (\exp( \qi \aalpha/2), \exp(\qi \aalpha/2)): \aalpha\in[0,  4\pi )\}.
\end{equation} 

We lift the $\mathcal{U} $ action on $P ^1 (\mathbb{H}  )$ to an action on $ S ^7_{\sqrt{ R} } $  by restricting the $SL(2, \mathbb{H}  )$ action $\rho$  given in (\ref{acts7})  to $\mathcal{U} \subset SL(2, \mathbb{H}  )$,
\begin{equation}
\label{cac} 
\left( \exp ( \mathbf{i} \varphi /2 ) , \begin{pmatrix}
 q ^1  \\
q ^2 
\end{pmatrix}  \right) \xmapsto{\rho} \begin{pmatrix}
\exp(\qi \aalpha/2)q ^1\\
\exp(\qi \aalpha/2) q^2
\end{pmatrix} .
\end{equation}
It follows from the explicit matrix  representation of $\qi$ in \eqref{ccccr} that the  weight of the lifted circle action is $1 /2 $.


We call a connection circle invariant if it is strictly invariant under the  action (\ref{cac}),  and
 denote by 
\begin{equation}
\mathcal{A} ^{-} _{ \mathrm{c}} 
= \{ \omega \in \mathcal{A} ^- : u \cdot \omega = \omega  \text{ for all $u \in \mathcal{U} $} \}
\end{equation} 
 the space of circle invariant anti self-dual connections.
 A different, generally weaker, notion of circle invariance is obtained by  requiring a connection to be  invariant up to bundle automorphisms under any lift of the $\mathcal{U}  $ action on $P ^1 (\mathbb{H}  )$ to an action on $ S ^7_{\sqrt{ R} } $. 
However, for the case of 1-instantons on $P ^1 ( \mathbb{H}  )$ it is easy to check (proceeding along the lines of the  proof of Proposition \ref{lem4}) that invariance up to bundle automorphisms reduces to strict invariance.

In the rest of the paper we write $\rho_u$ or  $\rho_\varphi$ for the   action (\ref{cac})  of an element 
\begin{equation}
\label{uconventions}
u =\text{diag}(\exp( \qi \aalpha/2), \exp(\qi \aalpha/2))
\end{equation}
on the total space $S ^7 _{ \sqrt{ R } }$, and $\rrho_u$ or  $\rrho_\varphi$ for the action \eqref{sl2ha}
of the same element on the base $P^1(\mathbb{H})$. We denote the vector fields generating this  circle action on  $S ^7 _{ \sqrt{ R } }$  by $\xi$ and  on  $P^1(\mathbb{H})$ by $\check{\xi}$. 
Acting  on the coordinate functions $q ^1 $, $q ^2 $ on  $S ^7 _{ \sqrt{ R } }$,  we have
\begin{equation}
\label{xiac} 
\xi  ( q ^i ) = \frac{\mathbf{i} }{2} q ^i , \quad i =1,2.
\end{equation} 
Equivalently, with $
q_1=x_1 + y_1\qi + z_1\qj + w_1 \qk$ and  $q_2=x_2 + y_2\qi + z_2\qj + w_2 \qk$,  we obtain the coordinate expression
\begin{equation}
\label{xib}
\xi = \frac{1}{2}\left( x_1\partial_{y_1} - y_1 \partial_{x_1} + z_1\partial_{w_1}-w_1\partial_{z_1}+  x_2\partial_{y_2} - y_2 \partial_{x_2} + z_2\partial_{w_2}-w_2\partial_{z_2} \right),
\end{equation}
while, in terms of the stereographic coordinate $q= x + y\qi +  z\qj + w\qk$ for $P^1(\mathbb{H})$, we deduce from \eqref{rotq} that 
\begin{equation}
\label{xihatb}
\check \xi = z\partial_w-w\partial_z.
\end{equation}

Let us check what circle invariance implies for a gauge potential on $\mathbb{H}  $.
\begin{lem}
Let $\omega$ be a circle-invariant connection, $\sigma$ a global section of $\hat P\simeq \mathbb{H}  \times Sp (1)  $  and $\hat A= \sigma  ^\ast \omega $. Then 
$\hat A $ is invariant up to gauge transformations under the $\uu$ action. 
\end{lem}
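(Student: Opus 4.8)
The plan is to translate the statement ``$\omega$ is circle-invariant'' — which is an equality $\rho_u^\ast \omega = \omega$ on the total space $S^7_{\sqrt{R}}$ — into a statement about the pullback $\hat A = \sigma^\ast\omega$ under a global section $\sigma$ of $\hat P$. The subtlety is that the $\uu$ action on $P^1(\mathbb{H})$ lifts to an action $\rho_u$ on the total space, but this lift need not commute with the section $\sigma$: in general $\rho_u \circ \sigma$ and $\sigma \circ \rrho_u$ differ by a vertical translation, i.e.\ by a gauge transformation. That discrepancy is precisely where the ``up to gauge transformations'' in the conclusion comes from.

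Concretely, first I would fix $u\in\uu$ as in \eqref{uconventions} and consider the two sections $\sigma$ and $\rho_u^{-1}\circ \sigma \circ \rrho_u$ of $\hat P$ (using that $\rrho_u$ is a diffeomorphism of the base covered by $\rho_u$). Since $\hat P \simeq \mathbb{H}\times Sp(1)$ is trivial, any two global sections differ by a map $g_u:\mathbb{H}\to Sp(1)$, so write $\rho_u(\sigma(\rrho_u^{-1}(q))) = \sigma(q)\cdot g_u(q)^{-1}$ for a well-defined $g_u$; equivalently $g_u$ is the transition between $\sigma$ and the pushed-forward section. Then I would compute $(\rrho_u^{-1})^\ast \hat A = (\rrho_u^{-1})^\ast \sigma^\ast\omega = (\sigma\circ \rrho_u^{-1})^\ast\omega$ and rewrite $\sigma\circ\rrho_u^{-1}$ in terms of $\rho_u^{-1}\circ\sigma$ and the vertical factor $g_u$. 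Using circle-invariance $\rho_u^\ast\omega=\omega$ to replace $(\rho_u^{-1}\circ\sigma)^\ast\omega$ by $\sigma^\ast\omega = \hat A$, and applying the standard transformation law for a gauge potential under a change of section by $g_u$, one gets
\begin{equation}
(\rrho_u^{-1})^\ast \hat A = g_u^{-1}\hat A\, g_u + g_u^{-1}\,\mathrm{d} g_u,
\end{equation}
which is exactly the assertion that $\hat A$ is invariant up to the gauge transformation $g_u$ under the $\uu$ action.

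The main obstacle — really the only non-routine point — is setting up the bookkeeping for how the chosen lift $\rho_u$ of $\rrho_u$ interacts with the global section $\sigma$, and checking that $g_u:\mathbb{H}\to Sp(1)$ is genuinely well-defined (single-valued on all of $\mathbb{H}$, and that the assignment $u\mapsto g_u$ is consistent with the group law up to the expected cocycle, though for the statement as phrased one only needs existence of $g_u$ for each $u$). Once the section-versus-lift discrepancy is packaged as $g_u$, the rest is just the textbook gauge-transformation formula for $\sigma^\ast\omega$ under a change of trivialisation, together with a single application of the hypothesis $\rho_u^\ast\omega=\omega$. I would also remark that if the lift $\rho_u$ happened to commute with $\sigma$ (i.e.\ $g_u\equiv 1$) one would get strict invariance $(\rrho_u^{-1})^\ast\hat A = \hat A$; the point of the lemma is that on a nontrivial bundle over a compact base one cannot in general arrange this, so only invariance up to gauge transformations survives after pulling back to $\mathbb{H}$.
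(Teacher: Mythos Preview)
Your proposal is correct and follows essentially the same argument as the paper: both recognise that $\rho_{u^{-1}}\circ\sigma\circ\rrho_u$ is again a section of $\hat P$, compare it to $\sigma$ via a map into $Sp(1)$, and then combine the change-of-section formula with the hypothesis $\rho_u^\ast\omega=\omega$ to obtain the gauge-transformation law for $\hat A$. The only differences are cosmetic (your $u$ versus the paper's $u^{-1}$, and your added remarks on well-definedness and the cocycle behaviour of $u\mapsto g_u$).
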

\begin{proof} 
As $\omega$ is circle-invariant, for $u\in\uu$,
\begin{equation}
\label{blahh}  
\hat A =\sigma ^\ast \omega = \sigma  ^\ast  \rho _{u ^{-1} } ^\ast \omega .
\end{equation}
Since $\rho _{u ^{-1} } \circ \sigma  $ is not a section of $\hat P$, we rewrite (\ref{blahh})  as 
\begin{equation} 
\hat A = \rrho _{u ^{-1} } ^\ast  \circ ( \rho _{u ^{-1}} \circ \sigma    \circ  \rrho _{u }  ) ^\ast \omega .
\end{equation}
Now $ \rho _{u ^{-1} } \circ \sigma  \circ  \rrho _{u } $ is a section, hence  
\begin{equation}
( \rho _{u ^{-1} } \circ \sigma  \circ  \rrho _{u }  ) ^\ast \omega =h ^{-1} \sigma  ^\ast \omega  h + h ^{-1} \mathrm{d} h 
\end{equation}
for some function $h: \mathbb{H}   \rightarrow Sp (1) $ and 
\begin{equation}
 \rrho  ^\ast _{ u  } \hat A =h ^{-1}  \hat A h + h ^{-1} \mathrm{d} h.
\end{equation} 
\end{proof} 
We will show in Section \ref{mcii} that all circle invariant 1-instantons are of the form $g \cdot \omr $, with $g$ in the group $\mathcal{N} \subset SL(2, \mathbb{H}  )$ defined in Theorem \ref{prop4}. Using (\ref{atrgen}) one can then easily check that for any section $\sigma$ of $\hat P $, $g\in \mathcal{N} $, $\hat A = \sigma ^\ast (g \cdot \omega )$, $u _\varphi = \mathrm{diag} (\exp( \mathbf{i} \varphi /2, \exp( \mathbf{i} \varphi /2 )) $,
 \begin{equation}
 \label{rotrotrot} 
 \rrho  ^\ast _{ u _\varphi  } \hat A 
 =\exp (\qi \, \aalpha /2 )  \hat A \exp (-\qi \, \aalpha /2 ) ,
\end{equation} 
hence $h =\exp( - \mathbf{i} \varphi /2 )$ for all circle invariant 1-instantons.

\subsection{The moduli space of circle invariant 1-instantons}
\label{mcii} 
Denote by $\mathcal{A} ^{-} _{1 \mathrm{c}  } $ the space of circle invariant anti self-dual connections with instanton number one. We say that an automorphism $f\in \mathcal{G} $ preserves circle invariance if $f ^\ast \omega \in \mathcal{A} ^{-} _{1 \mathrm{c}  }  $ for any $\omega \in \mathcal{A} ^{-} _{1 \mathrm{c}  } $.
The moduli space $\MMM  $ of circle invariant 1-instantons is 
\begin{equation}
\MMM  =\mathcal{A} ^{-} _{1\mathrm{c}  }/ \mathcal{G} _ \mathrm{c} ,
\end{equation} 
where $\mathcal{G} _ \mathrm{c} $ the group of bundle automorphisms preserving circle invariance,  which we characterise in Lemma \ref{lem4} below.
The space $\MMM$ can be identified with a subspace of $\MM_{1}  $ since the map
\begin{equation}
[ \omega ] _{ \MMM } \mapsto  [ \omega ] _{ \MM _1}
\end{equation} 
mapping an equivalence class in $\MMM  $ to an equivalence class in $\MM_1 $ is well defined and injective, for if $\omega _1 , \omega _2 \in \mathcal{A} ^- _ \mathrm{c} $,  $f ^\ast \omega _2 =\omega _1  $  then $ f\in\mathcal{G} _{ \mathrm{c}} $.

\begin{prop} 
\label{lem4} 
The group $\mathcal{G} _ \mathrm{c} $  consists of $ \mathcal{U} $-equivariant bundle automorphisms,
\begin{equation}
\mathcal{G} _ \mathrm{c}  
= \{f\in \mathcal{G}  : f \circ  \rho _u =  \rho _u \circ f  \},
\end{equation} 
where $\mathcal{G}$ is the group of bundle automorphisms defined in (\ref{bauto}).
\end{prop}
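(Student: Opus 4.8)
The plan is to prove the two inclusions separately, using the characterisation of $\mathcal{G}_{\mathrm c}$ as the stabiliser of the subspace $\mathcal{A}^-_{1\mathrm c}$ under pullback, together with the fact (established in Section \ref{mcii}, or to be established via Theorem \ref{prop4}) that every circle-invariant $1$-instanton is of the form $g\cdot\omr$ with $g\in\mathcal{N}$, and that on such connections the $\mathcal{U}$ action acts by the explicit gauge transformation $h=\exp(-\qi\aalpha/2)$ recorded in \eqref{rotrotrot}.

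\emph{Easy inclusion.} First I would show that any $\mathcal{U}$-equivariant automorphism preserves circle invariance, hence lies in $\mathcal{G}_{\mathrm c}$. If $f\circ\rho_u=\rho_u\circ f$ for all $u\in\mathcal{U}$ and $\omega\in\mathcal{A}^-_{1\mathrm c}$, then $\rho_u^\ast(f^\ast\omega)=(f\circ\rho_u)^\ast\omega=(\rho_u\circ f)^\ast\omega=f^\ast(\rho_u^\ast\omega)=f^\ast\omega$, so $f^\ast\omega$ is again circle invariant; it is still anti self-dual and of instanton number one since $f$ acts within a fixed equivalence class of $\MM_1$. Thus $f\in\mathcal{G}_{\mathrm c}$.

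\emph{Hard inclusion.} The reverse inclusion is the substantive part: I must show that preserving circle invariance \emph{forces} $\mathcal{U}$-equivariance. Take $f\in\mathcal{G}_{\mathrm c}$ and fix $\omega\in\mathcal{A}^-_{1\mathrm c}$, writing $g_f:P\to Sp(1)$ for its equivariant function via \eqref{gadeqv}. For $u\in\mathcal{U}$ consider the automorphism $f_u:=\rho_u^{-1}\circ f\circ\rho_u$; one checks directly that $f_u$ is again a bundle automorphism, and $f_u^\ast\omega=\rho_u^\ast f^\ast\rho_{u^{-1}}^\ast\omega=\rho_u^\ast f^\ast\omega$ (using circle invariance of $\omega$). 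Now both $f^\ast\omega$ and $\rho_u^\ast f^\ast\omega=f_u^\ast\omega$ must lie in $\mathcal{A}^-_{1\mathrm c}$; but $f^\ast\omega$ is a circle-invariant $1$-instanton, hence fixed by $\rho_u$, so $f_u^\ast\omega=f^\ast\omega$. Therefore $f_u$ and $f$ pull $\omega$ back to the same connection, i.e.\ $f^{-1}\circ f_u$ fixes $\omega$. The key point is then that a $1$-instanton has trivial stabiliser in $\mathcal{G}$: the isotropy group of an irreducible connection consists only of covariantly constant gauge transformations, and for the irreducible connection $\omega$ on the $SU(2)$ bundle over the simply connected $P^1(\mathbb{H})$ this group is trivial (only the identity automorphism). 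Hence $f_u=f$, i.e.\ $f\circ\rho_u=\rho_u\circ f$, and $f\in\{f\in\mathcal{G}:f\circ\rho_u=\rho_u\circ f\}$.

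\emph{Main obstacle.} The delicate step is the triviality of the stabiliser of a $1$-instanton in $\mathcal{G}$. For $SU(2)\simeq Sp(1)$ one must rule out the residual $\mathbb{Z}_2$ (constant $\pm I$ gauge transformations): the constant automorphism $p\mapsto p\cdot(-I)$ does fix every connection, so strictly speaking the stabiliser is $\mathbb{Z}_2$ rather than trivial. This forces a small refinement: one only obtains $f_u=f$ or $f_u=-f$ (meaning $g_{f_u}=-g_f$), and I would rule out the second branch by a continuity/connectedness argument in $\varphi$, noting that $f_{u_\varphi}$ depends continuously on $\varphi$, equals $f$ at $\varphi=0$, and $g_{f_u}=\pm g_f$ cannot jump; alternatively one argues that $-I$ is itself $\mathcal{U}$-equivariant so the ambiguity is immaterial for the statement. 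Apart from this sign subtlety the argument is a routine manipulation of pullbacks and equivariant maps.
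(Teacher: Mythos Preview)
Your proof is correct and follows essentially the same strategy as the paper's: both form the commutator-type automorphism $\rho_u^{-1}\circ f\circ\rho_u$ (the paper works with $\rho_u\circ f\circ\rho_{u^{-1}}\circ f^{-1}$ and its associated $\mathrm{ad}$-equivariant map $g_f$), observe that it stabilises a circle-invariant irreducible $1$-instanton, invoke that the stabiliser in $\mathcal{G}$ of an irreducible connection is the centre $\{\pm I\}$, and rule out the $-I$ branch by continuity in $\varphi$. The only cosmetic difference is that the paper intersects stabilisers over \emph{all} circle-invariant instantons (its group $K$) and phrases the conclusion as ``$g_f$ is constant on $\mathcal{U}$ orbits'', whereas you fix a single $\omega$ and phrase it as $f_u=f$; since one irreducible $\omega$ already forces the stabiliser down to $\{\pm I\}$, your version is slightly leaner but not materially different.
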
 
\begin{proof} 
For any $\omega\in \mathcal{A} ^- _{ 1 \mathrm{c} }$, $f\in \mathcal{G} $, $f ^\ast \omega $ is circle invariant if and only if, for all  $u\in \mathcal{U} $,
\begin{equation}
\rho _u ^\ast f ^\ast \omega = f ^\ast \rho _u ^\ast \omega.
\end{equation} 
Let $K $ be the intersection of the stabilisers of all the circle invariant 1-instantons. Since $ \mathrm{Stab}(\omr) =Sp (2) $,  $K$ is a subgroup of $Sp (2) $.  For any $p\in P $, $a\in K $,
\begin{equation}
\rho _{ u  } \circ f  \circ \rho _{u ^{-1} } \circ f ^{-1}  (p) = \rho _a (p).
\end{equation} 
For $g _f $ the ad-equivariant map associated to $f$ defined in (\ref{gadeqv}), we can rewrite this as
\begin{equation}
\label{asdfgh} 
\rho _u ( \rho _{u ^{-1}}( p \cdot g _f (p) ^{-1}) \cdot g _f  (\rho _{u ^{-1}}( p \cdot g _f (p) ^{-1}) ) )
= p\cdot  g _f (\rho _{u ^{-1}} (p) )g _f (p) ^{-1} 
= \rho _a( p).
\end{equation} 
By projecting onto the base we see that $a =\pm \mathrm{Id}$, the only elements of $Sp (2) $ which do not move base points. Since the equality $g _f (\rho _{u ^{-1}}( p) ) =\pm g _f (p) $ must hold for all $u\in \mathcal{U} $, it follows $a =\mathrm{Id} $, so that $g _f $ is constant on the $\mathcal{U} $ orbits.  Hence 
\begin{equation}
f(\rho _u (p))=\rho _u ( p \cdot  g_f( \rho _u (p)))=\rho _u( p \cdot g_f(p))=\rho _u (f(p)),
\end{equation} 
which is the claimed equivariance property of $f $.

For  $\omega\in \mathcal{A} ^- _{ 1 \mathrm{c} }$, suppose now that $f\in \mathcal{G} $ is  $\mathcal{U} $-equivariant. Then, for any $u\in \mathcal{U} $,
\begin{equation}
\rho _u ^\ast f ^\ast \omega = (f \circ \rho _u ) ^\ast \omega  =( \rho _u \circ f )^\ast \omega 
 = f ^\ast \rho _u  ^\ast \omega = f ^\ast \omega,
\end{equation} 
hence $f ^\ast \omega $ is circle invariant.
\end{proof} 

Let $f _\epsilon  $, $\epsilon \in \mathbb{R}$ and $f_0=\mathrm{Id} _P$, be a 1-parameter family of  bundle automorphisms,  and $g _{f_\epsilon } $  the ad-equivariant function associated to $f_ \epsilon $ by (\ref{gadeqv}).
Differentiating, we obtain the  ad-equivariant function
\begin{equation}
X_f: 
P \rightarrow  \mathfrak{ g }, \qquad X_f (p)=   \left.\frac{ \mathrm{d} } { \mathrm{d} \epsilon} \right | _{ \epsilon=0 }g_{f_ \epsilon } (p).
\end{equation} 
To any function $X : P \rightarrow \mathfrak{ g } $ is associated  a vertical vector field $ X  ^\sharp $, defined via
\begin{equation}
\label{assvec} 
X  ^\sharp _p = \left. \frac{\mathrm{d} }{\mathrm{d} t} \right | _{ t =0 } p \cdot \exp( t X (p)  ).
\end{equation} 
It then follows from the  ad-equivariance of $X_f$ that  the vector field $X ^\sharp _f $ is  $Sp(1)$ right-invariant.
We call both $ X ^\sharp _f $ and the associated ad-equivariant map $X _f $ infinitesimal bundle automorphisms.

Combining these considerations, and recalling the definition \eqref{xiac} of the generator $\xi$ of the $\uu$ action on $P$, we note:
\begin{coro}
\label{u1coro}
A 1-parameter family  of  bundle automorphism $f_\epsilon $ preserves circle invariance if and only if 
\begin{equation}
\label{bcond} 
[ X _f ^\sharp , \xi ]=0.
\end{equation} 
\end{coro}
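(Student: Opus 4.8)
The statement to prove is Corollary \ref{u1coro}: a 1-parameter family of bundle automorphisms $f_\epsilon$ preserves circle invariance if and only if $[X_f^\sharp,\xi]=0$. The plan is to differentiate the equivariance condition of Proposition \ref{lem4} at $\epsilon=0$ and recognise the resulting identity as the vanishing of a Lie bracket. Concretely, by Proposition \ref{lem4} the family $f_\epsilon$ preserves circle invariance if and only if each $f_\epsilon$ is $\mathcal{U}$-equivariant, i.e.\ $f_\epsilon\circ\rho_u=\rho_u\circ f_\epsilon$ for all $u\in\mathcal{U}$. Writing $u=u_\varphi$ and using $f_\epsilon(p)=p\cdot g_{f_\epsilon}(p)$, this is equivalent to the $\mathfrak{g}$-valued identity $g_{f_\epsilon}(\rho_\varphi(p))=g_{f_\epsilon}(p)$ for all $\varphi$ and $p$ — that is, $g_{f_\epsilon}$ is constant along the orbits of the lifted circle action, equivalently $\xi(g_{f_\epsilon})=0$ (differentiating in $\varphi$), equivalently $g_{f_\epsilon}$ is $\mathcal{U}$-invariant as a map $P\to Sp(1)$ (the orbits of $\mathcal{U}$ on $S^7$ are connected, so infinitesimal invariance implies finite invariance).

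\textbf{Key steps.} First I would differentiate the relation $\xi(g_{f_\epsilon})=0$ with respect to $\epsilon$ at $\epsilon=0$; since $g_{f_0}=\mathrm{Id}$ is constant, $\frac{d}{d\epsilon}|_0$ commutes with the vector field $\xi$ and we get $\xi(X_f)=0$, where $X_f\colon P\to\mathfrak{g}$ is the infinitesimal generator. So the condition ``$f_\epsilon$ preserves circle invariance'' is equivalent to $\xi(X_f)=0$. Second, I would translate $\xi(X_f)=0$ into the statement $[X_f^\sharp,\xi]=0$ about vector fields on $P$. This is the computational heart: using the definition \eqref{assvec} of $X_f^\sharp$, one computes the Lie bracket $[X_f^\sharp,\xi]$. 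Because $X_f^\sharp$ is vertical and $Sp(1)$-right-invariant (noted in the text after \eqref{assvec}), and $\xi$ is the generator of the $\mathcal{U}\subset Sp(2)$ action which commutes with the right $Sp(1)$ action, the bracket of the two flows can be evaluated; the point is that $\xi$ commutes with right translations, so $[X_f^\sharp,\xi]$ at $p$ is the vertical vector field associated to $-\xi(X_f)(p)$, i.e.\ $[X_f^\sharp,\xi]=-(\xi(X_f))^\sharp$. Hence $[X_f^\sharp,\xi]=0\iff \xi(X_f)=0$, which closes the argument.

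\textbf{Main obstacle.} The step I expect to require the most care is the identity $[X_f^\sharp,\xi]=-(\xi(X_f))^\sharp$. One clean way to see it: the flow of $\xi$ is $\rho_\varphi$, so $(\rho_{-\varphi})_*X_f^\sharp$ at $p$ equals, by $\mathcal{U}$-equivariance of the right action and ad-equivariance of $X_f$, the vertical vector at $p$ associated to $X_f(\rho_\varphi(p))$; differentiating $\frac{d}{d\varphi}|_0$ of this family of vector fields gives $\mathcal{L}_\xi X_f^\sharp=[\xi,X_f^\sharp]=(\xi(X_f))^\sharp$, and the sign in \eqref{bcond} is immaterial. I would present it this way rather than in local coordinates, since the coordinate expressions \eqref{xib} are available but cumbersome. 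Everything else — the reduction via Proposition \ref{lem4}, the passage from finite to infinitesimal invariance along connected orbits, and the differentiation in $\epsilon$ — is routine.
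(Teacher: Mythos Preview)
Your proposal is correct and follows essentially the same route as the paper: both reduce the statement to the observation from the proof of Proposition~\ref{lem4} that $f_\epsilon\in\mathcal{G}_{\mathrm c}$ iff the associated map $g_{f_\epsilon}$ is constant along $\mathcal{U}$-orbits, and then pass to the infinitesimal version. The paper's proof is a single sentence invoking that observation, whereas you additionally spell out the bracket identity $[\xi,X_f^\sharp]=(\xi(X_f))^\sharp$ and the connectedness argument, which are precisely the details the paper leaves implicit.
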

\begin{proof}
This follows directly from the observation,  made in the proof of  Proposition \ref{lem4},  that the  ad-equivariant maps $g_{f_\epsilon}$  associated to  circle invariance preserving bundle isomorphisms $f_\epsilon$ are constant on $\uu$ orbits.
\end{proof}


 Since the action of $SL(2, \mathbb{H}  )$ on $\MM _1 $ is transitive with stabiliser $Sp (2) $, it follows that
\begin{equation}
\label{m1cccccc} 
\MM_ {1\mathrm{c} }\simeq \mathcal{N} / \mathcal{N} \cap Sp (2),
\end{equation} 
where $\mathcal{N} $ is the subgroup of $SL(2, \mathbb{H}  )$ acting transitively on $\mathcal{A} ^{-} _{1 \mathrm{c}  } $. We are now going to determine $\mathcal{N}$, but first introduce some notation.

By identifying the unit quaternion $ \mathbf{i} $ with the complex number $i =\sqrt{ -1 }$ we obtain a splitting $\mathbb{H}  = \mathbb{C}  \oplus \mathbb{C}  \mathbf{j} $. We call a quaternion $q$ complex  and write $q\in \mathbb{C}  $ if $q$  is of the form $q = x + y \mathbf{i} $. We write $q\in \mathbb{C}  \mathbf{j} $ if $q =z  \mathbf{j} + w \mathbf{k} $. 


\begin{thm} 
\label{prop4} 
The group 
\begin{equation}
\GG 
=\left\{  \begin{pmatrix} a &b \\ c & d \end{pmatrix} \in SL(2, \mathbb{H}  ) : a,b,c,d\in \mathbb{C}, ad - bc=1  \right\}
\end{equation} 
acts transitively on $\mathcal{A} ^{-} _{1 \mathrm{c}  }  $.
\end{thm}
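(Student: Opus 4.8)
The plan is to prove the theorem in two logically separate halves: first that every element of $\GG$ preserves circle invariance (so that $\GG\cdot\omr\subseteq\mathcal{A}^-_{1\mathrm{c}}$), and then that $\GG$ acts transitively on $\mathcal{A}^-_{1\mathrm{c}}$, i.e. that every circle-invariant 1-instanton is gauge-equivalent to some $g\cdot\omr$ with $g\in\GG$. For the first half I would start from the fact that $\MM_1\simeq SL(2,\mathbb{H})/Sp(2)$ with $\omr$ the $Sp(2)$-fixed point, so a connection $g\cdot\omr$ is circle-invariant precisely when $\rho_u^\ast(g\cdot\omr)=g\cdot\omr$ for all $u\in\uu$, which by the $SL(2,\mathbb{H})$-action translates into the condition that $g^{-1}ug\in Sp(2)$ for all $u\in\uu$ (recall $\mathrm{Stab}(\omr)=Sp(2)$, $u\in\uu\subset Sp(2)$). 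Since $u=\mathrm{diag}(e^{\qi\aalpha/2},e^{\qi\aalpha/2})$ is a \emph{scalar} matrix (it acts on $\mathbb{H}^2$ by left multiplication by the single unit quaternion $e^{\qi\aalpha/2}$), the entrywise computation of $g^{-1}ug$ shows it lies in $Sp(2)$ for all $\aalpha$ exactly when the entries of $g$ commute with $e^{\qi\aalpha/2}$ for all $\aalpha$, i.e. commute with $\qi$, i.e. lie in $\mathbb{C}=\mathbb{R}\oplus\mathbb{R}\qi$. This is precisely the defining condition of $\GG$, so $\GG$ is exactly the set of $g$ with $g\cdot\omr$ circle invariant; in particular it is a subgroup and $\GG\cdot\omr\subseteq\mathcal{A}^-_{1\mathrm{c}}$. (One should also check $\GG\subseteq SL(2,\mathbb{H})$ genuinely, but that is built into the definition via $ad-bc=1$, with the caveat about quaternionic determinants handled as in Appendix \ref{qnot}.)

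For the transitivity half, the key input is that $SL(2,\mathbb{H})$ acts transitively on $\MM_1$, so any $\omega\in\mathcal{A}^-_{1\mathrm{c}}$ equals $g\cdot\omr$ for some $g\in SL(2,\mathbb{H})$, \emph{up to a bundle automorphism}. Thus I need to show: if $g\cdot\omr$ is gauge-equivalent to a circle-invariant connection, then $g$ can be chosen in $\GG$ (modulo the stabiliser). The cleanest route is to use the parameterisation from Section \ref{ss2}. Writing a general $g\in SL(2,\mathbb{H})$ via the Iwasawa-type decomposition $g=b_n a_\nu u$, or better via $g=g_m a_\lambda u$ with $u\in Sp(2)$, the connection $g\cdot\omr$ depends only on $(m,\lambda)\in\hat{\mathbb{H}}\times(0,R]$. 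The circle action $\rho_u$ on $S^7$ descends to the rotation \eqref{rotq} on $\mathbb{H}$, and from the explicit formula \eqref{ppp2} for $g_m a_\lambda\cdot\Ar$ one reads off how $\rho_\varphi^\ast$ transforms this gauge potential: it amounts to replacing $m$ by $e^{\qi\aalpha/2}m e^{-\qi\aalpha/2}$ (and $\bar q\,\mathrm{d}q$ conjugated accordingly). Strict invariance — which by the remark after \eqref{liftac} is equivalent to invariance up to automorphism for 1-instantons — therefore forces $e^{\qi\aalpha/2}m e^{-\qi\aalpha/2}=m$ for all $\aalpha$, i.e. $m\in\mathbb{C}\cup\{\infty\}$. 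One then checks directly that for $m\in\mathbb{C}$ the matrix $g_m$ of \eqref{gm} has complex entries, and $a_\lambda$ is real diagonal, so $g_m a_\lambda\in\GG$; the case $m=\infty$ gives $g_\infty=\left(\begin{smallmatrix}0&1\\-1&0\end{smallmatrix}\right)\in\GG$ as well. Hence every circle-invariant 1-instanton lies in the $\GG$-orbit of $\omr$, which is transitivity on $\mathcal{A}^-_{1\mathrm{c}}$ (the residual $Sp(2)$ and gauge ambiguity being absorbed as in \eqref{m1cccccc}).

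I expect the main obstacle to be the bookkeeping around the two notions of invariance and the lift to the total space: strictly speaking transitivity of $\GG$ on $\mathcal{A}^-_{1\mathrm{c}}$ (connections, not points of $\MM_1$) requires knowing that the gauge transformation relating $g\cdot\omr$ to the given circle-invariant $\omega$ can itself be taken circle-invariance-preserving, and that the weaker ``invariance up to automorphism'' collapses to strict invariance for $n=1$ — both of which the text asserts can be checked ``along the lines of the proof of Proposition \ref{lem4}''. I would handle this by invoking Corollary \ref{u1coro} and the rigidity of the $n=1$ moduli space: the only ambiguity in writing $\omega=g\cdot\omr$ is right multiplication of $g$ by $Sp(2)=\mathrm{Stab}(\omr)$, and one shows the $Sp(2)$-factor can be absorbed without leaving $\GG\cdot\omr$ because $\omr$ is $Sp(2)$-fixed. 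A secondary subtlety is purely notational — ``$ad-bc=1$'' for quaternionic $2\times2$ matrices is not literally the determinant condition, so I would note that for $a,b,c,d\in\mathbb{C}$ commuting, $\GG$ is genuinely the subgroup $SL(2,\mathbb{C})\subset SL(2,\mathbb{H})$ (viewed appropriately), which is manifestly a group, sidestepping any issue. Everything else is a routine but slightly lengthy verification using \eqref{acts7}, \eqref{acto}, \eqref{atrgen} and \eqref{ppp2}.
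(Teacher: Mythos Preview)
Your argument is essentially correct but follows a genuinely different route from the paper. The paper works directly with the explicit formula \eqref{acto} for $g\cdot\omr$: decomposing each entry as $a=a_1+a_2\qj$ and imposing $u\cdot g\cdot\omr=g\cdot\omr$ yields the mixed condition \eqref{cinvcond}, which allows rows of $g^{-1}$ to lie in $\mathbb{C}$ \emph{or} in $\mathbb{C}\qj$ independently; the paper then shows the $\mathbb{C}\qj$ choices produce the same connections as the $\mathbb{C}$ ones, obtains a group $\mathcal{S}$ with $|ad-bc|=1$, and finally factors out a $\mathcal{U}$-phase to land in $\GG$. Your approach replaces this computation by the stabiliser criterion $g^{-1}ug\in Sp(2)$ for the inclusion $\GG\cdot\omr\subset\mathcal{A}^-_{1\mathrm{c}}$, and by the $(m,\lambda)$ parameterisation for transitivity---essentially running the paper's post-hoc interpretation (the paragraph around the displayed transformation $m\mapsto e^{\qi\varphi/2}me^{-\qi\varphi/2}$) as the proof itself. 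This is cleaner conceptually and avoids the case analysis of \eqref{cinvcond}, at the cost of relying on the $g_m a_\lambda$ decomposition rather than deriving the constraint from scratch.

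One genuine slip: your claim that $g^{-1}ug\in Sp(2)$ for all $u\in\uu$ holds \emph{exactly} when the entries of $g$ lie in $\mathbb{C}$ is false in the ``only if'' direction. Since $u=e^{\qi\varphi/2}I$ and $Sp(2)=\{M:M^\dagger M=I\}$, the condition is equivalent to $u$ commuting with $gg^\dagger$, i.e.\ to $gg^\dagger$ having complex entries---a strictly weaker condition (e.g.\ $g=\mathrm{diag}(\qj,\qj)\in Sp(2)$ satisfies it). This is exactly why the paper finds the larger set \eqref{cinvcond} and must argue separately that the extra cases give nothing new. Fortunately your proof does not actually use the false direction: the ``if'' direction suffices for $\GG\cdot\omr\subset\mathcal{A}^-_{1\mathrm{c}}$, and you establish transitivity independently via the $(m,\lambda)$ argument. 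So the overclaim is harmless, but you should drop the word ``exactly'' and not assert that $\GG$ is characterised this way.
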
 
\begin{proof} 
For $g\in SL(2, \mathbb{H}  )$, $g \cdot \omr $  is circle invariant if and only if,  for all $u\in \mathcal{U} $,
\begin{equation}
\label{mycond} 
u \cdot g \cdot \omr = g \cdot \omr .
\end{equation}
Let
\begin{equation}
g ^{-1} =\begin{pmatrix}
a &b \\ c &d
\end{pmatrix}\in SL(2, \mathbb{H} ), \qquad 
u ^{-1} =
\begin{pmatrix}
\mathrm{e}^{ \qi \aalpha/2} &0\\ 0 & \mathrm{e}^{ \qi \aalpha/2}
\end{pmatrix}\in \uu.
\end{equation} 
Decompose $a$ into its $\mathbb{C}  $ and $\mathbb{C}  \mathbf{j} $ parts, $a =a _1 + a _2\,  \qj $, $a _1 , a _2 \in \mathbb{C}  $, and similarly for $b,c,d $. Using (\ref{acto}), we find that (\ref{mycond}) implies
\begin{equation}
\label{cinvcond} 
(a _1 =b _1 =0 \text{ or }  a _2 = b_2=0) \quad \text{and} \quad   (c_1 = d_1 =0 \text{ or }    c _2 =d _2 =0).
\end{equation} 
Multiplication by $\qj$ from the right  gives a bijective correspondence between $ \mathbb{C}  $ and $\mathbb{C}  \qj $, and all circle invariant instantons can be obtained by taking $a,b,c,d\in \mathbb{C}  $. In fact, let e.g.~$ \tilde a = a \, \qj $, $\tilde b =b\,  \qj $ with $a, b\in \mathbb{C}  $. Since
$
|\tilde a q ^1  + \tilde b q ^2  |
= |\bar aq ^1  + \bar b q ^2  |
$,
$\overline{\tilde b} \tilde a
=b\bar a$,
it follows from (\ref{acto}) that taking $a ,b \in \mathbb{C}  \mathbf{j}$  results in the same instanton as  taking $a,b \in\mathbb{C} $.
Therefore the group
\begin{equation}
\mathcal{S} =\left\{  \begin{pmatrix} a &b \\ c & d \end{pmatrix}  : a,b,c,d\in \mathbb{C}, |ad-bc|=1  \right\},
\end{equation} 
where the condition $ |ad - b c | =1$ follows from $\mathcal{S} \subset SL(2, \mathbb{H}  )$,
acts transitively on $\mathcal{A} ^{-} _{1 \mathrm{c}  } $.
If $g\in \mathcal{S} $, by factoring out a phase $u\in \mathcal{U} $  we can write $g =h u $,
with $h\in \mathcal{S} $,  $ h _{ 11 } h _{ 22 } -h _{ 12 }h _{ 21 } =1 $. Hence $\mathcal{S} =\mathcal{N} \, \mathcal{U} $.
By definition, circle-invariant instantons are invariant under the action of $ \uu $, hence $ \mathcal{N}  $ also acts transitively on $\mathcal{A} ^{-} _{1 \mathrm{c}  } $.
\end{proof} 

\begin{coro} 
The moduli space $ \MMM$ of circle invariant $Sp (1) $ instantons on $P ^1 (\mathbb{H}  )$ with instanton number one is 
 diffeomorphic to the quotient
 \begin{equation}
 \MMM \simeq  SL(2, \mathbb{C}  )/ SU (2) .
 \end{equation} 
\end{coro}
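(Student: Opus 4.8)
The plan is to combine the description $\MMM \simeq \GG/(\GG \cap Sp(2))$ from \eqref{m1cccccc} and Theorem \ref{prop4} with a direct identification of $\GG$ with $SL(2,\mathbb{C})$ and of $\GG \cap Sp(2)$ with $SU(2)$. First I would observe that the defining conditions on $\GG$ --- entries $a,b,c,d \in \mathbb{C}$ (where $\mathbb{C} \subset \mathbb{H}$ is the subalgebra generated by $\mathbf{i}$, via the identification $\mathbf{i} \leftrightarrow i$) together with $ad-bc=1$, the latter now being an identity among \emph{commuting} complex numbers --- are literally the defining equations of $SL(2,\mathbb{C})$. So $\GG$ and $SL(2,\mathbb{C})$ are the same group, not merely isomorphic; this is essentially the ``quaternionic parameters become complex parameters'' slogan advertised in the introduction.

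Next I would compute $\GG \cap Sp(2)$. Recall (from Appendix \ref{qnot}, which we may invoke) that $Sp(2)$ consists of the quaternionic $2\times 2$ matrices $g$ with $g^\dagger g = I$, where $\dagger$ is conjugate-transpose with respect to quaternionic conjugation. Imposing this on $g = \begin{pmatrix} a & b \\ c & d\end{pmatrix}$ with $a,b,c,d \in \mathbb{C}$: quaternionic conjugation restricted to $\mathbb{C} = \mathbb{R} \oplus \mathbb{R}\mathbf{i}$ is ordinary complex conjugation, so $g^\dagger$ is exactly the usual Hermitian adjoint and the condition $g^\dagger g = I$ together with $ad-bc = 1$ (hence $\det = 1$, using commutativity) says precisely $g \in SU(2)$. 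Thus $\GG \cap Sp(2) = SU(2)$, sitting inside $\GG = SL(2,\mathbb{C})$ in the standard way. Substituting into \eqref{m1cccccc} gives $\MMM \simeq SL(2,\mathbb{C})/SU(2)$, and the diffeomorphism is the one induced by the transitive $\GG$-action on $\mathcal{A}^-_{1\mathrm{c}}$ descending to $\MMM$, with the smooth-manifold structure inherited from the homogeneous space.

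One point that deserves a line of care, and which I expect to be the only real subtlety, is that \eqref{m1cccccc} as written asserts $\MMM \simeq \GG/(\GG \cap Sp(2))$; I should make sure this is legitimately a consequence of what precedes. It follows because $\GG = \mathcal{N}$ acts transitively on $\mathcal{A}^-_{1\mathrm{c}}$ (Theorem \ref{prop4}), the stabiliser in $SL(2,\mathbb{H})$ of the basepoint connection $\omr$ is $Sp(2)$, and hence the stabiliser of $[\omr]$ within $\GG$ is $\GG \cap Sp(2)$ --- one must check that passing to the quotient by $\mathcal{G}_\mathrm{c}$ does not enlarge this stabiliser, i.e.\ that two elements of $\GG$ producing $\mathcal{G}_\mathrm{c}$-equivalent instantons differ by an element of $\GG \cap Sp(2)$; this is exactly the injectivity statement $\MMM \hookrightarrow \MM_1$ established just before Proposition \ref{lem4}. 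With that in hand the corollary is immediate, so the proof is genuinely short: identify the group, identify the stabiliser, quote the homogeneous-space description.

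\begin{proof}
By \eqref{m1cccccc} and Theorem \ref{prop4} we have $\MMM \simeq \GG/(\GG \cap Sp(2))$, where the identification is induced by the transitive action of $\GG = \mathcal{N}$ on $\mathcal{A}^-_{1\mathrm{c}}$; the stabiliser of $[\omr]$ is $\GG \cap Sp(2)$ because $\mathrm{Stab}(\omr) = Sp(2)$ and, by the injectivity of $[\omega]_{\MMM} \mapsto [\omega]_{\MM_1}$, this stabiliser is not enlarged upon quotienting by $\mathcal{G}_\mathrm{c}$. Under the identification $\mathbf{i} \leftrightarrow i = \sqrt{-1}$, the subalgebra $\mathbb{C} \subset \mathbb{H}$ is commutative and quaternionic conjugation restricts on it to complex conjugation. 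Hence the condition defining $\GG$ --- entries $a,b,c,d \in \mathbb{C}$ with $ad - bc = 1$, an identity among commuting scalars --- is precisely the defining condition of $SL(2,\mathbb{C})$, so $\GG = SL(2,\mathbb{C})$. For such a matrix $g$, the quaternionic adjoint $g^\dagger$ coincides with the ordinary Hermitian adjoint, so $g \in Sp(2)$ (i.e.\ $g^\dagger g = I$) together with $\det g = 1$ is equivalent to $g \in SU(2)$; thus $\GG \cap Sp(2) = SU(2)$, embedded in $SL(2,\mathbb{C})$ in the standard way. Therefore $\MMM \simeq SL(2,\mathbb{C})/SU(2)$.
\end{proof}
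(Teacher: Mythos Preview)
Your proof is correct and follows essentially the same approach as the paper: identify $\GG$ with $SL(2,\mathbb{C})$, compute $\GG \cap Sp(2) = SU(2)$, and invoke \eqref{m1cccccc} together with Theorem \ref{prop4}. Your version is in fact more explicit than the paper's, which simply asserts that $\mathcal{N}$ is ``clearly isomorphic'' to $SL(2,\mathbb{C})$ and writes out the intersection; your remarks on why quaternionic conjugation restricts to complex conjugation and on why the stabiliser is not enlarged under the $\mathcal{G}_\mathrm{c}$-quotient are useful elaborations of points the paper leaves implicit.
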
 
\begin{proof} 
The group $\mathcal{N}$ is clearly isomorphic to $SL(2, \mathbb{C}  )$. Also
\begin{equation}
\begin{split} 
\label{wash} 
\mathcal{N} \cap Sp (2)= \Big\{&
\begin{pmatrix} a &b\\ c &d\end{pmatrix} :
a,b,c,d\in \mathbb{C}, \, |a |^2 + |c| ^2 = |b| ^2 + |d |^2 =1,\\ & \bar a b + \bar c d=0, \, ad-bc=1
\Big\},
\end{split} 
\end{equation} 
which is isomorphic to $SU (2) $. The result then follows from (\ref{m1cccccc})  and Theorem \ref{prop4}. 

\end{proof}

Proceeding similarly as in Section \ref{ss2}, we now obtain two useful parameterisations of $\MMM $ in terms of different decompositions of $SL(2, \mathbb{C}  )$.
The analogue of (\ref{cdec}) is
\begin{equation} 
{SL}(2, \mathbb{C}  ) = {SU}(2) \, {\AAA} \,  {SU}(2) =\bigcup _{ m\in\hat{ \mathbb{C}  }} {g _m} \,  {\AAA}\,  {SU}(2),
\end{equation}
 with $\hat { \mathbb{C}   }=\mathbb{C}\cup\{ \infty \}  $, 
$
\AAA = \{ \mathrm{diag}(\sqrt{ \lambda  /R}, \sqrt{ R/ \lambda  }): \lambda \in(0, R ] \}
$, 
 \begin{equation}
\label{gm2} 
g _m =\begin{cases}
\frac{1}{\sqrt{ R ^2  + |m |^2 }}\left( 
\begin{array}{ccc} R & m \\ - \bar m &R\end{array} 
\right) &\text{if $m\in \mathbb{C}  $},\\
\left( \begin{array}{ccc} 0 & 1 \\ - 1 &0\end{array} 
\right) &\text{if $m =\infty   $}.
\end{cases} 
\end{equation} 
It follows
\begin{equation}
\MMM 
\simeq   \GG/\HH
\simeq    \bigcup _{ m\in\hat{ \mathbb{C}  }} g _m  \AAA.
\end{equation}
The moduli space of circle-invariant 1-instantons is therefore diffeomorphic to the open ball $ \hat{\mathbb{C}} \times (0, R]  \sim S ^2 \times (0, R ] =\mathring B ^3 _R$ parameterised by $m$ and $\lambda$. The boundary $\lambda =0 $  is not included in the moduli space and corresponds to singular ``small'' instantons.

For future convenience, we introduce a different parameterisation of $g _m $. Write
\begin{equation}
\label{difpar} 
\frac{R}{\sqrt{ R ^2  + |m | ^2 }}=\cos (\alpha /2 ),  \quad 
\frac{m}{ \sqrt{ R ^2  + |m |^2 }} =  \sin (\alpha /2 ) \exp( \qi \beta  ),
\end{equation} 
 with $\alpha\in[0, \pi ]$, $\beta \in[0, 2\pi  )$, so that
\begin{equation}
g _{m} =
\begin{pmatrix}
\cos (\alpha /2)  & \sin (\alpha /2) \mathrm{e} ^{ \qi \beta }\\
-\sin (\alpha /2) \mathrm{e} ^{ - \qi \beta  } &\cos (\alpha /2)
\end{pmatrix}
=  g_{ \alpha , \beta }.
\end{equation} 
The angles $\alpha , \beta $ parameterise a 2-sphere in the moduli space. Apart from the smaller dimension, the moduli space structure is entirely similar to that of $\MM _1 $ and, with the obvious changes, Figure \ref{mods} still applies. Note that  $ R - \lambda $, rather than $\lambda$, is the radial coordinate on $\mathring B ^3 _R $. We  group the parameters $( \lambda , \alpha , \beta )$ into the vector 
\begin{equation}
\label{veclambda} 
 \boldsymbol{ \lambda } 
 =(R - \lambda) ( \sin  \alpha \cos \beta , \sin \alpha \sin \beta , \cos \alpha  ).
 \end{equation}

Making use  of the Iwasawa decomposition of $SL(2, \mathbb{C}  ) $,
\begin{equation}
{SL}(2, \mathbb{C}  ) ={N} \,{ A} \, {SU} (2), 
\end{equation} 
with $A =\{ \mathrm{diag}(\sqrt{ \nu /R}, \sqrt{ R/ \nu }): \nu >0 \}$, 
\begin{equation} 
N 
=\left \{   \begin{pmatrix}
1 &n/R\\ 0 &1
\end{pmatrix}
: n\in \mathbb{C}  
\right\},
\end{equation} 
we get the alternative description of the moduli space
\begin{equation}
\label{kfdhjg} 
\MMM 
\simeq   N \, A,
\end{equation} 
which is diffeomorphic to the upper half space $\{(n, \nu )\in \mathbb{C}  \times \mathbb{R}  : \nu >0 \}  \subset \mathbb{R}  ^3 $.

Let us pause to interpret these results. Consider a point  $( \lambda , m )\in\mathring B ^5 _R  $ of the moduli space of  (not necessarily circle-invariant) instantons. By computing 
\begin{equation}
u g _m a _\lambda  \cdot \omr,
\end{equation} 
with $u= \mathrm{diag}(\exp(\qi \aalpha/2), \exp(\qi \aalpha/2)) \in\uu $, one can check that the effect of the circle action on $( \lambda , m )$ is, for $m =m _1 + m _2 \, \mathbf{j} $, $m _1, m _2 \in \mathbb{C}  $,
\begin{equation}
\lambda \rightarrow \lambda , \qquad 
m 
\rightarrow \exp(\qi \aalpha/2) m \exp( - \qi \aalpha/2) 
=m _1 + \exp(\qi \aalpha) m _2.
\end{equation} 
It is therefore clear that, for our choice of the circle action, an instanton can be circle-invariant if and only if $m\in \mathbb{C}  $.
With respect to the foliation of $\mathring B ^5 _R \setminus\{0\} $ by 4-spheres, this condition on $m$ selects for each 4-sphere the equatorial 2-sphere which stereographically projects to  $\mathbb{C}  \subset\mathbb{H} =\mathbb{C}  \oplus \mathbb{C}  \mathbf{j}  $. For this reason we obtain  $\MMM  \simeq  \mathring B ^3 _R $ as a subspace of the full moduli space $\MM _1  \simeq   \mathring B ^5 _R $.

The full symmetry group $SL(2, \mathbb{H}  )$  of the Yang-Mills action  acts on $\MM _1  $ (with only $SL(2, \mathbb{H}  ) / \mathbb{Z}  _2  \simeq  SO(5,1) $ acting effectively). If  $\mathbb{H}  $ is the stereographic projection of any of the  4-spheres foliating $\MM _1 $, the circle action  induces a splitting $\mathbb{H}  = \mathbb{C}  \oplus \mathbb{C}  \qj$ and only the subgroup of $SL(2, \mathbb{H}  ) $ preserving this splitting  acts on $\MMM $.
This is (after factoring out  $\uu $ which fixes circle invariant instantons) the subgroup $\GG  \simeq  SL(2, \mathbb{C}  )$ of $SL(2, \mathbb{H}  ) $.  In other words, given a subgroup $SO (2) $ of $ SO(5,1)  \simeq  SL(2, \mathbb{H}  )/ \mathbb{Z}  _2$, invariance with respect to the $SO (2) $ action selects the $SO(3,1)  \simeq  SL(2, \mathbb{C}  )/ \mathbb{Z}  _2  $ subgroup of $SO (5,1) $ commuting with the given $SO (2) $.

Similar remarks can be made for the  moduli space parameterisation (\ref{kfdhjg}). The
 effect of the circle action on a point  $( \nu  , n )\in \mathbb{H}  \times (0, \infty )$ of $\MM _1 $ is
\begin{equation}
\nu  \rightarrow \nu  , \qquad 
n \rightarrow  \exp(\qi \aalpha/2) n \exp( - \qi \aalpha/2) ,
\end{equation} 
so that circle invariance forces $n$ to lie in $\mathbb{C}  $. For an instanton on $\mathbb{H}  $ this is the intuitive condition that the centre of a circle-invariant instanton cannot have any component in the plane which is being rotated. 

\subsection{Framing and monopoles}
\label{monopoles} 
It is  convenient to enlarge the moduli space $\MM _1 $ of $Sp (1) $ Yang-Mills 1-instantons on $P ^1 (\mathbb{H}  )$ by choosing a framing point $x _0 \in P ^1 (\mathbb{H}  )$  and defining the framed moduli space 
\begin{equation}
\MMf _1   =\mathcal{A}_1  ^- / \mathcal{G} _0  ,
\end{equation}
where  $\mathcal{A} _1 ^- $ is the space of anti self-dual 1-instantons and
\begin{equation}
\mathcal{G} _0 =\{ f\in \mathcal{G} : f |_{ \pi ^{-1} (x_0) } = \mathrm{Id}_{P }   \}.
\end{equation} 
The framed moduli space  is fibered over $\MM _1 $ with fibre the structure group modded out by its centre.\footnote{The reason for quotienting out the centre is the fact  that the group of bundle automorphisms always has a subgroup, isomorphic to the centre of the structure group, which stabilises every connection.}
Since $\MM _1 $ is 5-dimensional, $\MMf_1 $ is  8-dimensional with $Sp (1) / \mathbb{Z}  _2\simeq SO (3) $ fibres.  

For circle invariant instantons, we require $x _0 $ to be fixed by the circle action. We shall take $x _0 =[R,0]  $ as our framing point. The framed moduli space of circle invariant 1-instantons is
\begin{equation} 
\MMM
= \mathcal{A}  ^{ - } _{ 1\mathrm{c}} / \mathcal{G} _{ 0c }, 
\end{equation} 
where $\mathcal{G} _{ 0\mathrm{c}  }= \mathcal{G} _0 \cap \mathcal{G} _\mathrm{c} $.
Because of the condition (\ref{bcond}), the framed moduli space $\MMMf $  is fibered over $\MMM $ with $U (1) $, rather than $SO (3) $, fibres and is diffeomorphic to the trivial $U (1) $ bundle over the open 3-ball $\mathring B ^3 _R $,
\begin{equation}
\MMMf
\simeq   \MMM \times U (1)
\simeq   \mathring B ^3 _R \times U (1 ) .
\end{equation}


We will now give an explicit description of the $U(1)$ factor in the framed moduli space of circle invariant instantons. In particular, we will show how to obtain it directly from the  generator $\xi$ \eqref{xib} of the $U(1)$ subgroup $\uu$ which defines our notion of circle invariance. 
 The idea is to view  the vertical component $\xi^\mathrm{v}$  of  $\xi$ as an infinitesimal bundle automorphism, and to show that the bundle automorphism it generates preserves circle invariance and is non-trivial on the fibre over the framing point $x_0$.

 By definition,  the vertical and horizontal components of $\xi$  with respect to a given circle invariant connection $\omega$  satisfy
 \begin{equation}
 \xi= \xi^\mathrm{v} + \xi^\mathrm{h}, \qquad  \omega(\xi^\mathrm{h})=0, \quad \xi^\mathrm{v}  \text{ vertical}. 
 \end{equation}
The vertical component can be calculated according to  
\begin{equation}
\xi^\mathrm{v}= \omega (\xi ) ^\sharp,
\end{equation} 
 with $^\sharp $  defined in (\ref{assvec}).

In order to study the properties of $\xi^\mathrm{v}$,  we  define the Higgs field $\hi: P \rightarrow \mathfrak{ sp }(1)$  via
\begin{equation}
\label{fffphi} 
\hi = \omega (\xi).
\end{equation} 
Since $ \xi $ is left generated, it is $Sp(1)$  right invariant. Thus, for  any $p\in P $, $g\in Sp (1) $,
\begin{equation} 
\hi (p \cdot g ) 
= \omega _{ p \cdot g } ( \xi _{ p \cdot g } )
= \omega _{ p \cdot g } (R _{ g * } \xi _p ) 
= \mathrm{ad} _{ g ^{-1} } \omega _p (\xi _p )
= \mathrm{ad} _{ g ^{-1} } \hi (p).
\end{equation} 
Hence $\hi$ is ad-equivariant, 
 and so is   $\exp( \hi \, \varphi):P \rightarrow Sp (1) $ for any $ \varphi \in[0, 4 \pi )$. Therefore, the Higgs field defines a 1-parameter family of bundle automorphisms $ \{ \Phi _\varphi : \varphi \in[0, 4 \pi )\}$, with $\Phi _\varphi $  the bundle automorphism associated to $\exp(\hi  \, \varphi )$ via (\ref{gadeqv}).
 
 \begin{lem}
\label{lemlem} 
For any $\varphi\in[0, 4 \pi )$ the bundle automorphism $\hi _\varphi =\exp( \hi \, \varphi ) $  preserves circle invariance.  
\end{lem}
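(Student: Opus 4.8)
The plan is to verify the criterion of Corollary \ref{u1coro} for the $1$-parameter family $f_\epsilon = \hi_\epsilon$. Differentiating $g_{f_\epsilon} = \exp(\hi\,\epsilon)$ at $\epsilon=0$ gives the infinitesimal bundle automorphism $X_f = \hi = \omega(\xi)$, whose associated vertical vector field is $X_f^\sharp = \omega(\xi)^\sharp = \xi^{\mathrm v}$, the vertical component of $\xi$ with respect to $\omega$. By Corollary \ref{u1coro}, the assertion of the lemma is therefore equivalent to the identity $[\xi^{\mathrm v},\xi]=0$.

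To establish this I would show that the flow $\rho_\varphi$ of $\xi$ preserves the decomposition $\xi=\xi^{\mathrm v}+\xi^{\mathrm h}$. The map $\rho_\varphi$ of \eqref{cac} acts by left quaternionic multiplication on $(q^1,q^2)$, hence commutes with the right $Sp(1)$-action; consequently it sends fibres to fibres and $(\rho_\varphi)_\ast$ preserves the vertical distribution, so $(\rho_\varphi)_\ast\xi^{\mathrm v}$ is vertical. On the other hand, circle invariance of $\omega$ means $\rho_\varphi^\ast\omega=\omega$, so $\omega\bigl((\rho_\varphi)_\ast\xi^{\mathrm h}\bigr)=(\rho_\varphi^\ast\omega)(\xi^{\mathrm h})=\omega(\xi^{\mathrm h})=0$, i.e.\ $(\rho_\varphi)_\ast\xi^{\mathrm h}$ is horizontal. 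Since $\xi$ generates $\rho_\varphi$ it is invariant under its own flow, $(\rho_\varphi)_\ast\xi=\xi$, and uniqueness of the vertical--horizontal splitting forces $(\rho_\varphi)_\ast\xi^{\mathrm v}=\xi^{\mathrm v}$. Differentiating this at $\varphi=0$ yields $\mathcal L_\xi\xi^{\mathrm v}=[\xi,\xi^{\mathrm v}]=0$, which is \eqref{bcond}, and the lemma follows.

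Equivalently, and avoiding Lie derivatives altogether, one can invoke the criterion established inside the proof of Proposition \ref{lem4}: a bundle automorphism preserves circle invariance precisely when its associated ad-equivariant map is constant on $\uu$-orbits. For $\hi_\varphi$ this map is $\exp(\hi\,\varphi)$, so it suffices to verify $\hi(\rho_u(p))=\hi(p)$ for all $u\in\uu$; and this is immediate from $\hi=\omega(\xi)$ together with $(\rho_u)_\ast\xi=\xi$ and $\rho_u^\ast\omega=\omega$.

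The only genuine subtlety is that $\rho_\varphi$ is not a gauge transformation -- it moves points of $P^1(\mathbb{H})$ -- so its compatibility with the horizontal distribution is not automatic; it is precisely the circle invariance of $\omega$ that supplies it. Once that observation is in place the argument is purely formal, so I do not expect a serious obstacle.
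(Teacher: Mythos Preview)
Your proof is correct and follows the same overall route as the paper: reduce via Corollary~\ref{u1coro} to the identity $[\xi,\omega(\xi)^\sharp]=0$, and then use circle invariance of $\omega$ to establish it. The only difference is in how that last identity is verified. The paper expands $\omega(\xi)^\sharp$ in the right-invariant frame $\mathbf{i}^\sharp,\mathbf{j}^\sharp,\mathbf{k}^\sharp$ and uses two facts separately: (i) $\mathcal{L}_\xi(\omega(\xi))=0$ from $\mathcal{L}_\xi\omega=0$, so the scalar coefficients are $\xi$-invariant; and (ii) $[\xi,\mathbf{i}^\sharp]=[\xi,\mathbf{j}^\sharp]=[\xi,\mathbf{k}^\sharp]=0$ because left- and right-generated vector fields commute. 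Your first argument instead observes directly that the flow $\rho_\varphi$ preserves both the vertical distribution (since it commutes with the right $Sp(1)$-action) and the horizontal distribution (since $\rho_\varphi^\ast\omega=\omega$), hence preserves $\xi^{\mathrm v}$; differentiating gives the bracket identity. Your second argument, via the ``constant on $\uu$-orbits'' criterion from the proof of Proposition~\ref{lem4}, is the cleanest of the three and bypasses Corollary~\ref{u1coro} entirely. All three arguments ultimately unpack the same two ingredients --- $\rho_\varphi^\ast\omega=\omega$ and the commutativity of the left $\uu$-action with the right $Sp(1)$-action --- so the differences are stylistic rather than substantive.
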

\begin{proof} 
For $p\in P $,  $ \Phi _\varphi (p) =  p \cdot \exp( \hi (p)  \, \varphi )=p \cdot \exp( \omega (\xi _p )  \, \varphi )$, so that the infinitesimal generator of the flow  $\Phi _\varphi $ is  $\xi^\mathrm{v}= \omega (\xi ) ^\sharp $.
By (\ref{bcond}),  $\Phi _\varphi $ preserves circle invariance if and only if $\mathcal{L} _\xi (\omega (\xi) ^\sharp )=0 $. 
Write
$
\omega (\xi)  = \omega ( \xi ) _{ \mathbf{i} } \,  \mathbf{i} + \omega ( \xi ) _{ \mathbf{j} } \, \mathbf{j}  + \omega ( \xi ) _{ \mathbf{k} } \, \mathbf{k},
$
so that
\begin{equation}
\mathcal{L} _\xi (\omega (\xi) ^\sharp )
= \mathcal{L} _\xi ( \omega (\xi) _{ \mathbf{i} } \, \mathbf{i} ^\sharp)+
 \mathcal{L} _\xi ( \omega (\xi) _{ \mathbf{j} } \, \mathbf{j} ^\sharp)+ 
  \mathcal{L} _\xi ( \omega (\xi) _{ \mathbf{k} } \, \mathbf{k} ^\sharp).
\end{equation} 
Since $\omega$ is circle invariant, $\mathcal{L} _\xi \omega =0 $, hence 
\begin{equation}
\label{liefider} 
 \mathcal{L} _\xi (\omega ( \xi )) = \iota _\xi ( \mathcal{L} _\xi  - \iota _\xi \mathrm{d} ) \omega = 0 .
 \end{equation}  
 Since $ \xi $ is left generated while $\mathbf{i} ^\sharp $, $ \mathbf{j} ^\sharp $, $ \mathbf{k} ^\sharp $ are right generated, $ [ \xi , \mathbf{i} ^\sharp ] =[ \xi , \mathbf{j} ^\sharp ] =[ \xi , \mathbf{k} ^\sharp ] =0 $. Therefore $[ \xi, \omega (\xi) ^\sharp ]=0 $ so that, by  Corollary  \ref{u1coro},  $\Phi _\varphi $ is an element of $\mathcal{G} _ \mathrm{c} $.
 \end{proof} 
 
Next we would like to show that, for  $\varphi\neq 0 $, $\hi _\varphi $ is a non-trivial element of $ \mathcal{G} _ \mathrm{c } $, that is, does not vanish on the fibre $ \pi ^{-1} ([R,0] ) $. In fact, we will prove the stronger property that $|\hi| $ is constant with value $1/2 $ on  $P |_ \mathcal{S} $, where $\mathcal{S}$  is  the fixed points set of $\xi$. 
 Let us first characterise  $\mathcal{S}$. A point $[q ^1 , q ^2 ] \in \mathcal{S} $ satisfies 
 \begin{equation}
 [\exp( \mathbf{i} \varphi /2 )q ^1 , \exp( \mathbf{i} \varphi /2 )q ^2] =[ q ^1 , q ^2 ],
 \end{equation}
and stereographic projection gives the condition 
\begin{equation} 
\exp( \mathbf{i} \varphi  /2 )Rq ^1 (q ^2) ^{-1}  \exp(- \mathbf{i} \varphi  /2 ) = Rq ^1 ( q ^2 ) ^{-1} ,
\end{equation}
which in turn implies 
\begin{equation}
\label{fp1} 
q ^1 ( q ^2) ^{-1}  \in \mathbb{C} .
\end{equation} 
Therefore, $ \mathcal{S} $ is the 2-sphere 
 \begin{equation}
 \label{ss2234} 
 \mathcal{S}  =\{[q ^1 ,q ^2 ]: Rq ^1  (q ^2) ^{-1} \in \mathbb{C}   \} \cup \{[R,0] \} .
 \end{equation} 
 Writing $ q ^1 = r _1 h _1 $, $ q ^2 =r _2 h _2 $, $r _1 , r _2 \in \mathbb{R}  $, $h _1 , h _2 \in \mathbb{H}  $, $ |h _1 |=|h _2 |=1 $, (\ref{fp1})  becomes $ r _1 r _2 h _1 \bar h _2 \in \mathbb{C}  $, hence $h _1 = z h _2 $ for some $z\in \mathbb{C}  $, $|z |=1 $. Therefore 
 \begin{equation}
 \label{nmmaskdjapsdij} 
P |_{ \mathcal{S} }= \{( \lambda   _1 h, \lambda   _2  h )  : (\lambda  _1  , \lambda   _2) \in S ^3 \subset \mathbb{C}^2, h\in \mathbb{H},   |h |=1 \}/ \sim,
 \end{equation} 
the equivalence relation being left multiplication by a unit complex number.

\begin{prop} 
\label{prop8} 
The Higgs field $\hi$ has constant norm $1/2 $ on $ P | _ \mathcal{S} $. In particular, for  any $\varphi \in(0, 4 \pi )$ the bundle automorphism $\hi _\varphi =\exp( \hi \, \varphi )$ determines  a non-trivial element of $\mathcal{G} _ \mathrm{c} / \mathcal{G} _{ 0 \mathrm{c} } $.
\end{prop}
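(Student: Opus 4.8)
The plan is to compute $\hi=\omega(\xi)$ explicitly on a convenient representative connection and then propagate the answer to all of $P|_{\mathcal{S}}$ using the symmetry group $\GG$. Since $\GG\simeq SL(2,\mathbb{C})$ acts transitively on $\mathcal{A}^-_{1\mathrm{c}}$ and the norm $|\hi|$ is gauge invariant (indeed $\hi$ is ad-equivariant, so $|\hi|$ is a well-defined function on $P^1(\mathbb{H})$), it suffices to establish the value $1/2$ for a single circle invariant instanton, say the standard one $\omr$, and then check that the action of $\GG$ moves points of $\mathcal{S}$ among themselves (this is exactly the statement that $\GG$ preserves circle invariance, hence commutes with $\xi$ and so fixes $\mathcal{S}$ setwise). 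Actually, since $\omr$ is $Sp(2)$-invariant and $\mathcal{S}$ is a single $\GG\cap Sp(2)\simeq SU(2)$-orbit together with the point $[R,0]$, the computation at one point of $P|_{\mathcal{S}}$ already suffices for the standard instanton, and then transitivity of $\GG$ on $\mathcal{A}^-_{1\mathrm{c}}$ handles the general connection.

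Concretely, I would first compute $\xi$ in terms of the canonical connection form. On the total space $S^7_{\sqrt R}$ we have $\omr=\frac{1}{R}\Im(\bar q^1\,\mathrm{d}q^1+\bar q^2\,\mathrm{d}q^2)$, and from \eqref{xiac}, $\xi(q^i)=\tfrac{\mathbf{i}}{2}q^i$. Hence
\begin{equation}
\omr(\xi)=\frac{1}{R}\Im\!\left(\bar q^1\tfrac{\mathbf{i}}{2}q^1+\bar q^2\tfrac{\mathbf{i}}{2}q^2\right)=\frac{1}{2R}\Im\!\left(\bar q^1\mathbf{i}q^1+\bar q^2\mathbf{i}q^2\right).
\end{equation}
On $\mathcal{S}$, using the description \eqref{nmmaskdjapsdij}, a point of $P|_{\mathcal{S}}$ has the form $(\lambda_1 h,\lambda_2 h)$ with $(\lambda_1,\lambda_2)\in S^3\subset\mathbb{C}^2$ scaled to radius $\sqrt R$, i.e. $|\lambda_1|^2+|\lambda_2|^2=R$, and $h\in Sp(1)$. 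Then $\bar q^1\mathbf{i}q^1+\bar q^2\mathbf{i}q^2=\bar h(\,\overline{\lambda_1}\mathbf{i}\lambda_1+\overline{\lambda_2}\mathbf{i}\lambda_2\,)h=\bar h\,\mathbf{i}\,(|\lambda_1|^2+|\lambda_2|^2)\,h=R\,\bar h\mathbf{i}h$, where I used that $\lambda_1,\lambda_2\in\mathbb{C}$ commute with $\mathbf{i}$. Therefore $\omr(\xi)=\tfrac12\bar h\mathbf{i}h$, which has norm $1/2$ (the norm being $\tfrac12$ of the norm of the unit imaginary quaternion $\mathbf{i}$, consistently with the inner product conventions; one should double-check the normalisation against \eqref{ip} and the chosen Lie-algebra norm). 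This confirms $|\hi|\equiv 1/2$ on $P|_{\mathcal{S}}$ for the standard instanton. For a general $\omega=g\cdot\omr$ with $g\in\GG$, ad-equivariance of $\hi$ and the fact that $\GG$ preserves $\mathcal{S}$ give the same value.

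For the second assertion, I would argue that $\hi_\varphi=\exp(\hi\,\varphi)$ fails to lie in $\mathcal{G}_{0\mathrm{c}}$ because it is non-trivial on the fibre $\pi^{-1}([R,0])$: since $[R,0]\in\mathcal{S}$ and $|\hi|=1/2\neq 0$ there, the element $\exp(\hi(p)\varphi)\in Sp(1)$ is non-identity for $\varphi\in(0,4\pi)$ (the exponential of a non-zero imaginary quaternion of norm $1/2$ returns to the identity only at $\varphi\in 4\pi\mathbb{Z}$, matching the period stated in \eqref{u1group}), so $\hi_\varphi(p)=p\cdot\exp(\hi(p)\varphi)\neq p$. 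Combined with Lemma \ref{lemlem}, which already places $\hi_\varphi$ in $\mathcal{G}_\mathrm{c}$, this shows $\hi_\varphi$ represents a non-trivial class in $\mathcal{G}_\mathrm{c}/\mathcal{G}_{0\mathrm{c}}$. The main obstacle I anticipate is purely bookkeeping: getting the normalisation of $|\hi|$ exactly right, since it depends on the precise conventions for the $L^2$ inner product \eqref{ip}, the identification $\mathfrak{sp}(1)\cong\Im\mathbb{H}$, and the factor $1/2$ built into the weight of the lifted circle action \eqref{cac}; the conceptual structure (compute on $\omr$, spread out by $\GG$-equivariance) is otherwise routine.
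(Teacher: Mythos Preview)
Your proof is correct, but it takes a more laborious route than the paper's. The paper observes that for any $p\in P|_{\mathcal S}$ the vector $\xi_p$ is purely \emph{vertical}, i.e.\ $\xi_p=L_p^\sharp$ for some $L_p\in\mathfrak{sp}(1)$; then the defining property of a connection form gives $\omega_p(\xi_p)=L_p$ \emph{for every connection $\omega$}, circle-invariant or not. Determining $L_p$ from $\tfrac{\mathbf i}{2}q^i=q^iL_p$ (using $q^i=\lambda_i h$) yields $L_p=h^{-1}\tfrac{\mathbf i}{2}h$ directly, with no reference to $\omr$ or to $\GG$. Your computation of $\omr(\xi)$ happens to produce the same quantity $\tfrac12\bar h\mathbf i h$, but it conceals the reason: the answer is independent of the connection because $\xi$ is vertical over $\mathcal S$.

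A small caution on your extension step: what propagates the value of $|\hi|$ from $\omr$ to $g\cdot\omr$ is not the ad-equivariance of $\hi$ under the right $Sp(1)$-action (which only makes $|\hi|$ descend to the base) but rather the fact that $\GG$ commutes with $\mathcal U$, so that $(\rho_{g^{-1}})_*\xi=\xi$ and hence $\hi_{g\cdot\omega}=\rho_{g^{-1}}^*\hi_\omega$; together with $\rho_{g^{-1}}(P|_{\mathcal S})=P|_{\mathcal S}$ this gives the conclusion. Your sketch says the right words but conflates the two equivariances. The paper's argument sidesteps this entirely: no transitivity of $\GG$ is needed, and indeed the result holds for arbitrary connections, which is a strictly stronger statement obtained with less work.
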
 
\begin{proof}

Let $p\in P $ be such that $\pi (p)\in \mathcal{S}  $. Then $\xi _p $ is purely vertical, 
\begin{equation}
\label{lcond} 
\xi _p = \left. \frac{\mathrm{d} }{\mathrm{d} t} \right | _{ \varphi  =0 } p \cdot \exp ( \varphi  L  _p )
\end{equation}  
for some $L _p \in \mathfrak{ sp }(1) $. The value of the Higgs field at $p$ is then 
\begin{equation}
\hi (p) =\omega _p (\xi _p ) =L _p.
\end{equation}
We now need to determine $L _p $.
The infinitesimal circle action on  a point $p =( q ^1 , q ^2 )\in P $ is  obtained by acting with $\xi$, hence by (\ref{xiac})
\begin{equation}
\label{lkjyuhj} 
( q ^1 , q ^2 ) \mapsto \xi _{ (q ^1 , q ^2 )} (q ^1 , q ^2 )= \frac{\mathbf{i} }{2}( q ^1 , q ^2 ).
\end{equation} 
For $( q ^1 , q ^2 )\in \pi ^{-1} (\mathcal{S} )$, $\xi  _{ ( q ^1 , q ^2 )} $ is purely vertical, hence  the left action of $\xi$ in (\ref{lkjyuhj})  has to be equal to the right action of $ L _{( q ^1 , q ^2 )} $, 
\begin{equation}
\frac{\mathbf{i} }{2} ( q ^1 , q ^2 ) = ( q ^1 , q ^2 )L _{(q ^1 , q ^2 )}.
\end{equation} 
Because of (\ref{nmmaskdjapsdij}), 
\begin{equation}
\frac{\mathbf{i} }{2} q ^i =\frac{\mathbf{i} }{2} \lambda _i h =q ^i  h^{-1} \frac{\mathbf{i} }{2} h, \qquad i =1,  2,
\end{equation} 
so that
\begin{equation}
L _{( q ^1 , q ^2 )} =
h ^{-1} (\mathbf{i}/2) h,
\end{equation} 
and $| \hi (q ^1  , q ^2 )  |=1/2 $.
\end{proof}

In order to link our discussion to  hyperbolic monopoles, we need to  pull back $\Phi$ to a circle invariant field $\ph: P ^1 (\mathbb{H}  )\setminus \mathcal{S} \rightarrow \mathfrak{ sp} (1) $. 
Since $\Phi$ is $\uu$ invariant, pulling back via a local section  $\sigma:  W\subset  P ^1 (\mathbb{H}  )\setminus \mathcal{S}\rightarrow P$ which intertwines the circle actions,
\begin{equation}
\label{sequi} 
\rho _u \circ  \sigma = \sigma \circ \rrho_u, \quad \forall u\in \mathcal{U} 
\end{equation} 
would result in a circle-invariant function on $W\subset  P ^1 (\mathbb{H}  )\setminus \mathcal{S}$. However, it is easy to see that there can be no local section satisfying this requirement, since (with the notation introduced after \eqref{uconventions})  $\rrho_{2\pi}= \mathrm{Id}_{P^1(\mathbb{H}) }$ but 
$\rho_{2\pi} =- \mathrm{Id}_{P } $. However,   such a section does exist for the   $SO (3) $ principal bundle $P / \mathbb{Z}  _2 $, where $\mathbb{Z}  _2 $ acts on $P$ as in (\ref{acts7}), as we will show by  construction in (\ref{sigmasec}) below.  This is sufficient for our purposes since   both $\omega$ and $ \xi $ are invariant under the $\mathbb{Z}  _2 $ action and descend to $P / \mathbb{Z}  _2 $. We sum up the situation as follows.


\begin{lem}
\label{lemsec}
Let  $\omega$ be   a circle invariant connection on $P $, $\hi=\omega(\xi)$ the associated Higgs field and  $\sigma$ a local section of $P / \mathbb{Z}  _2 $ which satisfies \eqref{sequi}. Then 
\begin{equation}
\label{inin}
   \mathcal{L} _{ \check \xi }  ( \sigma ^\ast \hi )  = 0 \quad \text{and}  \quad  \mathcal{L} _{ \check \xi }  ( \sigma ^\ast \omega   ) =0.
   \end{equation}
      Moreover
\begin{equation}
\label{nice} 
\sigma ^\ast \hi =(\sigma ^\ast \omega ) ( \check \xi ).
\end{equation} 
\end{lem}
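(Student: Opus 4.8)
The plan is to establish the two identities in \eqref{inin} first, and then derive \eqref{nice} as essentially a formal consequence of the pullback compatibility built into \eqref{sequi}.

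For the invariance statements, I would argue as follows. The section $\sigma$ intertwines the circle actions in the sense of \eqref{sequi}, i.e. $\rho_u\circ\sigma = \sigma\circ\rrho_u$ for all $u\in\uu$; differentiating this at the identity gives $\sigma_\ast \check\xi = \xi\circ\sigma$ along the image of $\sigma$ (more precisely, $\sigma$ is $\check\xi$--$\xi$ related). Hence for any differential form or function $\eta$ on $P$ (or $P/\mathbb{Z}_2$) one has $\sigma^\ast(\mathcal{L}_\xi\eta) = \mathcal{L}_{\check\xi}(\sigma^\ast\eta)$, since pullback commutes with Lie derivative along related vector fields. Applying this with $\eta=\omega$ and using that $\omega$ is circle invariant, $\mathcal{L}_\xi\omega = 0$, gives $\mathcal{L}_{\check\xi}(\sigma^\ast\omega) = 0$. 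For the Higgs field, recall from \eqref{liefider} in the proof of Lemma \ref{lemlem} that $\mathcal{L}_\xi(\omega(\xi)) = 0$, i.e. $\mathcal{L}_\xi\hi = 0$; applying the same relatedness identity with $\eta = \hi$ yields $\mathcal{L}_{\check\xi}(\sigma^\ast\hi) = 0$.

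For \eqref{nice}, the computation is direct: by definition $\hi = \omega(\xi)$, so $\sigma^\ast\hi = (\omega(\xi))\circ\sigma$. On the other hand $(\sigma^\ast\omega)(\check\xi)$ evaluated at a point $x\in W$ is $\omega_{\sigma(x)}(\sigma_{\ast}\check\xi_x)$, and since $\sigma$ is $\check\xi$--$\xi$ related we have $\sigma_\ast\check\xi_x = \xi_{\sigma(x)}$, whence $(\sigma^\ast\omega)(\check\xi)_x = \omega_{\sigma(x)}(\xi_{\sigma(x)}) = \hi(\sigma(x)) = (\sigma^\ast\hi)_x$. One should note that $\omega(\xi)$ is an $\mathrm{ad}$-equivariant $\mathfrak{sp}(1)$-valued function on $P$, which descends to $P/\mathbb{Z}_2$ because $\xi$ and $\omega$ are $\mathbb{Z}_2$-invariant, as remarked in the text, so pulling back by a section of $P/\mathbb{Z}_2$ is legitimate.

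The main obstacle, such as it is, is purely a matter of bookkeeping: one has to be careful that $\sigma$ is a section of $P/\mathbb{Z}_2$ rather than of $P$ (a genuine intertwining section of $P$ does not exist, as the text explains via $\rho_{2\pi} = -\mathrm{Id}_P$ while $\rrho_{2\pi} = \mathrm{Id}$), and to check that all objects involved—$\omega$, $\xi$, $\hi$—genuinely descend to the $SO(3)$-bundle $P/\mathbb{Z}_2$. Once that is granted, the statement is the standard fact that pullback along related vector fields intertwines Lie derivatives and contractions, with no hard estimates or geometry required. I would also invoke the explicit construction of such a $\sigma$ promised in \eqref{sigmasec} only to the extent of knowing it exists and satisfies \eqref{sequi}; the lemma itself uses nothing about its particular form.
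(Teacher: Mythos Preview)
Your argument is correct and follows essentially the same route as the paper: both derive \eqref{inin} from the intertwining property \eqref{sequi} together with $\mathcal{L}_\xi\omega=0$ and $\mathcal{L}_\xi\hi=0$, and both obtain \eqref{nice} by the direct computation $(\sigma^\ast\omega)(\check\xi)=\omega_{\sigma}(\sigma_\ast\check\xi)=\omega_\sigma(\xi)=\sigma^\ast\hi$ using that $\sigma$ is $\check\xi$--$\xi$ related. Your version simply spells out the relatedness and the $P/\mathbb{Z}_2$ descent more explicitly than the paper's one-line treatment.
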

\begin{proof}
The  equations \eqref{inin} are simply the infinitesimal versions of the circle  invariance of the pull-backs of the connection $\omega$ and the Higgs field. Both follow directly from the intertwining properties of the local section and  the circle  invariance of $\omega$ and $\hi$. 
The relation \eqref{nice}  follows from 
\begin{equation}
(\sigma ^\ast \omega )( \check \xi )
=\mathrm{d} / \mathrm{d} \varphi   |_{ \varphi  =0 } \, \omega ( \sigma ( \rrho _\varphi ))
=\mathrm{d} / \mathrm{d} \varphi  |_{ \varphi   =0 } \, \omega ( \rho _\varphi   ( \sigma  ))
=\omega ( \xi ) \circ \sigma 
 =\sigma ^\ast \hi.
\end{equation} 
\end{proof}

Let us finally come to the relation with hyperbolic monopoles \cite{Atiyah:1987ua}.
There is a conformal isometry $ P ^1 (\mathbb{H}  )  \setminus S ^2    \simeq  H ^3 _R \times S ^1 _R  $, where $H ^3 _R  $ is the hyperbolic 3-space with metric $g _{ H ^3 _R }$ of sectional curvature $-1/ R ^2 $.  In coordinates
\begin{equation}
\begin{split} 
\gph &
= \frac{R ^2 }{(R ^2 + |q| ^2 )^2 } ( \mathrm{d} x ^2 + \mathrm{d} y ^2 + \mathrm{d} z ^2 +  \mathrm{d} w ^2 )
=\frac{R ^2  \rho ^2 }{(R ^2 + |q| ^2  ) ^2 } \left(g _{ H ^3 _R } +  g _{ S ^1 _R   } \right), \\
g _{ H ^3 _R } &
= \frac{R ^2 }{ \rho ^2 }( \mathrm{d} x ^2 + \mathrm{d} y ^2 + \mathrm{d} \rho ^2 ), \qquad g _{ S ^1 _R }= R ^2 \,  \mathrm{d} \cchi ^2.
\end{split} 
\end{equation} 
Here $x,y, z,w $ are coordinates on $\mathbb{H}  $, $ |q| ^2 =x ^2 + y ^2 + z ^2 + w ^2 $ and $(\rho, \vartheta  )$ are polar coordinates on the $(z,w) $-plane. In these coordinates the vector field $\check \xi $ \eqref{xihatb} is simply $\partial / \partial \vartheta $.

The removed  2-sphere  is the set  $\mathcal{S}$ of fixed  points of the $\rrho$ action  of $\UU$, and is characterised by (\ref{ss2234}). It corresponds to the  $(x,y) $-plane $\rho =0 $ and its point at infinity $[R,0] $, and is mapped by the conformal isometry to the asymptotic boundary of $H ^3 _R  $. More precisely, $ \mathcal{S}  \setminus [R,0] $ is mapped to the plane $\rho =0 $ in $H ^3_R $, and $[R,0] $ corresponds to all the points at infinity on $H ^3 _R $, both those on the plane $\rho  = 0 $ and those having $\rho >0 $.

Let $\omega$ be a circle invariant connection satisfying the ASDYM equations on $ P ^1 (\mathbb{H}  ) \setminus   \mathcal{S}   $, and $\sigma$ a local section of $P / \mathbb{Z}  _2 $ satisfying (\ref{sequi}).  We introduce the notation
\begin{equation}
R \ph = \sigma ^\ast \hi, \qquad 
\Aa 
= \sigma ^\ast \omega  -R \ph \, \mathrm{d} \vartheta.
\end{equation}
By Lemma \ref{lemsec}, $R  \ph = (\sigma ^\ast \omega ) ( \partial / \partial \vartheta  )$, hence 
\begin{equation}
\sigma ^\ast \omega 
= \Aa + \ph R\,   \mathrm{d} \vartheta, \qquad  \iota _{ \partial / \partial \vartheta } \Aa = 0 .
\end{equation} 
Because of the conformal invariance of the ASDYM equations,   $\Aa $, $\ph$ satisfy the Bogomolny equations for a hyperbolic monopole,
\begin{equation}
\label{mon} 
\mathrm{d} _{ \Aa } \ph = -  \,     * _{ H ^3 _R  } \Fff,
\end{equation}  
with $\Fff $ the curvature of $\Aa $ and  $* _{H^3 _R } $  the Hodge operator with respect to the hyperbolic metric. 
Therefore, a circle-invariant instanton on $P ^1 (\mathbb{H}  ) $ can be re-interpreted as a monopole on $H ^3 _R  $.

In general the instanton number $I $, the monopole number $k$ and the weight of the circle action $\mu $ are related by the equation $ I =2k\mu  $ \cite{Atiyah:1987ua}. For our choice of the lift $\mu =1/2 $  so that $I =k =1$.


Note that above a fixed point (\ref{bcond}) becomes $[ X _f, \Phi ] = 0 $.
Therefore, the fact that the fibre of the framed moduli space $\MMMf $ is $U (1) $ instead of $Sp (1) / \mathbb{Z}   _2 $  is particularly natural from the monopole perspective: allowed infinitesimal bundle automorphisms must commute with the asymptotic value of the Higgs field.

For future reference, we would like to write down the hyperbolic monopole associated to the circle invariant  connection  $ a _\lambda \cdot \omr$ with  moduli $ \alpha = \beta =0 $ but generic $\lambda$ dependence. As expected, the section 
$s$ of the pullback bundle $\hat P$  given in \eqref{ours} does not  satisfy the intertwining condition  \eqref{sequi}.  However, with  $\vartheta $ again denoting  the polar coordinate of $q$ in the $(z,w) $-plane, the  section $\sigma$ of $\hat P/\mathbb{Z}_2$ given by 
\begin{equation}
\label{sigmasec} 
\begin{split} 
  \sigma(q) &
  = s(q)\cdot  \mathrm{e} ^{\frac{\vartheta}{2} \qi}
  = \mathrm{e} ^{\frac{\vartheta}{2} \qi}\sqrt{\frac{R}{|q|^2+R^2}}(\mathrm{e} ^{-\frac{\vartheta}{2} \qi}\, q \, \mathrm{e} ^{\frac{\vartheta}{2} \qi},R)
\end{split} 
\end{equation}
 does. 
For the connection $a _\lambda \cdot \omr $ we have, for   $\Al =s ^\ast ( a _\lambda \cdot \omr) $ as in (\ref{al}),
\begin{align}
\label{indang} 
\Alc  &=   \sigma ^\ast ( a _\lambda \cdot \omr)  \nonumber 
=  u  (s ^\ast \Al)  u^{-1} + u  \, \mathrm{d} u^{-1}\\
& = \frac{  (x \, \mathrm{d} y-y \, \mathrm{d} x - \rho  ^2 \, \mathrm{d} \cchi ) \, \qi 
+ (x \, \mathrm{d} \rho - \rho  \, \mathrm{d} x + \rho  \, y \, \mathrm{d} \cchi )\, \qj
+ (\rho  \, \mathrm{d} y-y \, \mathrm{d} \rho  + \rho \,  x \, \mathrm{d} \cchi )\, \qk
}{\lambda ^2 + |q |^2 } 
\nonumber  \\ 
& + \frac{\qi}{2} \, \mathrm{d} \cchi.
\end{align} 
Hence we can read the monopole Higgs field and gauge potential,
 \begin{align}
\label{higgs} 
 \ph _\lambda & 
 = \frac{1}{R }\left( \frac{1}{2} - \frac{\rho ^2 }{\lambda ^2+  |q |^2 } \right)\,  \qi
 + \frac{\rho }{ R }\left( \frac{1}{\lambda ^2 + |q |^2 } \right) (y\, \qj + x\,  \qk ),\\
 \label{a3} 
 \Aaa &
 = \frac{1}{\lambda ^2 + |q| ^2 } \left[ 
 (x \, \mathrm{d} y - y \, \mathrm{d} x )\, \qi
 +  (x\, \mathrm{d} \rho - \rho \, \mathrm{d} x )\, \qj
 + (\rho \, \mathrm{d} y - y \, \mathrm{d} \rho )\, \qk
 \right] .
\end{align} 
We can now see that
\begin{equation}
\lim _{ \rho \to 0 }\,  R \ph _\lambda   =  \frac{\qi}{2  },
\end{equation} 
confirming that the weight of the lifted circle action is $1/2 $.

According to Lemma \ref{lemsec},  $R \ph _\lambda$ can be also calculated as $\sigma ^\ast ( (a _\lambda \cdot \omr) ( \xi ) ) $.
In fact,  using (\ref{acto}) and (\ref{xib}), 
\begin{equation} 
\begin{split} 
( a _\lambda \cdot \omega) (\xi) &
 =\frac{1}{2}\,    \frac{ \Im \left( R ^2 \bar q ^1 \, \mathbf{i}  q ^1 + \lambda ^2  \bar q ^2 \mathbf{i}  q ^2  \right)}{R ^2 | q ^1 |^2  + \lambda ^2 |q ^2 |^2 },\\
\sigma ^\ast (( a _\lambda \cdot \omega) (\xi) ) &
 =\frac{1}{2} \,  \frac{( \lambda ^2 + x ^2 + y ^2  - \rho ^2 ) \mathbf{i} + 2 \rho (y\,  \mathbf{j} + x\,  \mathbf{k} )}{\lambda ^2 + |q| ^2 }
 = R \ph _\lambda .
 \end{split} 
 \end{equation}

\section{The $L ^2 $ metric on $\MMMf$ and its geodesics} 
\label{s3} 
In this section we first compute the $L ^2 $ metric on $\MMMf$ and examine its properties, then study geodesic motion on $\MMMf$, which approximates the adiabatic dynamics of circle-invariant 1-instantons on $S ^4 _R $.

\subsection{The tangent space  $T _{ [ \omega  ]} \MMMf$}
\label{sscomp} 
It will be convenient to think of bundle automorphisms as elements of $ \Lambda ^0 ( P ^1 (\mathbb{H}  ), \mathrm{Ad} (P) )$, where $\mathrm{Ad} (P) $ is the non-linear adjoint bundle $P \times _{ \mathrm{ad}}Sp (1) $, while we keep denoting by $\mathrm{ad} (P) $ the linear adjoint bundle $P \times _{ \mathrm{ad}}\mathfrak{ sp} (1) $.
We write $\Lambda ^p_{ \mathrm{ad} }(A, B ) $  for the ad-equivariant p-forms on  $A$ with values in $B $.
Recall that there are isomorphisms 
\begin{equation}
\label{isoiso} 
 \mathcal{G} \simeq \Lambda ^0_{ \mathrm{ad} }(P, Sp (1) ) \simeq \Lambda ^0 ( P ^1 (\mathbb{H}  ), \mathrm{Ad} (P) ).
 \end{equation}
The first isomorphism is given by   (\ref{gadeqv}), and the second by  (\ref{hf}). At the infinitesimal level (\ref{isoiso}) becomes
 \begin{equation}
 \mathfrak{ X} ^ \mathrm{v}  _ \mathrm{R} (P) \simeq \Lambda ^0_ \mathrm{ad} ( P , \mathfrak{sp}(1) ) \simeq \Lambda ^0( P ^1 (\mathbb{H}  ), \mathrm{ad} (P) ),
 \end{equation} 
 where $ \mathfrak{ X} ^ \mathrm{v}  _ \mathrm{R} (P) $ denotes the vertical right invariant vector fields on $P$. The  isomorphism $ \mathfrak{ X} ^ \mathrm{v}  _ \mathrm{R} (P) \simeq  \Lambda ^0 _ \mathrm{ad} ( P, \mathfrak{ sp } (1)   )$ is given by (\ref{assvec}), the isomorphism between $\Lambda ^0 _ \mathrm{ad} ( P, \mathfrak{ sp }(1)   ) $ and $ \Lambda ^0 ( P ^1 (\mathbb{H}  ), \mathrm{ad}  (P) ) $ is analogous to (\ref{hf}).

As customary, we make the identification
\begin{equation}
\label{ttttt} 
T _{[ \omega ]} \MM _1   
=\{ \upsilon   \in \Lambda ^1 (P ^1 (\mathbb{H}  ), \mathrm{ad} (P)   ): P _{+}\circ \mathrm{d} _{\omega }\,   \upsilon =0, \,  \mathrm{d} ^\dagger _{\omega }  \upsilon =0\} ,
\end{equation} 
where $T _{\left[ \omega \right] } \MM _1 $ is the tangent space to $\MM _1  $ at an equivalence class of connections $ [ \omega ] $, $P _+ =(\mathrm{Id}  + *)/2  $ is the  projection operator onto the space of self-dual 2-forms and 
$\mathrm{d} ^\dagger_{\omega }  = - {*} \mathrm{d}_{\omega } {*} $
 is the formal adjoint of $\mathrm{d} _{\omega } $ with respect to $ \langle \cdot  , \cdot \rangle $.
 The condition  $P _{+}\circ \mathrm{d} _{  \omega  }\,  \upsilon  =0$ is simply the linearised version of anti self-duality. The condition $\mathrm{d} ^\dagger _{ \omega }  \upsilon = 0 $ encodes orthogonality to the (tangent space to) the gauge group orbits. For more details see e.g.~\cite{Groisser:1987uq,Naber:1500641}.  To obtain $T _{[\omega ]} \MMM $ we require both $\omega$ and $\upsilon$ to be circle invariant, that is $ \mathcal{L} _\xi \omega =0 =\mathcal{L} _\xi \upsilon $. 

The framed moduli space is locally a product, so its tangent space at $[ \omega ] $ is the direct sum of $T _{ [ \omega ] } \MMM  $ and of the tangent space $T _{ [ \omega ]} \f$  to the framed directions which we now describe. 
In a neighbourhood of $\mathrm{Id}_P $, an automorphism of $P$, viewed as an element of $\Lambda ^0 ( P ^1  (\mathbb{H}  ), \mathrm{Ad} (P)  ) $,  can be written in the form $\exp \L  $, $\L\in \Lambda ^0 (P ^1  (\mathbb{H}  ), \mathrm{ad} (P) )$. Its action on a connection $\omega$ is, to the linear level, $\omega \rightarrow \omega  + \mathrm{d} _\omega \Lambda $. Quotienting out by the action of $\mathcal{G} _0 $ we have
\begin{equation}
\label{nhgytre} 
T _{ [ \omega ]} \f  = \frac{ 
\mathrm{d} _\omega\Big( \Lambda ^0_ \mathrm{c}  (P ^1 (\mathbb{H}  ) , \mathrm{ad} (P) )\Big) }{ \mathrm{d}  _\omega \Big (   \Lambda ^0 _{ \mathrm{c} 0}  (P ^1 (\mathbb{H}  ) , \mathrm{ad} (P) )\Big)}.
\end{equation} 
Here  $\Lambda ^0 _\mathrm{c}   (P ^1 (\mathbb{H}  ) , \mathrm{ad} (P) ) $ is the subspace of $\Lambda ^0 ( P ^1  (\mathbb{H}  ), \mathrm{ad}(P) ) $ consisting of elements for which the associated vector field in $ \mathfrak{ X} ^ \mathrm{v}  _ \mathrm{R} (P) $ satisfies (\ref{bcond}), and $\Lambda ^0 _{\mathrm{c} 0}  (P ^1 (\mathbb{H}  ) , \mathrm{ad} (P) ) $ is the subset of $\Lambda ^0 _\mathrm{c}   (P ^1 (\mathbb{H}  ) , \mathrm{ad} (P) ) $ consisting of sections which vanish at  the framing point $x _0 $. The equivalence relation in (\ref{nhgytre}) is 
\begin{equation}
\mathrm{d} _\omega \Lambda _1 \sim \mathrm{d} _\omega \Lambda _2 \text{ if } \mathrm{d} _\omega (\Lambda _1 - \Lambda _2) \in \mathrm{d}  _\omega \Big (   \Lambda ^0 _{ \mathrm{c} 0}  (P ^1 (\mathbb{H}  ) , \mathrm{ad} (P) )\Big) .
\end{equation}
 However, a non-trivial anti self-dual  connection on $P ^1 (\mathbb{H}  )$ is irreducible and, for irreducible anti self-dual connections, $\mathrm{d} _\omega : \Lambda ^0 (P ^1 (\mathbb{H}  ), \mathrm{ad} (P) ) \rightarrow \Lambda ^1 (P ^1 (\mathbb{H}  ), \mathrm{ad} (P) ) $ is injective \cite{Freed:105021}. Therefore, equivalently,
\begin{equation}
\label{eccc}
T _{ [ \omega ]} \f 
= \left\{  
[\mathrm{d} _{ \omega }  \L] _0  \, :\,   \L\in \Lambda ^0 _ \mathrm{c} (P ^1 (\mathbb{H}  ) , \mathrm{ad} (P))
\right\},
\end{equation} 
where the equivalence relation $[ \cdot , \cdot ] _0 $ is  
\begin{equation}
\mathrm{d} _{\omega  } \Lambda _1 \sim  \mathrm{d} _{\omega   } \Lambda _2 \text{ if } (\Lambda _1- \Lambda _2 ) (x _0 ) =0.
\end{equation}

\subsection{The $L ^2 $ metric on $\MMMf$}
The  inner product $\langle \cdot , \cdot \rangle $ on $\Lambda ^1 (P ^1 (\mathbb{H}  ), \mathrm{ad} (P))$ given in \eqref{ip} is manifestly invariant under bundle automorphisms. This, and the identification of $T _{ [ \omega ] }\MMM$  with the $L ^2 $-orthogonal complement of the tangent space $ \mathrm{d} _\omega (\Lambda  ^0 ( P ^1 (\mathbb{H}  ), \mathrm{ad}(P) )$ to the orbit of $\mathcal{G}$  through $\omega $ imply that $\langle \cdot , \cdot \rangle $  descends to a metric on $\MMM$.
For $\upsilon  _1 , \upsilon  _2 \in T _{ [ \omega ] }\MMM$, the $L ^2 $ metric is therefore simply
\begin{equation}
\label{metric} 
\gmm( \upsilon  _1 , \upsilon  _2  ) 
=\langle \upsilon  _1 , \upsilon  _2  \rangle .
\end{equation}

In extending the metric to the framed directions, there is an important difference between the case of instantons over the non-compact space $\mathbb{H}  $ and over the compact space $P ^1 (\mathbb{H}  )$. Let $M$ denote either of these spaces. 
If $M =\mathbb{H}  $ we require elements of $ T _{[\omega ]} \MMMf $ to have finite $L ^2 $ norm. For $\mathrm{d} _\omega \Lambda \in T _{ [ \omega ] }\f $ this condition implies that  $\lim _{ |q| \rightarrow \infty }\Lambda (q)$ is finite and independent of the direction. For this reason, on $\mathbb{H}  $ we can take infinity as our framing point.

To extend the metric, we  need  to select representatives of elements in $T _{ [ \omega ]} \f  $ in a way compatible with (\ref{ip}). That is, if $ \mathrm{d} _\omega \bar\Lambda $ is such a representative, it needs to be orthogonal to any trivial element  in the same equivalence class,
\begin{equation}
\label{agsd} 
0 =\langle \mathrm{d}_ \omega \bar\Lambda,  \mathrm{d} _\omega \Gamma  \rangle =\langle\mathrm{d} ^\dagger _\omega \mathrm{d}_ \omega \bar\Lambda,   \Gamma  \rangle
\end{equation} 
for all $\mathrm{d}_ \omega \Gamma \in \mathrm{d} _\omega( \Lambda ^0 _{ \mathrm{c} 0 }(M , \mathrm{ad} (P) ) )$.
Therefore, we obtain the condition
\begin{equation}
\label{orthg1} 
\mathrm{d}  ^\dagger _{\omega } \mathrm{d} _{\omega }  \bar \Lambda  =0.
\end{equation} 

For $M =\mathbb{H}$, imposing (\ref{orthg1}), also known as Gauss's law,  results in the identification of $T _{ [ \omega ] }\f $ with
the space of  elements $ \mathrm{d} _{\omega } \L  $ which are circle invariant, have finite $L ^2 $ norm  and are orthogonal to all the infinitesimal bundle automorphisms  vanishing at infinity.  We can therefore extend the $L ^2 $ metric on $T _{ [ \omega ] } \MMM $ to a metric on $T _{ [ \omega ] } \MMMf $ in a well defined manner. Note that for $u\in T _{ [ \omega ] } \MMM $, $v\in T _{ [\omega]} \f $, $\gm(u,v)=0 $  as $u$ is orthogonal to the tangent space to the gauge group orbits while $v$ is parallel. Hence $\gm $ is a product metric.

For  $M =P ^1 (\mathbb{H}  )$,  the operator  $ \mathrm{d} ^\dagger _{\omega } $ restricted to the image of $\mathrm{d} _\omega $ is injective, so that  (\ref{orthg1})  only has the  trivial solution 
$\mathrm{d} _{ \omega } \bar \Lambda  =0 $ \cite{Naber:1500641}.
Therefore,  there is no way to select representatives of the equivalence classes in (\ref{eccc}) in a way which is compatible with $\langle \cdot , \cdot \rangle $. 
Instead, we can select a non-trivial element   to serve as a basis of $T _{ [ \omega ] }\f $, and express any other element as a multiple of it.
As discussed in Section \ref{cinst}, see in particular Proposition \ref{prop8}, the Higgs field $\hi  $ is a non-trivial infinitesimal bundle automorphism, and so  $ [\mathrm{d} _\omega \hi] _0  $ is a natural candidate for a  preferred element of $T _{ [ \omega ] }\f $.
For $v _1, v _2\in T _{ [ \omega ] }\f $,  $v _i =c _i [ \mathrm{d} _\omega \hi ] _0 $, $i =1, 2 $,  we thus define
\begin{equation}
\label{ldoiepu} 
\gm(v _1 , v _2 )
= c _1 c _2\,  || \mathrm{d} _\omega \hi|| ^2 .
\end{equation}   
For the same reasons as in the non-compact case, $\gm $ is then a product metric on $\MMM$.

While our definition of  the metric in the framed directions clearly depends on picking a gauge, our choice appears to be essentially canonical in the context of    circle invariant instantons. At least, we do not see any other way of satisfying all the requirements discussed in  Section \ref{cinst}.


\subsection{Computation of the metric}
We have a 4-parameters family of anti self-dual circle-invariant 1-instantons depending on the moduli $\lambda,\alpha , \beta  , \chi $. The relation between the real parameters $\alpha $, $\beta $ and the complex parameter $m$  is given by (\ref{difpar}).
Let  $\omega$ be a circle invariant 1-instanton, $ A = s ^\ast \omega $ for some section $s$. Since a gauge potential $A$ is enough to determine a connection $\omega$ on $P$, we will work with $A$ rather than $\omega$, and denote by 
 $ \phi _A $ the pullback $  s ^\ast \hi $ of the associated Higgs field $\hi =\omega (\xi) $. For $u$ one of the moduli, write $ \delta _u A$ for $  \mathrm{d} A/ \mathrm{d} u $.
Since $ A(u)$ is anti self-dual and-circle invariant for any value of $u$,  $\delta _u A$  is also anti self-dual and circle-invariant.  
If $u$ is one of the unframed  moduli $\lambda , \alpha , \beta  $, projection orthogonally to the orbits of $\mathcal{G} $  is obtained by 
replacing $ \delta _u A $ with $ \pg{ \delta _{ u }A} = \delta _u A  +  \mathrm{d} _A \psi $, where $ \psi\in \Lambda ^0( P ^1 (\mathbb{H}  ), \mathrm{ ad }(P) ) $ is chosen so that
\begin{equation}
\label{orthg} 
\mathrm{d} ^\dagger _A (\pg{\delta _u A } )=0 .
\end{equation} 

Let $\hat A = (  \hat\phi ^{-1} _N  )^\ast A  $, $\pg{\delta_u  \hat A}= (  \hat\phi ^{-1} _N  )^\ast\pg{{\delta_u  A}} $.
Using Equation (\ref{axcvd}) 
we have
\begin{equation}
(  \hat\phi ^{-1} _N  )^\ast  (\pg{{\delta _u A}} \wedge  *\,  \pg{{\delta _v A}} )
= \frac{R ^4 }{(R ^2 + |q| ^2  )^2 }\pg{\delta _u \hat A}\wedge \hat* \, \pg{\delta _v \hat A},
\end{equation} 
where $\hat * $ denotes the Hodge operator with respect to the metric $\gh =\mathrm{d} \bar q \,  \mathrm{d} q $ on $\mathbb{H}  $, and
\begin{equation}
\begin{split} 
\langle  \pg{{\delta _u A}}, \pg{{\delta _v A}} \rangle  &
= - \frac{1}{2} \int _{ P ^1 (\mathbb{H}  ) } \!\!\!\!\!\!\!\mathrm{Tr} \left( \pg{{\delta _u A}} \wedge  * \, \pg{{\delta _v A}} \right) 
= - \frac{1}{2}\int _{ \hat\phi_N ( P ^1  (\mathbb{H}  )   )}\!\!\!\!\!\!\!\!\!\!\!\!\!\!\!\!\!\! \mathrm{Tr}\left( \phi_N ^{-1} \right)  ^\ast  \left[\pg{{\delta_u  A}} \wedge  * \,  \pg{\delta_u  A} \right] \\ &
=  \int _{  \mathbb{H}    }\fp{\pg{\delta _u \hat A}}{\pg{\delta _v \hat A}} \frac{R ^4 }{(R ^2 + |q| ^2  )^2 }
 \, \mathrm{vol}_{ \mathbb{H}   }.
\end{split} 
\end{equation} 
Similarly 
\begin{equation}
\langle \dchia, \dchia \rangle 
=  \int _{  \mathbb{H}    }\fp{\dchia }{\dchia} \frac{R ^4 }{(R ^2 + |q| ^2  )^2 }
 \, \mathrm{vol}_{ \mathbb{H}   }.
\end{equation} 

\begin{prop}
\label{propmetric} 
The $L ^2 $ metric $\gm $ is of the form
 \begin{equation}
 \label{m2} 
\gm=
 f _R ( \lambda )\,  \mathrm{d} \lambda ^2  + g _R  (\lambda) R ^2 (\mathrm{d} \alpha  ^2 + \sin ^2 \alpha \, \mathrm{d} \beta ^2 ) + h_R  (\lambda)  \, \mathrm{d} \chi  ^2 .
\end{equation} 
\end{prop}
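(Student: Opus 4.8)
The plan is to deduce the diagonal form of $\gm$ from the geometric $SU(2)$ symmetry acting on the base $\MMM$ together with the product structure of $\gm$ established above. From that discussion, $\partial_\chi$ is tangent to the gauge orbits while $\partial_\lambda,\partial_\alpha,\partial_\beta$ are $L^2$-orthogonal to them, so $\gm=\gmm+h\,\mathrm{d}\chi^2$ with $\gmm$ the $L^2$ metric \eqref{metric} on the unframed space $\MMM$ and $h$ proportional, by \eqref{ldoiepu}, to $\|\mathrm{d}_\omega\hi\|^2$. It therefore suffices to prove (a) $h$ is a function of $\lambda$ alone, and (b) $\gmm=f_R(\lambda)\,\mathrm{d}\lambda^2+g_R(\lambda)R^2(\mathrm{d}\alpha^2+\sin^2\alpha\,\mathrm{d}\beta^2)$.

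Both (a) and (b) rest on the action of $\HH=\GG\cap Sp(2)\simeq SU(2)$ on $\MMM\simeq\GG/\HH$ by left multiplication. Since these elements lie in $Sp(2)$, the diffeomorphisms they induce on $P^1(\mathbb{H})$ are isometries of $\gph$ which lift to $P$, so by the manifest invariance of \eqref{ip} under such lifts they act on $(\MMM,\gmm)$ by isometries. The first thing I would do is identify this action in the chart $(\lambda,\alpha,\beta)$: using the decomposition $SL(2,\mathbb{C})=\bigcup_m g_m\,\AAA\,SU(2)$ and the $Sp(2)$-invariance of $\omr$, left multiplication by $u\in SU(2)$ sends $g_m a_\lambda\cdot\omr$ to $g_{m'}a_\lambda\cdot\omr$; the value of $\lambda$ is unchanged, because it is recovered from the conjugation-invariant eigenvalues $\{\lambda/R,R/\lambda\}$ of the positive Hermitian matrix $g_m a_\lambda^2 g_m^\dagger$, while $m\in\hat{\mathbb{C}}\simeq S^2$ is moved by the standard $SO(3)$ (Möbius) action. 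In the coordinates $(\alpha,\beta)$ of \eqref{difpar} this is exactly the rotation action on the round sphere $R^2(\mathrm{d}\alpha^2+\sin^2\alpha\,\mathrm{d}\beta^2)$, so the $SU(2)$ action on $\MMM$ is of cohomogeneity one with principal orbits the $2$-spheres $\{\lambda=\text{const}\}$ and a single fixed point $\lambda=R$.

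For (a): since $\hi=\omega(\xi)$ and every element of $\GG$ commutes with $\uu$ and hence preserves the generator $\xi$, under the $SU(2)$ action the Higgs field and $\mathrm{d}_\omega\hi$ transform covariantly (by geometric pullback); as the $L^2$ norm is $Sp(2)$-invariant, $\|\mathrm{d}_\omega\hi\|^2$ is constant on each $SU(2)$ orbit in $\MMM$, hence a function $h_R(\lambda)$ of $\lambda$ alone. For (b): at a point of a principal orbit the isotropy group $SO(2)\subset SO(3)$ acts on $T_{[\omega]}\MMM$ as the standard $2$-dimensional representation (tangent to the orbit) plus a trivial summand (spanned by $\partial_\lambda$). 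The space of $SO(2)$-invariant symmetric bilinear forms on this representation is two-dimensional, spanned by the round orbit metric and $\mathrm{d}\lambda^2$, with no invariant cross term; hence $\gmm=f(\lambda)\,\mathrm{d}\lambda^2+g(\lambda)R^2(\mathrm{d}\alpha^2+\sin^2\alpha\,\mathrm{d}\beta^2)$, where the coefficients, being $SU(2)$-invariant functions on $\MMM$, depend only on $\lambda$; renaming them $f_R,g_R$ and combining with (a) gives \eqref{m2}.

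The step I expect to be the main obstacle is the precise identification of the $SU(2)$ action in the chart $(\lambda,\alpha,\beta)$ in the second paragraph — that it fixes $\lambda$ and acts on the $(\alpha,\beta)$-sphere as the standard rotation group rather than by some twisted representation — which relies on the double-coset decomposition $SL(2,\mathbb{C})=\bigcup_m g_m\,\AAA\,SU(2)$ and the $Sp(2)$-invariance of $\omr$. Once this is in place, the product structure of $\gm$ recorded above and elementary representation theory of $SO(2)\subset SO(3)$ finish the proof with no further computation; in particular the $SO(3)\times U(1)$ symmetry of $\gm$ asserted in the introduction is manifest in the resulting expression.
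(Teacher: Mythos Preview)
Your argument is correct and follows essentially the same route as the paper: both proofs exploit the $SU(2)\simeq\GG\cap Sp(2)$ action by isometries, identify that it fixes $\lambda$ and acts as the standard $SO(3)$ on the $(\alpha,\beta)$-sphere, and deduce the diagonal form from this symmetry. Your presentation is slightly cleaner in that you invoke the product structure $\gm=\gmm+h\,\mathrm{d}\chi^2$ established just before the proposition (so you need not track how $\chi$ transforms under space rotations) and you make the cohomogeneity-one isotropy argument explicit, whereas the paper instead writes down the M\"obius action on $m$ and checks directly that $\chi$ is invariant under both space and phase rotations; but these are differences of emphasis rather than of method.
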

The factor of $R ^2 $ has been inserted for future convenience.
\begin{proof} 
The metric $\gm $  has an $SO (3) \times U (1)  $ isometry group coming from space rotations and non-trivial (differing from the identity at $[R,0] $) bundle automorphisms which rotate the $U (1) $ phase --- let us call the latter phase rotations.
Invariance of  $\gm $ under the $U (1) $ factor comes directly from its invariance under bundle automorphisms.
Space rotations are generated by the left action on $S ^7 _{ \sqrt{  R }} $ of the group $\mathcal{N} \cap Sp (2) $  given by  (\ref{wash}). Invariance of $\gm $  under space rotations follows since the induced action on $P ^1 (\mathbb{H}  ) $ is an isometry and  tangent vectors to $\MMMf $ transform by pullback. 

Let us consider the action of space and phase rotations on a gauge potential $A _{ \boldsymbol{ \lambda }} $ with unframed moduli $\boldsymbol{ \lambda } $ and associated Higgs field $\phi_{A _{ \boldsymbol{ \lambda }} }$. Up to equivalence, phase rotations are of the form  $\exp( R \phi_{A _{ \boldsymbol{ \lambda }} } \,  \chi  )$ with $ \chi  \in [0, 2 \pi )$ the modulus in the framed direction.
Under phase rotations  $ \boldsymbol \lambda $ is clearly  invariant. The Higgs field gets conjugated but its asymptotic value, and hence $\chi $, stays unchanged.

Consider now space rotations. The action of an element 
\begin{equation}
\HH\ni h =\begin{pmatrix}
a & b\\- \bar b  &\bar a
\end{pmatrix}, \quad a,b\in \mathbb{C},  |a| ^2 + |b| ^2 =1
\end{equation} 
on a point $( \lambda, m\in \hat{\mathbb{C}  }) $ of the moduli space leaves $\lambda$ invariant and acts on $m$ as a rotation, 
\begin{equation}
m \xmapsto{h} \frac{R (am + Rb )}{R \bar a  -   \bar b m } = (\hat \phi _N \circ \rho_ {h } \circ \hat \phi _N  ^{-1} ) (m) ,
\end{equation} 
hence $\boldsymbol{ \lambda } $ transforms as a vector.
The modulus $\chi$ is  the ratio between the asymptotic norms of an infinitesimal bundle automorphism $\Lambda$ and of $R \, \phi _{A _{ \boldsymbol{ \lambda }} }$. As both $\Lambda$ and $\phi_{A _{ \boldsymbol{ \lambda }} }$ transform by pullback,  $ \chi$ is invariant under space rotations.

\end{proof} 

Let us now compute the unknown functions $f _R , g _R , h _R  $ appearing in (\ref{m2}).
\begin{lem}
\label{lf} 
The function $f _R $ is given by 
\begin{equation}
\label{fl} 
f _R  (\lambda) 
= \frac{4 \pi ^2  R ^4 }{(\lambda  ^2 -R ^2  )^4 } \left[\lambda ^4 + 10 \lambda ^2 R ^2  +R ^4 
 - 12 R ^2  \lambda ^2\left( \frac{  \lambda ^2 +R ^2  }{\lambda  ^2 - R ^2 } \right) \log \left( \frac{\lambda}{R} \right )\right] .
\end{equation} 
\end{lem}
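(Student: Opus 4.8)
The plan is to compute $f_R(\lambda)$ directly by evaluating the $L^2$ norm of the tangent vector $\partial_\lambda A$ after projecting it orthogonally to the gauge orbit. First I would work with the family $\Al = \Im(\bar q\,\mathrm{d}q)/(\lambda^2+|q|^2)$ from \eqref{al}, which is the $\alpha=\beta=0$ slice, since by the $SO(3)$ symmetry established in Proposition \ref{propmetric} the coefficient $f_R$ can be read off along this slice. Differentiating gives the naive variation $\delta_\lambda \Al = -2\lambda\,\Im(\bar q\,\mathrm{d}q)/(\lambda^2+|q|^2)^2$. The key point is that this variation may not be coclosed, so I must find $\psi\in\Lambda^0(P^1(\mathbb{H}),\mathrm{ad}(P))$ solving the gauge-fixing condition \eqref{orthg}, i.e. $\mathrm{d}^\dagger_A(\delta_\lambda A + \mathrm{d}_A\psi)=0$, and then compute $\|\delta_\lambda A + \mathrm{d}_A\psi\|^2$ via the reduction to an integral over $\mathbb{H}$ with the conformal weight $R^4/(R^2+|q|^2)^2$.

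The concrete steps are: (i) write everything in terms of $\hat A$ on $\mathbb{H}$ using the section $s$ of \eqref{ours}, so the inner product becomes $\int_{\mathbb{H}} |\cdot|^2_{\mathbb{H}}\, R^4/(R^2+|q|^2)^2\,\mathrm{vol}_{\mathbb{H}}$; (ii) make an ansatz for $\psi$ — by symmetry it should be a function of $|q|$ times a fixed Lie algebra direction adapted to $\Al$, e.g. $\psi = \eta(|q|)\,\Im(\bar q\, \text{something})$ or more simply of the form built from the quaternion $q$ that makes $\mathrm{d}_A\psi$ lie in the same tensorial sector as $\delta_\lambda A$; (iii) reduce the equation $\mathrm{d}^\dagger_A(\delta_\lambda A + \mathrm{d}_A\psi)=0$ to an ODE in $r=|q|$ and solve it with the boundary conditions dictated by finiteness of the $L^2$ norm and by the framing convention at $[R,0]$; (iv) substitute back and carry out the radial integral. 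The appearance of $\log(\lambda/R)$ and the rational prefactor with $(\lambda^2-R^2)^4$ in the denominator of \eqref{fl} strongly suggests that the radial integral produces an elementary antiderivative involving $\log$ terms, consistent with integrating rational functions of $r^2$ against $(\lambda^2+r^2)^{-k}(R^2+r^2)^{-2}$ — so after partial fractions everything is explicit.

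An alternative, possibly cleaner route — which I would pursue in parallel — is to exploit that $\MM_1$ carries a known metric (Groisser--Habermann \cite{Groisser:1987uq,Habermann:1988wv}) and that $\partial_\lambda$ is tangent to $\MMM\subset\MM_1$; one could then either restrict the ambient metric or redo the computation using the hyperbolic-monopole description from Section \ref{monopoles}, where $\delta_\lambda$ acts on the pair $(\Aaa,\ph_\lambda)$ of \eqref{a3},\eqref{higgs} and the Bogomolny equation \eqref{mon} gives extra structure (in particular the variation of a BPS configuration automatically satisfies a linearised equation, simplifying the projection). However, since the metric we want is the $L^2$ metric on $P^1(\mathbb{H})$ with the conformal weight, not the (divergent) hyperbolic-monopole metric, the honest route is the direct computation on $P^1(\mathbb{H})$.

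The main obstacle will be step (iii): solving the gauge-fixing ODE for $\psi$ with the correct boundary behaviour. The operator $\mathrm{d}^\dagger_A\mathrm{d}_A$ on $\Lambda^0$ is a second-order elliptic operator whose coefficients involve $\Al$ and the conformal factor, and one must be careful that the chosen solution $\psi$ does not itself shift the framing modulus $\chi$ — i.e. $\psi$ must vanish (or have the right asymptotics) so that $\mathrm{d}_A\psi$ is genuinely a gauge direction and not a framed direction. Getting the homogeneous-vs-particular solution bookkeeping right, and then not dropping any boundary terms when integrating by parts to pass from $\|\delta_\lambda A + \mathrm{d}_A\psi\|^2$ to the final radial integral, is where the care is needed; the remaining algebra and the radial integral are routine once $\psi$ is in hand.
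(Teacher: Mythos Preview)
Your overall setup is correct, but you have misidentified the main obstacle. The variation
\[
\delta_\lambda \Al = -\,\frac{2\lambda\,\Im(\bar q\,\mathrm{d} q)}{(\lambda^2+|q|^2)^2}
\]
is \emph{already} orthogonal to the gauge orbits; no $\psi$ is needed. Writing $\Im(\bar q\,\mathrm{d} q)=\gamma_k\,\mathrm{d} x^k$, one has $\partial_k\gamma_k=0$ and $x^k\gamma_k=0$ (the transverse-gauge property of $\Al$), so both $\partial_k(\delta_\lambda \Al)_k$ and $[\Al(\partial_k),(\delta_\lambda \Al)_k]$ vanish, giving $\hat{\mathrm{d}}^\dagger_{\Al}\delta_\lambda\Al=0$ with respect to the flat metric. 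Passing to the round metric via \eqref{astrf} only inserts the conformal factor $R^4/(R^2+|q|^2)^2$ under $\hat*\,\mathrm{d}_{\Al}\hat*$; the extra term this generates is proportional to $x^k\gamma_k$ and hence also vanishes. Thus $\mathrm{d}^\dagger_{A_\lambda}\delta_\lambda A_\lambda=0$ on $P^1(\mathbb{H})$ as well, and your step~(iii) --- which you flag as the hard part --- is vacuous: the correct $\psi$ is identically zero, and all your concerns about boundary behaviour and interaction with the framing modulus disappear.

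With this observation the proof collapses to a single elementary integral:
\[
f_R(\lambda)=\int_{\mathbb{H}}\frac{12\lambda^2|q|^2}{(\lambda^2+|q|^2)^4}\,\frac{R^4}{(R^2+|q|^2)^2}\,\mathrm{vol}_{\mathbb{H}}
=24\pi^2\lambda^2\int_0^\infty\frac{R^4 r^5\,\mathrm{d} r}{(\lambda^2+r^2)^4(R^2+r^2)^2},
\]
which after partial fractions gives \eqref{fl}. This is exactly what the paper does. Your alternative routes (restricting the Groisser--Habermann metric, or the monopole reformulation) are unnecessary detours for the $\lambda$-direction; the gauge-fixing difficulty you anticipate is real for the $\alpha$-variation in Lemma~\ref{lg}, but not here.
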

\begin{proof} 
Consider the variation of the gauge potential  $\Al$, see Equation (\ref{al}),  with respect to the collective coordinate $\lambda$,
\begin{equation}
\delta _\lambda \Al
=  \frac{\mathrm{d} \Al}{\mathrm{d} \lambda }
= - \frac{2 \lambda \,  \Im (\bar q \, \mathrm{d} q)}{\left( |q |^2 + \lambda ^2  \right) ^2 }.
\end{equation} 
It already satisfies the orthogonality condition (\ref{orthg}) 
\begin{equation}
\begin{split}
\mathrm{d} ^\dagger _{\Al} \delta _\lambda \Al &
 = - \frac{2 \lambda }{(\lambda ^2 + |q| ^2) ^4} 2 x ^i  \Al (\partial _i )
  -  \frac{2 \lambda }{(\lambda ^2 + |q| ^2) ^2}  \partial _i (\Al (\partial _i ))   = 0
 \end{split}
 \end{equation} 
 as $\Al $ is in a transverse gauge and satisfies $ \partial _i (\Al (\partial _i ))   = 0$.
Its squared norm  is
\begin{equation} 
\begin{split} 
| \delta _\lambda \Al |_{ \mathbb{H}  } ^2&
=  - \frac{1}{2} \mathrm{Tr}   \fp{\delta _\lambda \Al}{\delta _\lambda \Al}  
= \frac{4 \lambda ^2  }{\left(  |q |^2  + \lambda ^2  \right) ^2 } |\Im (\bar q \, \mathrm{d} q ) | _{ \mathbb{H}  }^2   
= \frac{12\lambda ^2 |q |^2  }{\left( |q |^2 + \lambda ^2  \right) ^4} . 
\end{split} 
\end{equation} 
Multiplying by the conformal factor $R ^4 ( R ^2 + |q |^2  )^{-2}  $ and integrating over $\mathbb{H}  $ yields
\begin{equation}
f _R  (\lambda) 
=|| \delta _\lambda  \All || ^2 
= 12\lambda ^2   \int _{\mathbb{H}}  \frac{ |q |^2  }{\left( |q |^2 + \lambda ^2   \right) ^4} \frac{ R ^4 }{( R ^2 + |q |^2 )^2 } \, \mathrm{vol } _{\mathbb{H}  }.
\end{equation} 
To compute the integral switch to spherical coordinates
\begin{equation}
\begin{split} 
x& =r \sin \kappa \sin \theta \cos \phi,\quad 
y =r \sin \kappa \sin \theta \sin \phi ,\quad 
z = r \sin \kappa \cos \theta ,\quad 
w  =r \cos \kappa,
\end{split} 
\end{equation} 
with $r =|q |\in[0, \infty ) $, $\theta, \kappa \in[0, \pi ]$, $\phi \in[0, 2 \pi )$, and volume element
\begin{equation}
\mathrm{vol}  _{\mathbb{H} } =\mathrm{d} x \wedge \mathrm{d} y \wedge \mathrm{d} z \wedge \mathrm{d} w
= r ^3 \sin ^2 \kappa \sin \theta \,  \mathrm{d} r \wedge \mathrm{d} \kappa \wedge \mathrm{d} \theta \wedge \mathrm{d} \phi.
\end{equation} 
The angular integral contributes a factor $2 \pi ^2 $, the volume of  $S ^3$, therefore
\begin{equation}
\begin{split} 
f _R  (\lambda) &
=24 \pi ^2  \lambda ^2  \int _0 ^{ \infty }\frac{ R ^4 \, r ^5 }{( r ^2 + \lambda ^2   )^4 (R ^2  + r ^2 )^2 } \mathrm{d} r \\ &
= \frac{4 \pi ^2  R ^4 }{(\lambda  ^2 -R ^2  )^4 } \left[\lambda ^4 + 10 \lambda ^2 R ^2  +R ^4 
 - 12 R ^2  \lambda ^2\left( \frac{  \lambda ^2 +R ^2  }{\lambda  ^2 - R ^2 } \right) \log \left( \frac{\lambda}{R} \right )\right] . 
\end{split} 
\end{equation} 

\end{proof} 


\begin{lem}
\label{lg} 
The function $g _R $ is given by 
\begin{equation}
g_R (\lambda) 
=  \frac{\pi ^2  }{2( \lambda  ^2 - R ^2 ) ^2 }  \left[ 
\lambda  ^4 - 8 \lambda  ^2  R ^2  + R ^4  
+ \frac{24 \lambda ^4 R^4 }{\lambda ^4 - R ^4 } \log\left(\frac{\lambda }{R }\right)\right] .
\end{equation} 
\end{lem}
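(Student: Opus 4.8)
The plan is to compute $g_R(\lambda)$ by the same strategy used for $f_R(\lambda)$ in Lemma \ref{lf}: pick a convenient representative connection in the relevant $SO(3)$-orbit, differentiate with respect to one of the angular moduli, project orthogonally onto the gauge slice, compute the pointwise $\mathbb{H}$-norm of the projected variation, multiply by the conformal factor $R^4/(R^2+|q|^2)^2$, and integrate over $\mathbb{H}$. Because of the $SO(3)$ symmetry established in Proposition \ref{propmetric}, it suffices to evaluate the $\alpha\beta$-block of the metric at the point $\alpha=\beta=0$, i.e. $m=0$, where the instanton is the particularly simple $\Al$ of \eqref{al}; there the coefficient of $R^2\,\mathrm{d}\alpha^2$ is $g_R(\lambda)$.

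First I would compute the variation $\delta_\alpha (g_{\alpha,\beta}\,a_\lambda\cdot\omr)$ at $\alpha=\beta=0$. Using \eqref{ppp2} (or equivalently \eqref{atrgen} with $g^{-1}=g_m a_\lambda$ and $m=\sin(\alpha/2)e^{\qi\beta}\cdot$(normalisation) from \eqref{difpar}), differentiating in $\alpha$ at $\alpha=0$ brings down a term linear in $\bar m\,\mathrm{d}q$ and $\bar q\,\mathrm{d}q$; concretely $\partial_\alpha m|_{\alpha=0}$ is a fixed unit quaternion direction (a complex one), so $\delta_\alpha\Al$ will be of the schematic form $\big[(\lambda^2/R^2-1)\,\Im(\bar c\,\mathrm{d}q) + (\text{coefficient})\,\Im(\bar q\,\mathrm{d}q)\big]/(\lambda^2+|q|^2)$ plus lower-order pieces from differentiating the denominator. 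Next I would check whether this raw variation already satisfies the transversality condition $\mathrm{d}^\dagger_{\Al}(\delta_\alpha\Al)=0$; in general it will not, unlike the $\lambda$ case, so I would have to solve $\mathrm{d}^\dagger_{\Al}\mathrm{d}_{\Al}\psi = -\mathrm{d}^\dagger_{\Al}(\delta_\alpha\Al)$ for a compensating $\psi\in\Lambda^0(P^1(\mathbb{H}),\mathrm{ad}(P))$ and replace $\delta_\alpha\Al$ by $\pg{\delta_\alpha\Al}=\delta_\alpha\Al+\mathrm{d}_{\Al}\psi$. I expect $\psi$ to be a scalar function times a fixed Lie-algebra direction, so that this reduces to an ODE in $r=|q|$ solvable in closed form; this is the step I anticipate will be the main obstacle, both in guessing the right ansatz for $\psi$ and in keeping the bookkeeping of the quaternionic algebra under control.

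Once $\pg{\delta_\alpha\Al}$ is in hand, I would compute $|\pg{\delta_\alpha\Al}|^2_{\mathbb{H}}$; by rotational symmetry in the relevant sense this should again depend only on $r=|q|$ and $\lambda$, with a rational dependence on $r^2$. Then
\begin{equation}
g_R(\lambda) = \int_{\mathbb{H}} |\pg{\delta_\alpha\Al}|^2_{\mathbb{H}}\,\frac{R^4}{(R^2+|q|^2)^2}\,\mathrm{vol}_{\mathbb{H}},
\end{equation}
and switching to the same spherical coordinates as in Lemma \ref{lf} the angular part contributes $2\pi^2$ and leaves a one-dimensional integral $\int_0^\infty (\text{rational in }r^2)\,r^3\,\mathrm{d}r$. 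This is evaluated by partial fractions with repeated poles at $r^2=-\lambda^2$ and simple poles at $r^2=-R^2$; the repeated pole at $-\lambda^2$ is responsible for the $1/(\lambda^2-R^2)^2$ prefactor and the $\log(\lambda/R)$ term, exactly as in $f_R$. Collecting terms and simplifying should yield the stated closed form
\begin{equation*}
g_R(\lambda) = \frac{\pi^2}{2(\lambda^2-R^2)^2}\left[\lambda^4 - 8\lambda^2 R^2 + R^4 + \frac{24\lambda^4 R^4}{\lambda^4-R^4}\log\!\left(\frac{\lambda}{R}\right)\right].
\end{equation*}
As consistency checks I would verify that $g_R(\lambda)$ is finite and positive as $\lambda\to R$ (the apparent pole must cancel, giving a finite value — a good check on the algebra), and that in the limit $R\to\infty$ with $\lambda=\nu$ fixed it reproduces the known Euclidean result.
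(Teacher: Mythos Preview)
Your overall strategy is the same as the paper's, but two of your structural expectations are wrong and would derail the computation if taken literally.

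First, the compensating gauge parameter $\psi$ is \emph{not} a scalar function of $r=|q|$ times a fixed Lie-algebra direction. Computing $\mathrm{d}^\dagger_{\Al}(\delta_\alpha\Ala)$ one finds it proportional to $\gamma_1 = -(y\qi+z\qj+w\qk)$, a position-dependent $\mathfrak{sp}(1)$-valued function; the correct ansatz is $\psi = h(|q|^2)\,\gamma_1$. With this ansatz the covariant Laplacian does close on the $\gamma_1$-sector and one gets an ODE for $h$, but the ansatz you wrote would not. The paper in fact bypasses computing $\mathrm{d}_{\Al}\psi$ altogether by using
\[
\|\pg{(\delta_\alpha\Alaa)}\|^2 = \|\delta_\alpha\Alaa\|^2 - \langle \mathrm{d}^\dagger_{\All}\delta_\alpha\Alaa,\; G_{\All}\,\mathrm{d}^\dagger_{\All}\delta_\alpha\Alaa\rangle,
\]
and then inverting the Laplacian on $\gamma_1/(\lambda^2+|q|^2)$ directly (Appendix~\ref{projjj}); this is cleaner than forming $\pg{(\delta_\alpha\Ala)}$ explicitly.

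Second, the pointwise norm $|\delta_\alpha\Ala|^2_{\mathbb{H}}$ is \emph{not} a function of $r$ alone: it depends on $x^2$ as well (one finds a factor $(\lambda^2+|q|^2)^2-4x^2\lambda^2$), and the correction term involves $|\gamma_1|^2=y^2+z^2+w^2$. So the angular integral is not simply $2\pi^2$ times a radial integrand; you must first average $x^2$ (or $y^2+z^2+w^2$) over $S^3$. Relatedly, the poles at $r^2=-R^2$ are not simple: the raw term has $(R^2+r^2)^2$ and the correction term $(R^2+r^2)^3$ in the denominator.

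One minor point on your consistency check: $g_R(\lambda)\to 0$ as $\lambda\to R$, not to a positive value (see \eqref{aaaab2}); the $SO(3)$ orbit degenerates at the centre of the moduli space.
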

\begin{proof} 
Because of spherical symmetry, it is enough to consider the variation with respect to e.g.~$ \alpha $. Let $\hat A _{ \lambda , \alpha }  =g _{\alpha,0} a _{\lambda} \cdot \Ar $.
Using (\ref{atrgen}) we get
\begin{equation}
\hat A _{\lambda, \alpha }
= \frac{\Im \left[
\frac{R}{2} \left(  \frac{\lambda ^2}{ R ^2 } - 1 \right)  \sin \alpha\, \mathrm{d} q 
 + \left(\cos ^2  \frac{\alpha}{2} +  \frac{\lambda ^2 }{R ^2 }\sin ^2 \frac{\alpha}{2} \right) \bar q\, \mathrm{d} q 
 \right] }{\cos ^2  \frac{\alpha}{2} |q| ^2 + R ^2 \sin ^2 \frac{\alpha}{2}  - R\, x \sin \alpha   + \frac{\lambda ^2 }{R ^2 }  \left(R ^2  \cos ^2 \frac{\alpha }{2} + 
\sin ^2\frac{\alpha }{2} |q |^2 + R \, x \sin \alpha \right)  }.
\end{equation} 
Its variation with respect to $\alpha$, evaluated for $\alpha =0 $, is
\begin{equation}
\begin{split} 
\label{oooopl} 
{\delta _\alpha  \hat A_{\lambda  , \alpha } } &
= \left. \frac{\mathrm{d} \hat A _{ \lambda ,\alpha} }{ \mathrm{d} \alpha}  \right | _{ \alpha =0 } \!\!\!\!
=\frac{1}{2R}  \frac{\lambda ^2 - R ^2 }{ (\lambda ^2  + |q |^2 ) ^2  }  
 \left[ ( \lambda ^2 + |q |^2) \Im( \mathrm{d} q) - 2x\,  \Im\left(   \bar q \, \mathrm{d} q \right) \right] \\ &
 =\frac{1}{2R}  \frac{\lambda ^2 - R ^2 }{ (\lambda ^2  + |q |^2 ) ^2  }  \cdot \\ 
\Big[ &\left[ 
 2x(y \, \mathrm{d} x - w \, \mathrm{d} z + z \, \mathrm{d} w) + (\lambda ^2 + y ^2 + z ^2 + w ^2 - x ^2 ) \mathrm{d} y)
 \right] \qi\\ 
 +&\left [ 2x(z \, \mathrm{d} x + w \, \mathrm{d} y - y \, \mathrm{d} w) + (\lambda ^2 + y ^2 + z ^2 + w ^2 - x ^2) \mathrm{d} z \right ]\qj
\\ 
 +  &\left[2x(w \, \mathrm{d} x -z \, \mathrm{d} y + y \, \mathrm{d} z) + (\lambda ^2 + y ^2 + z ^2 + w ^2 - x ^2) \mathrm{d} w  \right] \qk \Big] . 
 \end{split} 
\end{equation} 

We now need to project $ \delta  _\alpha  \Alaa$ orthogonally to the gauge group orbits. Here we partially follow \cite{Habermann:1988wv}.
First notice that, since $\All $ is irreducible, the Laplacian $ \mathrm{d} ^\dagger _{  \All } \mathrm{d} _{\All }: \Lambda ^0  (P ^1 (\mathbb{H}  ), \mathrm{ad} (P)   ) \rightarrow \Lambda ^0  (P ^1 (\mathbb{H}  ), \mathrm{ad} (P)   ) $ is invertible. We denote by $G_{ \All} $ its inverse. Then
\begin{equation}
\label{cffp} 
\pg{(\delta _\alpha \Alaa)} 
= { \delta  _\alpha   \Alaa} - \mathrm{d} _{ \All } G_{ \All} \mathrm{d} ^\dagger _{ \All }{ \delta  _\alpha  \Alaa},
\end{equation} 
and
\begin{equation}
\label{prprpr} 
\begin{split} 
g_R (\lambda) R ^2  &=
|| \pg{(\delta _\alpha   \Alaa)} || ^2
=|| \delta  _\alpha  \Alaa || ^2  
- 2 \langle  \delta  _\alpha  \Alaa, \mathrm{d} _{  \All } G_{ \All} \mathrm{d} ^\dagger _{  \All } { \delta  _\alpha  \Alaa} \rangle \\ &
+ \langle \mathrm{d} _{\All} G_{ \All} \mathrm{d} ^\dagger _{\All } { \delta  _\alpha  \Alaa},\mathrm{d} _{\All } G_{ \All} \mathrm{d} ^\dagger _{\All } { \delta  _\alpha  \Alaa} \rangle \\ &
= || \delta  _\alpha  \Alaa || ^2
- 2 \langle  \mathrm{d} ^\dagger _{\All}\delta  _\alpha \Alaa,   G_{ \All} \mathrm{d} ^\dagger _{\All} { \delta  _\alpha \Alaa} \rangle 
+ \langle \mathrm{d} ^\dagger _{\All} { \delta  _\alpha  \Alaa}, G_{ \All} \mathrm{d} ^\dagger _{\All} { \delta  _\alpha  \Alaa} \rangle\\ &
=|| \delta  _\alpha  \Alaa || ^2
-  \langle  \mathrm{d} ^\dagger _{\All}\delta  _\alpha  \Alaa,   G_{ \All} \mathrm{d} ^\dagger _{\All} { \delta  _\alpha \Alaa}\rangle. 
\end{split} 
\end{equation} 
From (\ref{oooopl}) we have
\begin{equation}
\begin{split}
|{ \delta _\alpha  \Ala } | ^2  _{ \mathbb{H}  }
&= \frac{1}{4 R ^2 }  \left(  \frac{\lambda ^2 - R ^2 }{ (\lambda ^2  + |q |^2 ) ^2  } \right)^2  \cdot  \\ &
\left[4 x ^2  |\Im (\bar q \, \mathrm{d} q) |^2_{ \mathbb{H}  } +  (\lambda ^2 + |q| ^2  )^2 | \Im( \mathrm{d} q )|^2 _{ \mathbb{H}  }-4x ( \lambda ^2 + |q| ^2  )
\fp{  \Im\left(   \bar q \,\mathrm{d} q \right)}{ \Im( \mathrm{d} q)   } \right] \\ &
= \frac{1}{4 R ^2 }   \frac{(\lambda ^2 - R ^2) ^2  }{ (\lambda ^2  + |q |^2 ) ^4  }
\left( 12  x ^2 |q| ^2 + 3 (\lambda ^2  + |q| ^2  )^2 - 12 x ^2 (\lambda ^2 + |q| ^2 ) \right) \\ &
= \frac{3}{4 R ^2 }   \frac{(\lambda ^2 - R ^2) ^2  }{ (\lambda ^2  + |q |^2 ) ^4  }
\left(  (\lambda ^2  + |q| ^2  )^2 - 4 x ^2 \lambda ^2  \right) . 
\end{split} 
\end{equation} 
Multiplying by the conformal factor $ R ^4 /( |q| ^2 + R ^2 )^2 $ and integrating we get
\begin{equation}
\label{nmnmnm} 
\begin{split} 
&|| \delta _\alpha  \Alaa || ^2  
=\frac{3}{2}  \pi ^2R ^2   (\lambda ^2 - R ^2  )^2  
\int _0 ^\infty \left( \frac{\lambda ^4 + \lambda ^2 r ^2 + r ^4  }{ (\lambda ^2 + r ^2 )^4 } 
 \frac{r ^3 }{( R ^2 + r ^2 )^2 } \right)  \, \mathrm{d} r\\ &
 =\frac{\pi ^2 R ^2}{4( \lambda  ^2 - R ^2 )^2} 
 \left[  - 6\left( \frac{R ^2 + \lambda ^2 }{R ^2 - \lambda ^2 } \right) (R ^4 + \lambda ^4 ) \log \left(  \frac{\lambda }{R} \right)-7 R ^4  + 2 R ^2 \lambda ^2 -7 \lambda ^4 
  \right] .
\end{split} 
\end{equation} 
By using (\ref{astrf}) we have
\begin{equation}
\begin{split} 
\label{gper} 
(\hat \phi _N ^{-1} )^\ast \left( \mathrm{d}  ^\dagger _{ \All  } {\delta _\alpha  \Alaa}    \right) &
= - (\hat \phi _N ^{-1} )^\ast \left(  *  \mathrm{d} _{ \All}  *{\delta _ \alpha   \Alaa}    \right) \\ &
=  \frac{(R ^2  + |q| ^2 )^4 }{R ^8 }  \mathrm{d} _{\Al} ^\dagger   \left( \frac{R ^4 }{(R ^2  + |q| ^2 ) ^2 }  \delta_\alpha  \Ala \right) .
\end{split} 
\end{equation} 
This quantity is  computed in Appendix \ref{projjj}, see Equation (\ref{jhgjfk}). The result is
\begin{equation} 
\label{kk1} 
(\hat \phi _N ^{-1} )^\ast \mathrm{d} _{  \All } ^\dagger  {\delta _\alpha   \Alaa } 
= - \frac{2}{ R ^5 }\frac{( \lambda  ^2 - R ^2  ) ^2  (R ^2  + |q| ^2)}{(\lambda ^2 + |q| ^2  )^2  }   \, \gamma _1,
\end{equation} 
with $\gamma _1 =-(y\qi + z\qj + w\qk )$.
The quantity $(\hat \phi _N ^{-1} ) ^\ast \left[  G_{  \All }  \mathrm{d} _{  \All } ^\dagger  { \delta _\alpha  \Alaa }  \right] $ is also computed in Appendix 
\ref{projjj}, Equation (\ref{ggggf}),
\begin{equation}
\label{kk2} 
(\hat \phi _N ^{-1} ) ^\ast \left[  G_{  \All }  \mathrm{d} _{  \All } ^\dagger  { \delta _\alpha  \Alaa }  \right] 
= - \frac{1}{2R}   \frac {(\lambda  ^2 - R ^2 )^2 }{ (\lambda ^2 + R ^2 ) (  \lambda  ^2 + |q| ^2 ) }  \, \gamma _1 .
\end{equation} 
By using (\ref{kk1}), (\ref{kk2}) and (\ref{axcvd})   we have
\begin{equation}
\begin{split}
& - \frac{1}{2}  \mathrm{Tr} \left[ 
 (\hat \phi _N ^{-1} )^\ast  (G_{ \All }  \mathrm{d} _{ \All } ^\dagger   \delta _\alpha \Alaa)
 \wedge (\hat \phi _N ^{-1} )^\ast * \left( \mathrm{d} _{ \All } ^\dagger \delta _\alpha \Alaa  \right) \right] \\&
 = R ^2 \frac {(\lambda  ^2 - R ^2) ^4}{ \lambda ^2 + R ^2}\frac{ y ^2 + z ^2 + w ^2}{(\lambda ^2 + |q| ^2  )^3(R ^2  + |q| ^2) ^3 } \mathrm{vol} _{ \mathbb{H}  } .
\end{split}
\end{equation} 
Integrating we get
\begin{equation}
\label{ppprod} 
\begin{split}
&
\left\langle   G_{ \All }  \mathrm{d} _{ \All } ^\dagger  \delta _\alpha \Alaa,
 \mathrm{d} _{ \All } ^\dagger   \delta _\alpha \Alaa \right\rangle 
 =R ^2 \frac {(\lambda  ^2 - R ^2) ^4 }{ \lambda ^2 + R ^2} \cdot 2 \pi ^2 
\int _0 ^\infty \frac{3}{4} r ^2 \frac{r ^3\, \mathrm{d} r}{ (\lambda ^2 + r ^2  )^3(R ^2  + r ^2) ^3}  \\ &
= \frac{3}{2} \pi ^2 R ^2  \frac {(\lambda  ^2 - R ^2) ^4 }{\lambda ^2 + R ^2 } 
\int _0 ^\infty   \frac{r ^5 \, \mathrm{d} r}{(R ^2  + r ^2)^3 (\lambda ^2 + r ^2  )^3}  \\ &
=\frac{3}{4} \pi ^2 R^2  \left[ 
 - 3 +2 \frac{  \left(\lambda ^4+4 \lambda ^2 R^2 +R^4\right) } { \left(\lambda ^2 - R ^2\right) \left(\lambda ^2 + R^2\right)}
\log \left(\frac{\lambda }{R}\right) \right] .
\end{split}
\end{equation} 
Taking the difference of (\ref{nmnmnm}), (\ref{ppprod})   we finally obtain
\begin{equation} 
\label{gl}
\begin{split} 
g_R (\lambda) R ^2 &
= || \pg{(\delta _\alpha  \Alaa)}   ||^2 \\ &
=  \frac{\pi ^2 R^2 }{2( \lambda  ^2 - R ^2 ) ^2 }  \left[ 
\lambda  ^4 - 8 \lambda  ^2  R ^2  + R ^4  
+ \frac{24 \lambda ^4 R^4 }{\lambda ^4 - R ^4 } \log\left(\frac{\lambda }{R }\right)\right] .
\end{split} 
\end{equation} 
\end{proof} 

\begin{lem}
\label{lh} 
The function $h _R $ is given by 
\begin{equation}
h _R (\lambda)  
=(\lambda ^2 /2) f _R  (\lambda).
\end{equation} 
\end{lem}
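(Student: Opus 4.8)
The plan is to identify $h_R(\lambda)$ with $\|\mathrm{d}_\omega\hi\|^2$, where $\hi=\omega(\xi)$ is the Higgs field, and then to reduce this to the integral already computed in Lemma~\ref{lf}.

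By the $SO(3)$ invariance of $\gm$ established in Proposition~\ref{propmetric}, the function $h_R$ may be read off on the slice $\alpha=\beta=0$, where the circle-invariant instanton is $\omega=a_\lambda\cdot\omr$ with $s^\ast\omega=\Al$ as in (\ref{al}), and by the definition (\ref{ldoiepu}) of the $L^2$ metric in the framed direction one has $h_R(\lambda)=\|\mathrm{d}_\omega\hi\|^2$. The key observation is that circle invariance $\mathcal{L}_\xi\omega=0$, together with Cartan's formula $\mathcal{L}_\xi\omega=\mathrm{d}(\iota_\xi\omega)+\iota_\xi\mathrm{d}\omega$ and the structure equation $F=\mathrm{d}\omega+\omega\wedge\omega$, implies $\mathrm{d}_\omega\hi=-\iota_\xi F$, with $F$ the curvature of $\omega$. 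Pulling this back by the global section $s$ of $\hat P$ from (\ref{ours}) --- using only that $F$ is horizontal and that $\pi_\ast\xi=\check\xi$, so that no equivariance of $s$ is required --- gives $\mathrm{d}_{\Al}(s^\ast\hi)=-\iota_{\check\xi}\hat F$, where $\hat F=s^\ast F$ is the field strength on $\mathbb{H}$ and $\check\xi=z\partial_w-w\partial_z$ is the vector field (\ref{xihatb}). Since the $L^2$ integrand $|\mathrm{d}_\omega\hi|^2$ is gauge invariant, this identity is all that is needed and the explicit form of $s^\ast\hi$ never has to be written down.

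The next step is a pointwise computation. The curvature of $\Al$ is $\hat F=\lambda^2(\lambda^2+|q|^2)^{-2}\,\mathrm{d}\bar q\wedge\mathrm{d}q$ (a short computation, or by rescaling the curvature of the standard instanton (\ref{stdinst})). Expanding $\mathrm{d}\bar q\wedge\mathrm{d}q$ in the coordinates $(x,y,z,w)$ and contracting with $\check\xi$ yields, after elementary quaternionic algebra, $|\iota_{\check\xi}\hat F|^2_{\mathbb{H}}=12\,\lambda^4(z^2+w^2)(\lambda^2+|q|^2)^{-4}$. Comparing with the identity $|\delta_\lambda\Al|^2_{\mathbb{H}}=12\,\lambda^2|q|^2(\lambda^2+|q|^2)^{-4}$ obtained in the proof of Lemma~\ref{lf}, we get the pointwise relation
\[
|\mathrm{d}_{\Al}(s^\ast\hi)|^2_{\mathbb{H}}=\frac{\lambda^2(z^2+w^2)}{|q|^2}\,|\delta_\lambda\Al|^2_{\mathbb{H}}.
\]

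Finally, multiplying by the conformal factor $R^4(R^2+|q|^2)^{-2}$ and integrating over $\mathbb{H}$, the combination $|q|^{-2}\,|\delta_\lambda\Al|^2_{\mathbb{H}}\,R^4(R^2+|q|^2)^{-2}=12\,\lambda^2R^4(\lambda^2+|q|^2)^{-4}(R^2+|q|^2)^{-2}$ is a radial function, so the rotational symmetry of $\mathbb{H}$ allows one to replace $z^2+w^2$ under the integral sign by $\tfrac12|q|^2$. Hence
\[
h_R(\lambda)=\tfrac{\lambda^2}{2}\int_{\mathbb{H}}|\delta_\lambda\Al|^2_{\mathbb{H}}\,\frac{R^4}{(R^2+|q|^2)^2}\,\mathrm{vol}_{\mathbb{H}}=\tfrac{\lambda^2}{2}\,f_R(\lambda),
\]
which is the claim. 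The one step that requires care is the passage from $\mathrm{d}_\omega\hi=-\iota_\xi F$ on $P$ to its pulled-back form $\mathrm{d}_{\Al}(s^\ast\hi)=-\iota_{\check\xi}\hat F$ for the non-equivariant section $s$, and this rests entirely on the horizontality of $F$; the remaining steps are routine quaternionic algebra and a symmetric-integration argument, so I do not anticipate a genuine obstacle.
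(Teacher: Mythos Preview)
Your proof is correct and reaches the same pointwise integrand $12\lambda^4(z^2+w^2)/(\lambda^2+|q|^2)^4$ as the paper, followed by the same symmetry reduction to $f_R$. The route differs, however: the paper works in the equivariant gauge $\sigma$ of (\ref{sigmasec}), uses the explicit monopole Higgs field $\ph_\lambda$ and potential $\Aaa$ computed in (\ref{higgs})--(\ref{a3}), and evaluates $\mathrm{d}_{\Aaa}\ph_\lambda$ directly; your argument instead invokes the general identity $\mathrm{d}_\omega\hi=-\iota_\xi F$ for any circle-invariant connection and pulls back via the non-equivariant section $s$, exploiting the horizontality of $F$ so that no equivariance of the section is needed. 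Your approach is more conceptual and avoids ever writing the Higgs field explicitly, which is a genuine simplification; the paper's approach, on the other hand, leverages formulas already established in the monopole discussion and makes the link to hyperbolic monopoles more transparent. Both are sound, and the final integral and symmetry argument coincide.
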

\begin{proof} 
Working in the  gauge (\ref{indang}),  we have $\pphi =\ph _\lambda $, $\Alc = \Aaa + \ph _\lambda R \, \mathrm{d} \cchi $, so $ \mathrm{d} _{ \Alc } \ph _\lambda= \mathrm{d} _{ \Aaa  }\ph _\lambda $, which is given by
\begin{align}
\mathrm{d}  _{ \Aaa }\ph _\lambda 
&=\frac{2 \lambda ^2 \rho }{R ( \lambda ^2 + |q |^2 ) ^2 }( - \mathrm{d} \rho \,   \qi +  \mathrm{d} y \,  \qj + \mathrm{d} x\,  \qk), \\
\left |  \mathrm{d}  _{ \Aaa  }\ph _\lambda   \right | _{ \mathbb{H}  }^2  &
= \frac{12 \lambda ^4 \rho  ^2 }{R ^2 (\lambda ^2 + |q | ^2 )^4 }.
\end{align} 
Integrating over $ P ^1  (\mathbb{H}  ) $ we  get
\begin{equation}
\label{hl} 
\begin{split} 
h _R  (\lambda) &
= R ^2  ||\mathrm{d}  _{ \Aaa }\ph _\lambda    ||^2 
= ( \lambda ^2 /2) || \delta _\lambda A _\lambda  ||^2 
=(\lambda ^2 /2) f _R  (\lambda) . 
\end{split}  
\end{equation} 

\end{proof} 
Putting together Proposition \ref{propmetric} with Lemmas \ref{lf}, \ref{lg}  and \ref{lh}  we obtain the following theorem.
\begin{thm}
\label{thmmetric} 
The metric $\gm $ on $\MMMf $ is given by
\begin{equation}
\label{metricmod} 
\gm
= f _R (\lambda) (\mathrm{d} \lambda ^2 +(\lambda ^2/2) \mathrm{d} \chi  ^2 ) + g _R  (\lambda) R ^2 (\mathrm{d} \alpha ^2 + \sin ^2 \alpha  \, \mathrm{d} \beta ^2 ),
\end{equation} 
with $ \lambda \in(0, R]$, $\chi \in[0, 2 \pi )$ and
\begin{align} 
\label{ffffff} 
f _R  (\lambda) &
= \frac{4 \pi ^2   }{(\frac{\lambda  ^2}{R ^2}  - 1  )^4 } 
\left[\frac{\lambda ^4}{R ^4}  + 10 \frac{\lambda ^2}{R ^2}  + 1
 - 6 \frac{\lambda ^2}{R ^2}  \left( \frac{  \frac{\lambda ^2}{R ^2}  +1  }{\frac{\lambda  ^2}{R ^2}  - 1} \right) \log \left( \frac{\lambda ^2 }{R ^2 } \right )\right],  \\
g _R  (\lambda) &
=  \frac{\pi ^2  }{2( \frac{\lambda  ^2}{ R ^2 } - 1 ) ^2 }  \left[ 
\frac{\lambda  ^4}{ R ^4 } - 8 \frac{\lambda  ^2}{R ^2}  + 1
+12 \frac{ \frac{\lambda ^4}{ R ^4 }  }{\frac{\lambda ^4 }{ R ^4 } - 1 } \log\left(\frac{\lambda ^2 }{R ^2  }\right)\right] .
\end{align} 
\end{thm}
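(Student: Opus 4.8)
The plan is to read off Theorem~\ref{thmmetric} by combining the structural constraint of Proposition~\ref{propmetric} with the three explicit coefficient computations of Lemmas~\ref{lf}, \ref{lg} and \ref{lh}, and then to repackage the resulting functions in terms of the scale-invariant variable $\lambda^2/R^2$. First I would invoke Proposition~\ref{propmetric}: the $SO(3)\times U(1)$ isometry group of $\gm$ --- spatial rotations generated by the relevant $SU(2)\subset Sp(2)$ acting on $P^1(\mathbb{H})$, together with the phase rotations generated by the Higgs field --- forces $\gm$ to be diagonal in $(\lambda,\alpha,\beta,\chi)$,
\begin{equation}
\gm = f_R(\lambda)\,\mathrm{d}\lambda^2 + g_R(\lambda)\,R^2(\mathrm{d}\alpha^2 + \sin^2\alpha\,\mathrm{d}\beta^2) + h_R(\lambda)\,\mathrm{d}\chi^2 ,
\end{equation}
with three unknown functions of $\lambda$ alone. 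The remaining task is to compute them, one modulus at a time, by choosing a convenient representative gauge potential, projecting its variation onto the $L^2$-orthogonal complement of the gauge orbits, and integrating the resulting pointwise norm against the conformal weight $R^4(R^2+|q|^2)^{-2}$ over $\mathbb{H}$.

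For the scale direction I would use the representative $\All$ of \eqref{al}: its $\lambda$-variation is already transverse, so no projection is needed and $f_R=\|\delta_\lambda\All\|^2$ reduces to a single radial integral, which is Lemma~\ref{lf}. For the framed direction I would work in the monopole gauge \eqref{indang}, where the framed tangent vector is, up to the factor $R$ coming from the phase normalisation, $\mathrm{d}_{\Aaa}\ph_\lambda$; computing its pointwise norm and performing the $S^3$ angular integration (using that $z^2+w^2$ averages to half of $|q|^2$) shows that $R^2\|\mathrm{d}_{\Aaa}\ph_\lambda\|^2=(\lambda^2/2)\|\delta_\lambda\All\|^2$, i.e. $h_R=(\lambda^2/2)f_R$, which is Lemma~\ref{lh}.

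The hard part is the position directions. By the $SO(3)$-symmetry it suffices to vary $\alpha$ at $\alpha=0$, but the naive variation $\delta_\alpha\Ala$ is \emph{not} transverse, so one must subtract $\mathrm{d}_{\All}G_{\All}\mathrm{d}^\dagger_{\All}\delta_\alpha\Ala$, where $G_{\All}$ inverts the scalar covariant Laplacian $\mathrm{d}^\dagger_{\All}\mathrm{d}_{\All}$ (invertible since $\All$ is irreducible). The main obstacle is to solve this covariant Poisson equation explicitly for the particular source $\mathrm{d}^\dagger_{\All}\delta_\alpha\Ala$ at hand; once that is carried out (as in Appendix~\ref{projjj}), the difference $\|\delta_\alpha\Ala\|^2-\langle\mathrm{d}^\dagger_{\All}\delta_\alpha\Ala,\,G_{\All}\mathrm{d}^\dagger_{\All}\delta_\alpha\Ala\rangle$ gives $g_R\,R^2$, i.e. Lemma~\ref{lg}.

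Finally, since $h_R=(\lambda^2/2)f_R$, the $\mathrm{d}\lambda^2$ and $\mathrm{d}\chi^2$ terms share the prefactor $f_R$ and combine into $f_R(\lambda)\big(\mathrm{d}\lambda^2+(\lambda^2/2)\,\mathrm{d}\chi^2\big)$, yielding \eqref{metricmod}. To reach the displayed forms \eqref{ffffff} I would pull an overall factor $R^4$ out of the expressions of Lemmas~\ref{lf} and \ref{lg} so that they become functions of $\lambda^2/R^2$, and use $\log(\lambda/R)=\tfrac12\log(\lambda^2/R^2)$ to convert the coefficients $12$ and $24$ there into the $6$ and $12$ of the theorem. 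The ranges $\lambda\in(0,R]$ and $\chi\in[0,2\pi)$ are inherited from the parameterisation $\MMMf\simeq\mathring B^3_R\times U(1)$ established in Section~\ref{s2}.
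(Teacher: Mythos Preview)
Your proposal is correct and follows exactly the paper's approach: the theorem is proved by combining Proposition~\ref{propmetric} with Lemmas~\ref{lf}, \ref{lg} and \ref{lh}, and then rewriting the resulting $f_R$ and $g_R$ in terms of $\lambda^2/R^2$ using $\log(\lambda/R)=\tfrac12\log(\lambda^2/R^2)$. Your additional commentary on what each lemma entails (transversality of $\delta_\lambda\All$, the Green's function correction for $\delta_\alpha\Ala$, the identity $h_R=(\lambda^2/2)f_R$) accurately summarises the content of those lemmas and their appendices.
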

The metric of Theorem \ref{thmmetric} is in agreement with the $L ^2 $ metric on the unframed moduli space of 1-instantons on the 4-sphere calculated in \cite{Habermann:1988wv}.


Both $f _R $ and $g _R $ depend only on the ratio $\lambda /R $, hence we write
 $f( \lambda /R) =  f _R (\lambda ) $, $g (\lambda /R)=   g _R (\lambda) $.  
Since $f (\lambda /R) = (R/ \lambda ) ^4 f (R/ \lambda  )  $, $g (\lambda /R) =g (R/ \lambda  ) $, the metric (\ref{metricmod}) is invariant under the transformation
$\lambda/R  \rightarrow R/ \lambda  $, $m/R \rightarrow - R  \, m/ |m |^2$
which maps a  gauge potential into an equivalent one. In terms of the angles $\alpha$, $\beta$, the transformation $m/R \rightarrow - R  \, m/ |m |^2$ is simply the antipodal mapping $ \alpha \rightarrow \pi  - \alpha $,
$\beta  \rightarrow \pi +  \beta$. To ease notation,  we set 
\begin{equation}
x =  \lambda / R .
\end{equation} 

The functions $f$ and $g$ are plotted in Figure \ref{fgplot}.
\begin{figure}[htbp]
\begin{center}
\includegraphics{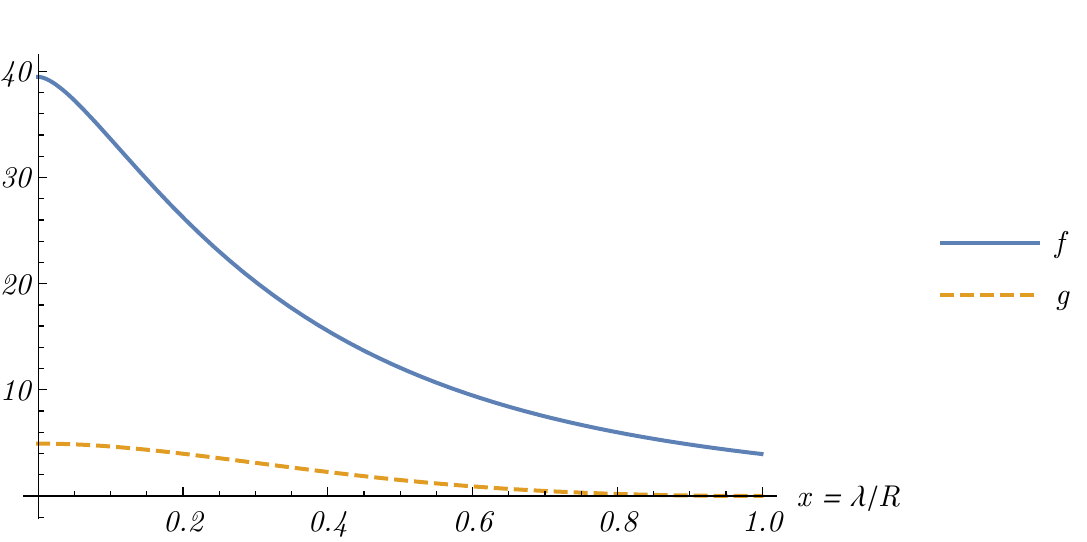}
\caption{Plot of the functions $f$ and $g$ appearing in the metric (\ref{metricmod}). While $f$ is strictly positive, $g$ vanishes at $x =1 $.}
\label{fgplot}
\end{center}
\end{figure}
The function $f$ is strictly positive and monotonously decrescent in $(0,1] $, while $g$ is non-negative and monotonously decrescent. At the endpoints we have
\begin{alignat}{2}
\label{aaaab1} 
\lim_{x \rightarrow 1}f (x)&=2 \pi ^2 / 5,& \quad
\lim _{ x \rightarrow 0 }f (x) &= 4 \pi ^2 ,\\
\label{aaaab2} 
\lim_{x \rightarrow 1}g(x)&=0 , &\quad 
\lim _{ x \rightarrow 0 } g (x) &= \pi ^2 /2 .
\end{alignat}  
While the metric is finite on the boundary  of the moduli space, a calculation shows that its scalar curvature diverges as $x \rightarrow 0 $, see Equation (\ref{divofs}).

\subsection{Properties of the metric and behaviour for $R \rightarrow \infty $}
\label{s4} 
The metric $\gm $ has an $SO(3) \times U (1) $ symmetry. 
Because of circle invariance, only the  $SO (3,1) $ subgroup of the $SO(5,1) $ symmetry group of Yang-Mills equations acts on the unframed moduli space. The $L ^2 $ metric on $\MMM  $ is not conformally invariant, hence  its symmetry group is  $SO (3)  \subset SO (3,1 ) $.
The $U (1) $ factor comes from the $Sp (1) $ structure group reduced  to the $U (1) $ subgroup commuting with the asymptotic value of the Higgs field. 

The radii of the  $SO (3) $ and $U (1) $ orbits  differ in their $\lambda$-dependence. The $SO (3) $ orbits are 2-spheres of squared radius $R ^2  g $ which monotonically increases from the centre $x =1 $ of the moduli space, the only fixed point of the $SO (3) $ action, towards the boundary $x =0 $.
The $U (1) $ orbits are circles of  squared radius $R ^2 x ^2 f $ which monotonically decreases from the centre towards the boundary, see Figure \ref{gxfplot}.
Since $ x^2  f( x) \rightarrow 0 $ as $x \rightarrow 0 $, the boundary of the moduli space is a fixed point set of the $U (1) $ action where the circle fibration collapses to zero size.
 \begin{figure}[htbp]
\begin{center}
\includegraphics{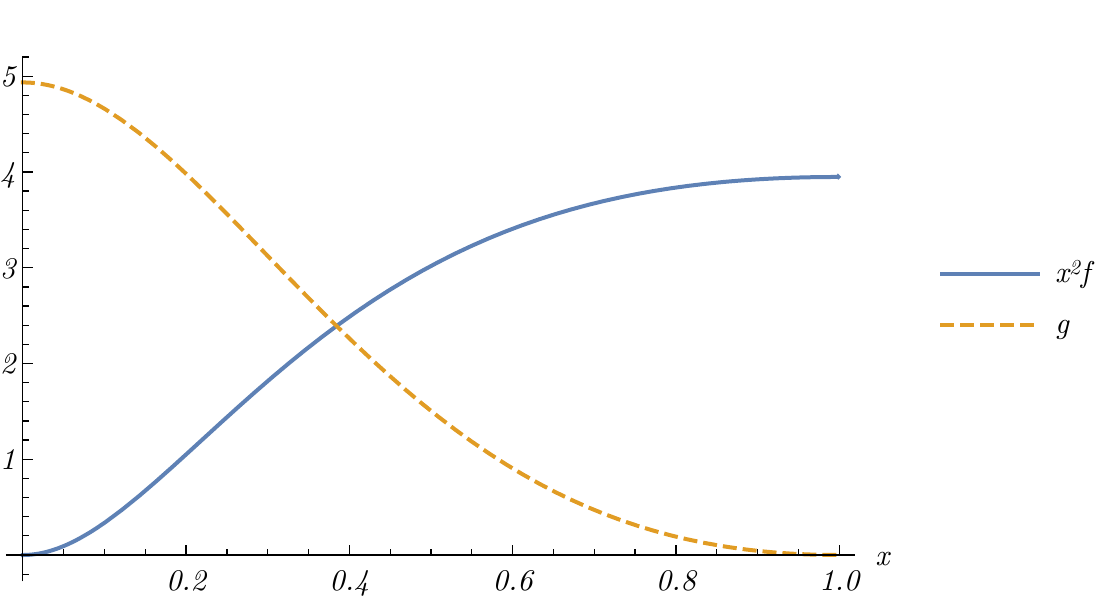}
\caption{Plot of the squared radius (for $R =1 $) of the $SO (3) $ orbits $g $,  and  of the $U (1) $ orbits $x ^2 f$. }
\label{gxfplot}
\end{center}
\end{figure}

We have calculated the scalar curvature $s$ of $\gm $ making use of the Mathematica computer algebra system, see Figure \ref{scplot}.
\begin{figure}[htbp]
\begin{center}
\includegraphics{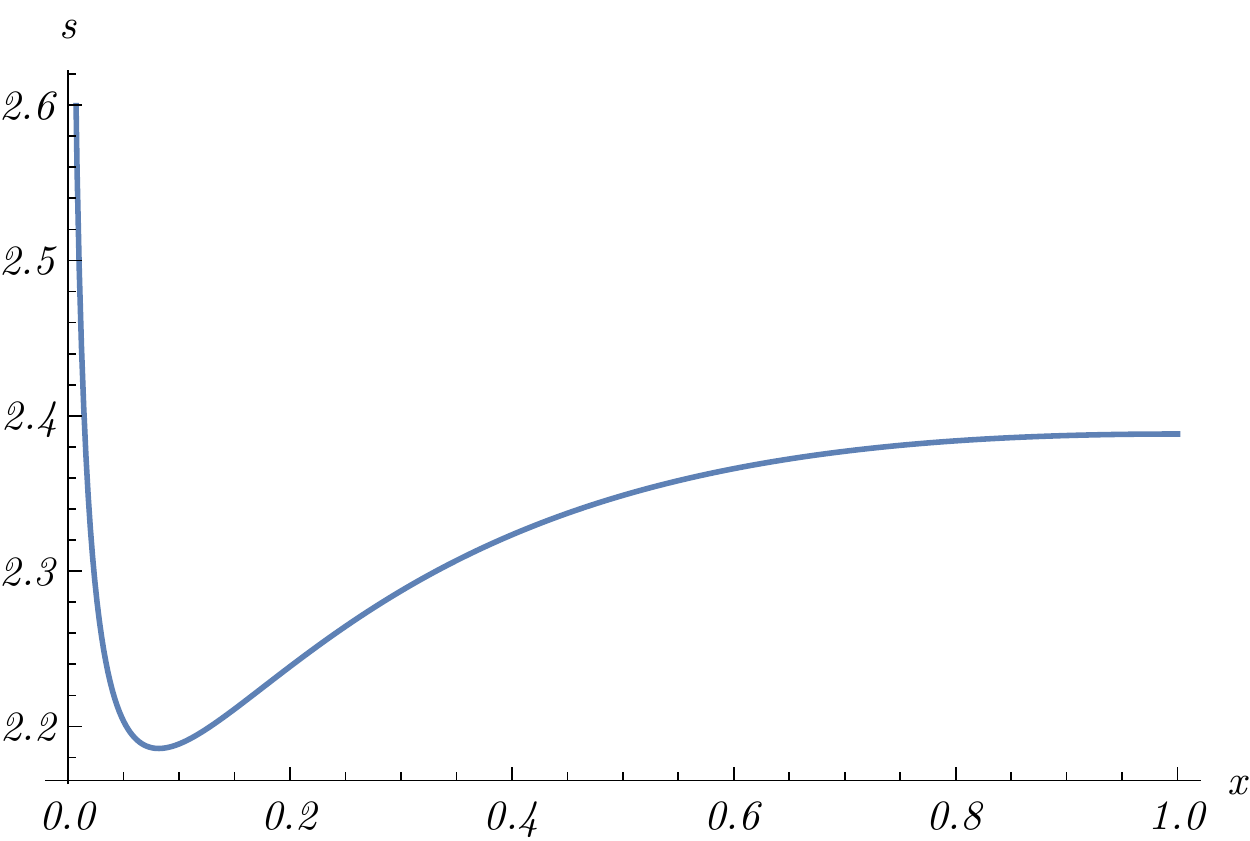}
\caption{Plot of the scalar curvature $s$ of $\gm $.  In the limit $x \rightarrow 0 $ it diverges to plus infinity.}
\label{scplot}
\end{center}
\end{figure}
The function $s (x) $ is everywhere positive  for $x\in(0,1] $ with the value at the standard instanton $x =1 $ being
\begin{equation}
\lim _{ x \rightarrow 1 }s (x) =\frac{165}{7 \pi ^2 }. 
\end{equation} 
It diverges to plus infinity for $x \rightarrow 0 $,  the leading behaviour near $x =0 $ being
\begin{equation}
\label{divofs} 
s(x)= - \frac{6  }{ \pi ^2 } \log x ^2- \frac{10}{ \pi ^2 }
\end{equation} 
plus terms vanishing in the limit $x \rightarrow 0 $.

Let us consider  the $R \rightarrow \infty $ limit of  $\gm $. 
By (\ref{difpar}) we have
\begin{equation}
R ^2 (\mathrm{d} \alpha ^2 + \sin ^2 \alpha  \, \mathrm{d} \beta ^2 )
= \frac{4 R ^4 }{(R ^2 + m _x ^2 + m _y  ^2 ) ^2 } \left(  \mathrm{d} m _x ^2 + \mathrm{d} m _y  ^2 \right) 
\end{equation} 
which in the limit $R \rightarrow \infty $ becomes four times the Euclidean metric on $\mathbb{R}  ^2 $
\begin{equation}
g _{ \mathbb{R}  ^2 }
=\mathrm{d} m _x ^2 + \mathrm{d} m _y ^2.
\end{equation} 
Hence  using (\ref{aaaab1}), (\ref{aaaab2})  we have
\begin{equation} 
\label{rinflinit} 
\gm \xrightarrow[ R \rightarrow \infty ]{} 4 \pi ^2 (\mathrm{d} \lambda ^2 + (\lambda ^2/2) \mathrm{d} \chi  ^2  )
+ 2\pi ^2  g _{ \mathbb{R}  ^2  }.
\end{equation} 

We would like to compare (\ref{rinflinit}) with the metric on the moduli space of circle-invariant instantons on $\mathbb{H}   \simeq  \mathbb{R}  ^4 $. The metric on the moduli space of instantons over flat $\mathbb{R}  ^4  $ is  the flat metric on $\mathbb{R}  ^4 \times (\mathbb{R}  ^4) ^\ast  / \mathbb{Z}  _2  $,   with $ (\mathbb{R}  ^4) ^\ast =\mathbb{R}  ^4 \setminus \{ 0 \}  $ \cite{Manton:2004tk}. The first factor parameterises the instanton centre on $\mathbb{H}  $, the second combines the instanton scale on $\mathbb{H}  $ and the $S ^3 / \mathbb{Z}  _2 $ coming from the structure group. The group  $\mathbb{Z}  _2 $  acts on the second factor as a reflection about the origin. 

For circle-invariant instantons,  the instanton centre has to lie in the plane fixed by the rotation, and bundle automorphisms have to preserve circle invariance. 
We obtain therefore the flat metric on $ \mathbb{R}  ^2 \times (\mathbb{R}  ^2 ) ^\ast  $, in agreement with (\ref{rinflinit}).\footnote{Note, however, that since the curvature of any translation-invariant self-dual connection has infinite $L ^2 $ norm, Euclidean monopoles cannot be obtained from instantons on $ \mathbb{H}  $.}
In order to check that the numerical coefficients multiplying the flat metrics on the two $\mathbb{R}  ^2 $ factors  also agree, below we compute the metric on the moduli space of circle-invariant instantons over $\mathbb{H}  $.
\begin{thm}
The $L ^2 $ metric on the moduli space of $Sp (1) $ circle invariant instantons over $\mathbb{H}  $ is
\begin{equation}
\label{dasjhg} 
4 \pi ^2( \mathrm{d} \nu ^2  + (\nu ^2/4)\,  \mathrm{d} u  ^2  ) 
+ 2 \pi ^2  g _{ \mathbb{R}  ^2 },
\end{equation} 
with $u \in[0, 2\pi )$.
\end{thm}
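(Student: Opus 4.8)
The plan is to redo, over $\mathbb{H}$ with the flat metric $\gh=\mathrm{d}\bar q\,\mathrm{d}q$, the computation of Section~\ref{s3}, dropping the conformal factor $R^4(R^2+|q|^2)^{-2}$ throughout. First I would set up coordinates: by the $\mathbb{C}$‑analogue of \eqref{kfdhjg} the unframed moduli space of circle invariant instantons on $\mathbb{H}$ is $\simeq NA$, parameterised by the scale $\nu>0$ and the centre $n\in\mathbb{C}$, with representative gauge potential $\An=\Im[(\bar q-\bar n)\,\mathrm{d}q]/(\nu^2+|q-n|^2)$, and the framed moduli space carries an extra $U(1)$ with angle $u\in[0,2\pi)$. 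Since $\gh$ is translation invariant it suffices to evaluate tangent vectors at $n=0$, where $\An$ becomes $\Im(\bar q\,\mathrm{d}q)/(\nu^2+|q|^2)$, i.e. \eqref{al} with $\lambda\mapsto\nu$. The metric \eqref{dasjhg} then follows from the squared norms of $\delta_\nu A$, of the two centre variations $\delta_{n_x}A$, $\delta_{n_y}A$, and of the framing mode, together with the vanishing of all cross terms.

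For the scale direction, exactly as in the proof of Lemma~\ref{lf}, $\delta_\nu A=-2\nu\,\Im(\bar q\,\mathrm{d}q)/(\nu^2+|q|^2)^2$ is already in transverse gauge (the identity $\partial_i(\An(\partial_i))=0$ still holds), $|\delta_\nu A|^2_{\mathbb{H}}=12\nu^2|q|^2/(\nu^2+|q|^2)^4$, and integrating over $\mathbb{H}$ without the conformal factor gives $\|\delta_\nu A\|^2=24\pi^2\nu^2\int_0^\infty r^5(r^2+\nu^2)^{-4}\,\mathrm{d}r=4\pi^2$, the coefficient of $\mathrm{d}\nu^2$. For the centre I would use the standard fact that, after subtracting the infinitesimal gauge transformation generated by $\iota_{\partial_\mu}\An$, the translational variation becomes $\iota_{\partial_\mu}\fs$, which lies in Coulomb gauge because $\omega$ is a Yang--Mills connection. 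Since the standard instanton is $SO(4)$‑invariant up to gauge, $\|\iota_{\partial_x}\fs\|^2=\|\iota_{\partial_y}\fs\|^2=\tfrac12\|\fs\|^2=2\pi^2$, using $\|\fs\|^2=4\pi^2|c_2|=4\pi^2$, and the quarter‑turn in the $(x,y)$‑plane forces $\langle\iota_{\partial_x}\fs,\iota_{\partial_y}\fs\rangle=0$; this reproduces $2\pi^2 g_{\mathbb{R}^2}$.

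The framing direction is the step that genuinely differs from Section~\ref{s3} and is the main obstacle. On the compact $P^1(\mathbb{H})$ the framing mode was generated by the Higgs field $\hi=\omega(\xi)$, but over $\mathbb{H}$ the framing point is $\infty$ and the pulled‑back Higgs field neither has a direction‑independent limit there nor solves Gauss's law, so it is not the right representative. Instead I take the framing generator to be the unique solution $\bar\L_\nu$ of $\mathrm{d}^\dagger_{\An}\mathrm{d}_{\An}\bar\L_\nu=0$ with $\bar\L_\nu\to\mathbf{i}/2$ at infinity --- concretely the standard one‑instanton gauge zero mode, $\bar\L_\nu=q\,\mathbf{i}\,\bar q/\bigl(2(|q|^2+\nu^2)\bigr)$ in the regular gauge, equivalently $(\mathbf{i}/2)\,|q|^2/(|q|^2+\nu^2)$ in the singular gauge --- and check via \eqref{bcond} that it preserves circle invariance. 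Its squared norm is cleanest via a boundary term: from $\mathrm{d}^\dagger_{\An}\mathrm{d}_{\An}\bar\L_\nu=0$ one gets $\mathrm{d}_{\An}*\mathrm{d}_{\An}\bar\L_\nu=0$, hence $\mathrm{d}\,\mathrm{Tr}(\bar\L_\nu\wedge *\mathrm{d}_{\An}\bar\L_\nu)=\mathrm{Tr}(\mathrm{d}_{\An}\bar\L_\nu\wedge *\mathrm{d}_{\An}\bar\L_\nu)$, and Stokes' theorem gives $\|\mathrm{d}_{\An}\bar\L_\nu\|^2=-\tfrac12\lim_{r\to\infty}\int_{S^3_r}\mathrm{Tr}(\bar\L_\nu\wedge *\mathrm{d}_{\An}\bar\L_\nu)$, the contribution at the origin vanishing. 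Inserting $\bar\L_\nu\to\mathbf{i}/2$, $\int_{S^3_r}*\mathrm{d}_{\An}\bar\L_\nu\to 2\pi^2\mathbf{i}\,\nu^2$ and $\mathrm{Tr}(\mathbf{i}^2)=-2$ yields $\|\mathrm{d}_{\An}\bar\L_\nu\|^2=\pi^2\nu^2=4\pi^2(\nu^2/4)$. I would remark that using the Gauss‑law representative is essential: $\mathrm{d}_{\An}\bar\L_\nu$ is the $L^2$‑orthogonal projection onto the complement of the trivial framing directions, which is why the framing coefficient is $4\pi^2(\nu^2/4)$ and not the value obtained by naively taking the $R\to\infty$ limit of Lemma~\ref{lh}.

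Finally, all cross terms vanish: $\delta_\nu A$ and $\iota_{\partial_\mu}\fs$ are $L^2$‑orthogonal to the gauge orbit while the framing mode is tangent to it; $\langle\delta_\nu A,\iota_{\partial_x}\fs\rangle=0$ by the reflection $x\mapsto-x$, which fixes $\delta_\nu A$ and flips $\iota_{\partial_x}\fs$; and $\langle\iota_{\partial_x}\fs,\iota_{\partial_y}\fs\rangle=0$ as above. Collecting the four squared norms gives $4\pi^2\,\mathrm{d}\nu^2+\pi^2\nu^2\,\mathrm{d}u^2+2\pi^2 g_{\mathbb{R}^2}$, which is \eqref{dasjhg}.
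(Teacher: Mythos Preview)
Your argument is correct and structurally identical to the paper's: both compute the four diagonal metric coefficients at $n=0$ using the same gauge fixings (transverse gauge for $\delta_\nu A$, $\iota_{\partial_\mu}F$ for the centre, the Gauss--law mode for framing). Two of your computations are done more cleanly than in the paper. For the centre you invoke the $SO(4)$ symmetry of $\An$ and the identity $\sum_\mu\|\iota_{\partial_\mu}F\|^2=2\|F\|^2$ together with $\|F\|^2=4\pi^2$, whereas the paper integrates $|\iota_1 F|^2_{\mathbb H}$ directly; for the framing you use the Stokes/boundary argument $\|\mathrm{d}_{\An}\bar\Lambda\|^2=-\tfrac12\lim_{r\to\infty}\int_{S^3_r}\mathrm{Tr}(\bar\Lambda\,{*}\mathrm{d}_{\An}\bar\Lambda)$, whereas the paper simply states the value of the norm after solving the radial ODE in singular gauge and then invokes the $\mathbb Z_2$ quotient to produce the factor $1/4$. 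Your normalisation $\bar\Lambda\to\mathbf{i}/2$ gives $\pi^2\nu^2$ directly, which is the same outcome packaged differently. One small slip: in the paper's conventions the regular--gauge Gauss--law mode is $\bar q\,\mathbf{i}\,q/(|q|^2+\nu^2)$ rather than $q\,\mathbf{i}\,\bar q/(\cdots)$, but since your actual boundary computation uses the singular--gauge form $(\mathbf{i}/2)\,|q|^2/(|q|^2+\nu^2)$, which is correct, this does not affect the result.
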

\begin{proof} 
As we saw in Section \ref{s1}, in the limit $R \rightarrow \infty $ the $( \nu, n) $ and $( \lambda ,m )$ parameterisations become equivalent, so we are free to use the $( \nu , n ) $ parameterisation (\ref{instnn}) which is more convenient for instantons on $\mathbb{H}  $.
The circle-invariant instanton of centre $n$ and scale $\nu$ on $\mathbb{H}  $ is 
\begin{equation}
A _{ \nu ,n  }
=\frac{\Im \left[ (\bar q - \, \bar n) \, \mathrm{d} q  \right] }{\nu ^2 + |q - n |^2 },
\end{equation} 
with $n \in \mathbb{C}  $.
Since $A _{ \nu ,0 } $ has the same form as $\hat{A} _\nu  $, for the modulus $ \nu   $ we just need to repeat the calculations that we did for $P ^1 (\mathbb{H}  )$ but without multiplying by the conformal factor $R ^4 / ( R ^2 + |q| ^2) $. We obtain
\begin{equation}
|| \delta _\nu   A _{ \nu ,0 } || _{ \mathbb{H}  }^2 
= 12 \nu  ^2   \int _{\mathbb{H}}  \frac{ |q |^2  }{\left( |q |^2 + \nu  ^2   \right) ^4} \,  \mathrm{vol } _{\mathbb{H}  }
=4 \pi ^2 .
\end{equation}

For the framed moduli we need to consider the $L ^2 $ solutions of (\ref{orthg1}) which commute with the circle action on $\mathbb{H}  $. By translational symmetry of $\mathbb{H}  $, it is enough to do so for $n =0 $.
 Denote by $\An =  A _{ \nu ,0 } $.
 It is convenient to work in singular gauge, obtained  via the gauge transformation generated by $g = \bar q/ |q| $,
\begin{equation}
\label{als} 
\Ans
=  g ^{-1} \An g + g ^{-1} \, \mathrm{d} g 
=\frac{ \nu  ^2 \Im ( q\, \mathrm{d} \bar q )}{|q| ^2 (\nu  ^2 + |q| ^2 )}.
\end{equation} 
In order for  $\exp( \bar\Lambda ) $ to commute with the circle action on $\mathbb{H}  $, we take the ansatz $\bar \Lambda  = a (r) \qi $, with $r =|q| $. Equation (\ref{orthg1}) with respect to the global gauge potential (\ref{als}) becomes then the ODE
\begin{equation}
a ^{ \prime\prime} + \frac{3}{r} a ^\prime - \frac{8 \lambda ^4\, a  }{r ^2 (\nu  ^2 + r ^2  )^2 } =0,
\end{equation} 
which has the normalisable solution
\begin{equation}
a = \frac{r ^2}{ r ^2 + \nu  ^2 }  \, u ,
\end{equation} 
where $u \in \mathbb{R}  $ is an arbitrary constant. Therefore
\begin{equation}
\label{lambdasol} 
\bar \Lambda _\nu  
= \frac{r ^2}{ r ^2 + \nu  ^2 }  u \,  \qi   .
\end{equation} 
Its squared norm is
\begin{equation}
| |\delta _{ u  } \mathrm{d} _{\Ans } \bar \Lambda _\nu  ||^2 _{ \mathbb{H}  }
=4 \pi ^2 \nu ^2.
\end{equation} 
The range of $\nu $  in the parameterisation (\ref{instnn}) is $ (0, \infty )$, while the range of $\lambda$  
 in the parameterisation (\ref{ppp2}) is $(0, R )$, but the two ranges agree in the limit $ R \rightarrow \infty $.

For the translational moduli $n_1 , n _2  $ we have e.g.~$ \delta _{ n _1 } A _{ \nu ,n  }= - \partial _1  A _{ \nu ,n  } $, which is not orthogonal to the gauge group orbits. However
\begin{equation} 
\pg{( \delta _{ n _1 }A _{ \nu ,n  }  )}
= - \partial _1  A _{ \nu ,n  } + \mathrm{d} _{  A _{ \nu ,n  }  }  (A _{ \nu ,n  } (\partial _1 ))
= - \iota _{ 1 } F _{ \nu ,n},
\end{equation} 
where $\iota _1 $ denotes the insertion of the vector field $\partial _1 $, satisfies by virtue of self-duality and of the Bianchi identity the orthogonality condition  $\mathrm{d} _{ A _{ \nu ,n  }} ^\dagger \pg{( \delta _{ n _1 } A _{ \nu ,n  }  )} =0 $.
Since
\begin{equation}
\begin{split} 
\iota _1 (F_{ \nu ,n}) &
=\frac{2\nu ^2 }{(\nu ^2  + |q| ^2  )^2 }  \Im (\mathrm{d} q ), \quad 
\iota _1 (F_{ \nu ,n})  \wedge \hat* \iota _1 (F_{ \nu ,n}) 
=3 \cdot \frac{4\nu ^4 }{(\nu ^2  + |q| ^2  )^4 }  ,
\end{split} 
\end{equation} 
 integrating over $\mathbb{H} $ we get
\begin{equation}
||\iota _1 (F_{ \nu ,n}) || ^2 _{ \mathbb{H}  }
= 12\nu ^4 \cdot 2 \pi ^2 \int _0 ^\infty \frac{r ^3 }{(\nu ^2 + r ^2 )^4 } \mathrm{d} r
=2 \pi ^2 .
\end{equation} 

Therefore,  the metric on the moduli space of circle-invariant 1-instantons over $\mathbb{H} $ is
\begin{equation}
4 \pi ^2( \mathrm{d} \nu ^2  + (\nu ^2/4)\,  \mathrm{d} u  ^2  ) 
+ 2 \pi ^2  g _{ \mathbb{R}  ^2 },
\end{equation} 
with the factor $1/4 $ arising because of the $\mathbb{Z}  _2 $ quotient.

\end{proof} 

The result is  in agreement with (\ref{rinflinit}) apart from the numerical factor of $1/4 $ (instead of $1/2 $) multiplying $\mathrm{d} u ^2 $. The fact that there is a discrepancy is not surprising, as in the limit $R \rightarrow \infty $ the norm of the Higgs field $\phi$ vanishes, so that $\phi$ is not a good infinitesimal automorphism anymore. In fact, the Euclidean limit of hyperbolic monopoles is subtle, and it is obtained by allowing the weight of the lifted circle action to also diverge \cite{Atiyah:1987ua,Jarvis:1997ws}.
However, the reason why in the $R \rightarrow \infty $ limit  $\gm $ and  (\ref{dasjhg})   differ by exactly a factor $1/2 $ in their framed parts is not clear to us.

\subsection{ Geodesic motion on $\MMMf $}
Because of spherical symmetry, we can assume without loss of generality that motion is taking place in the equatorial plane $\alpha =\pi /2 $.
The conservation laws associated to the Killing vector fields $ \partial _\chi $ and $\partial _ \beta  $ are,
with $x = \lambda /R\in(0,1] $,
\begin{equation}
\label{claws} 
f x ^2 \dot{ \chi } =C, \qquad  g \sin ^2 (\alpha) \, \dot{ \beta  }=D ,
\end{equation} 
where $\dot{}$ denotes differentiation along an affinely parameterised geodesic.

The constants $D$ and $C$ have the physical meaning of angular momentum with respect to, respectively,
space and phase rotations. The quantities $g$,  $ x ^2 f $ can be thought of as the corresponding moments of inertia. Note that $x ^2 f $ is an increasing function of $x$, while $g $ is a decreasing function, see Figure \ref{gxfplot}.
Since the size of an instanton increases with $x$, the moment of inertia with respect to phase rotations grows as customary with the size of the instanton, while the moment of inertia with respect to space rotations behaves the opposite way.

For an affinely parameterised geodesic,  the conservation laws (\ref{claws}) give
\begin{equation}
 \dot x ^2 + \frac{C  ^2 }{f ^2 x ^2 } + \frac{D ^2 }{fg} - \frac{1}{f} =0,
\end{equation} 
corresponding to a 1-dimensional motion of a particle of mass $m =2 $ with zero energy in the effective potential 
\begin{equation}
\label{ripot} 
V = \frac{1}{f} \left( \frac{C  ^2 }{f  x ^2 } + \frac{D ^2 }{g} - 1 \right).
\end{equation} 
A plot of $V$ for generic values of $C$ and $D$ can be found in Figure \ref{vcd}.

\begin{figure}[htb]
\begin{center}
\includegraphics[scale=0.6]{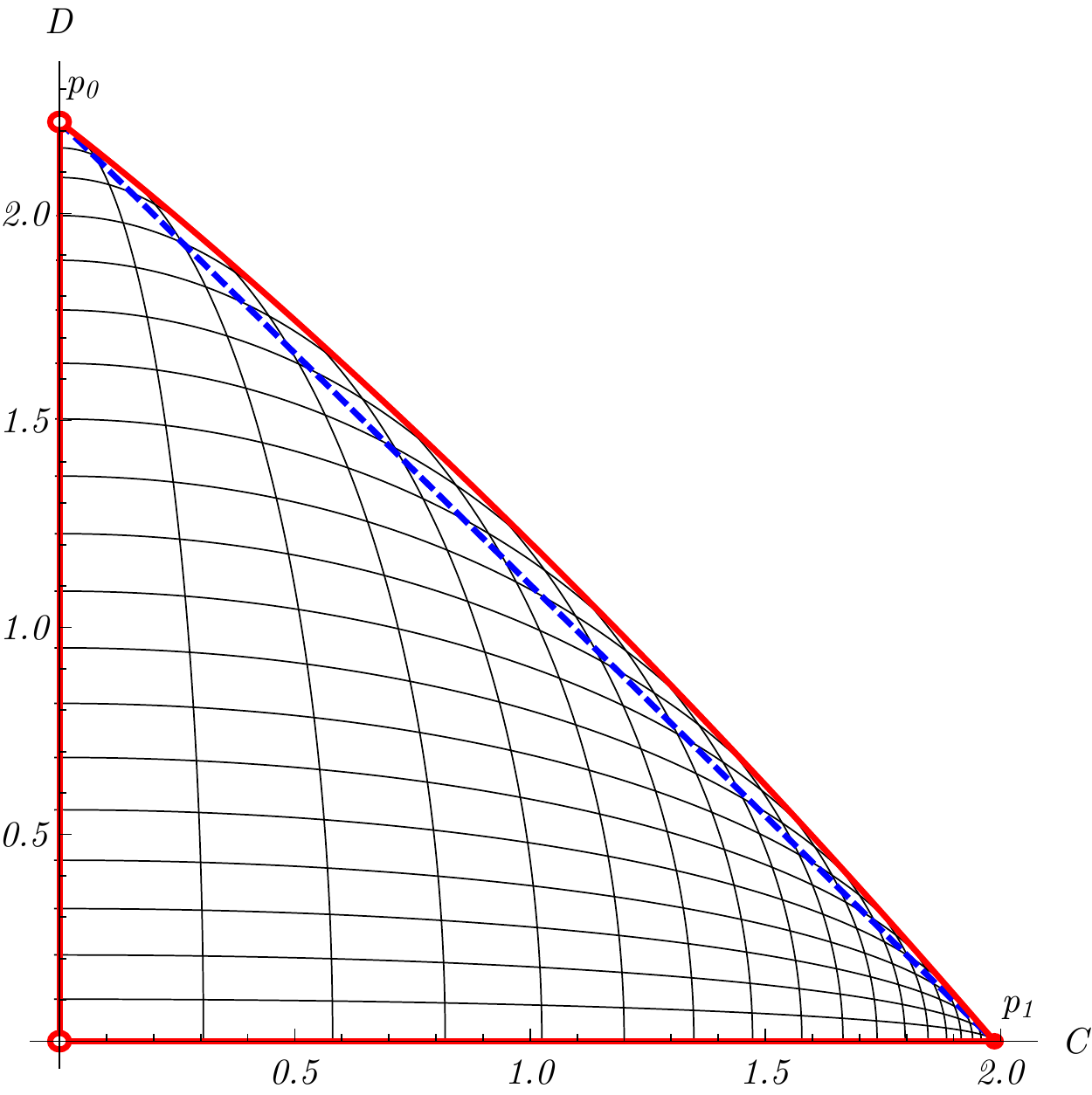}
\caption{Values of $C,D$ for which $V$ has at least one root if $x\in[0,1]$. The black curves are  plots of the ellipses 
$ C ^2 /(f  x ^2 ) + D ^2 / g=1 $ for various values of $x\in(0,1)$. Through any point $(C,D) $ not on the boundary  there pass two ellipses, corresponding to the two roots of $V$ for the given values of $C, D$.  For  values of the parameters corresponding to points on the boundary $V$ has only one root. See the text for more details.
  }
\label{mspace}
\end{center}
\end{figure}
Motion is only possible in the region where $V\leq 0 $. Points where $V$ vanishes are inversion points for the $\lambda$-motion.
The potential $V$  diverges to plus infinity for $x \rightarrow 0 $ unless $C =0 $, and for $x  \rightarrow 1 $ unless $ D = 0 $. Therefore $x =0 $ is a root of $V$ if and only if $C =0 $ and  $ D ^2  =  g(0) $. Similarly $x =1 $ is a root if and only if $ D = 0 $ and  $C ^2 = f (1) $.
Hence a non-zero angular momentum with respect to phase rotations  prevents the instanton from shrinking to zero size ($x = 0 $), while a non-zero angular momentum with respect to  space rotations prevents it from reaching maximal size ($ x =1$). 
For any fixed value of $x $, the region in the $(C,D) $-plane in which $V$ has at least one root is bounded by the ellipse with semiaxes $ x \sqrt{ f} $, $\sqrt{ g } $. A numerical approximation of   the union of these regions as $x$ varies in $(0,1) $ is depicted in Figure \ref{mspace}.
Since $C$, $D$ only appear quadratically in (\ref{ripot}) we can focus on the case $C \geq 0 $, $ D  \geq 0 $.

Consider first the case in which either $C$ or $D$ vanishes. For $C =0 $, $0<D \leq \sqrt{\mathrm{max}_{ x\in[0,1] }g(x)} =\sqrt{g(0)} =\pi / \sqrt{ 2 } \sim  2.22 $,  $V$ is an increasing function having only one root which can take any value in the interval $[0,1) $.  For $D =0 $, $0<C \leq \sqrt{\mathrm{max}_{ x\in[0,1] } x ^2 f(x)} =\sqrt{f(1)} =\sqrt{ 2/5 }\, \pi  \sim  1.99 $,  $V$ is a decreasing function with only one root which can take any value in the interval $(0,1] $. For $C = D =0 $, $x \in[0,1] $, $V$ is everywhere negative. 

\begin{figure}[H]
\begin{center}
\includegraphics[scale=0.6]{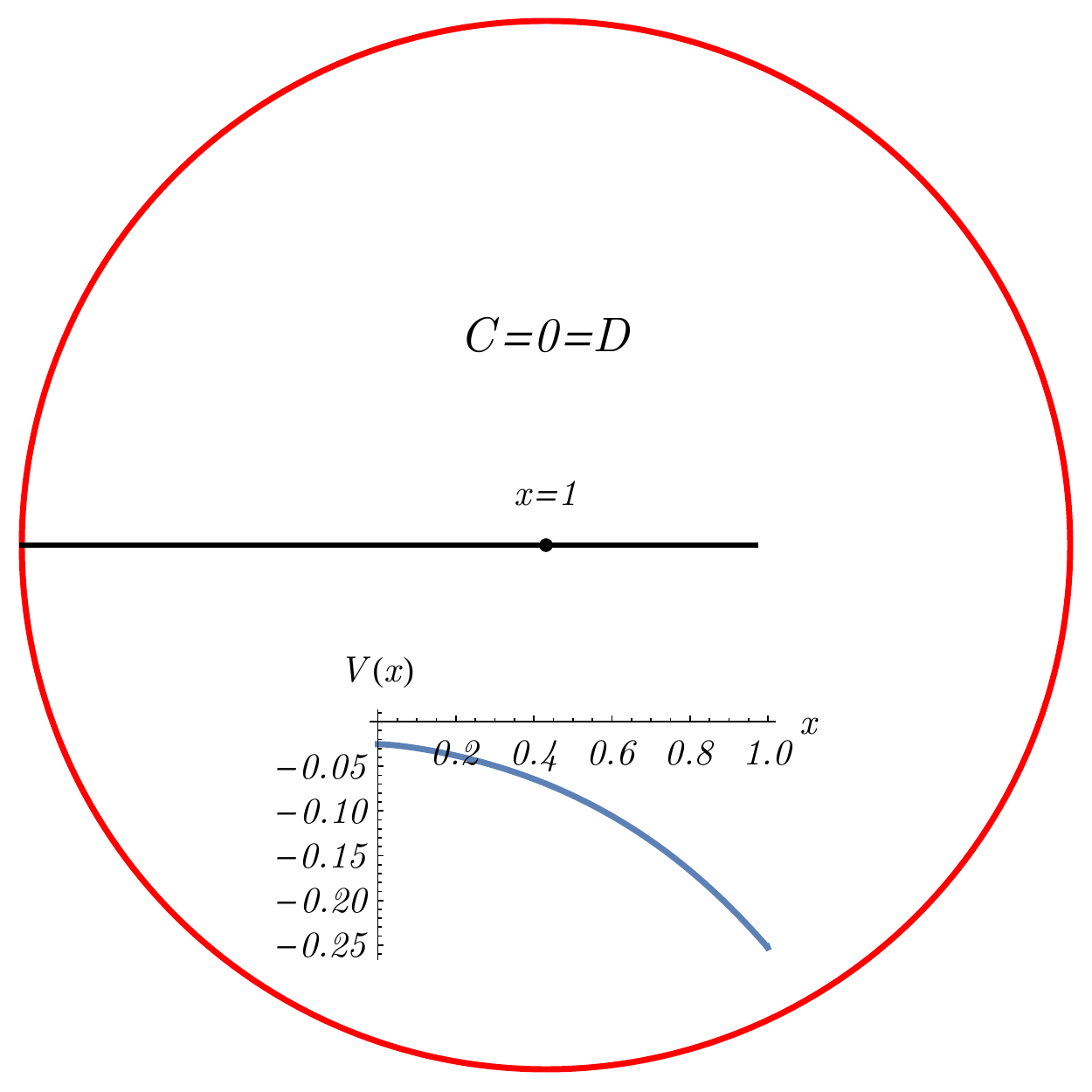}
\caption{Example of motion for $C =D =0 $. The inset shows the corresponding potential $V$. The black line shows a possible motion in which the instanton grows in size until it reaches its maximum possible size ($x=1$), and then shrinks to a singular ``small'' instanton infinitely concentrated at the point antipodal to its initial centre.  For $C =D =0 $ the general trajectory is part of a diameter with an  endpoint on the boundary circle.
}
\label{plotc0d0}
\end{center}
\end{figure}

For $C \neq 0 \neq D $, $V$ has at most two roots: The functions $1/(f \lambda ^2) $ and $1/g$ are convex,  and a linear combination with positive coefficients of convex functions is convex. Since a convex function cannot take the same value more than twice,  the quantity $ C ^2 /(f  x ^2 ) + D ^2 / g $ has at most two roots.  For generic values of $C$ and $D$, $V$ has  two roots, corresponding  in Figure \ref{mspace} to the two ellipses passing through any point not on the boundary, or no roots. The limiting case in which $V$ has two coincident roots corresponds to points $(C, D)$ lying on the envelope of the family of ellipses with semiaxes  $ x \sqrt{ f }$, $ \sqrt{ g } $ parameterised by $x\in(0,1)$. Such points solve the system of equations
\begin{equation}
\frac{C ^2 }{ x ^2 f} + \frac{D ^2 }{g} =1, \qquad
\partial _x  \left( \frac{C ^2 }{ x ^2 f} + \frac{D ^2 }{g} \right)  =0.
\end{equation} 

The boundary values of $C $, $D$  depicted in red in Figure \ref{mspace} can be described as follows.
At $p _1 = ( \sqrt{ f (1) },0) $, $V$  has root $x =1 $, which is part of the moduli space. At $p _0 = ( \sqrt{ f (1) },0) $, $V$ has root $x =0$, which is not part of the moduli space. Along the boundary curve from $p _0 $ to $p _1 $ the single root of $V$ takes all values in  $[0,1] $. For all the points on this curve  it has multiplicity two, except at the endpoints $p _0, p _1 $ where it has multiplicity one. Along the oriented line from $ p _0 $ (respectively $p _1 $) to the origin $V$  has a single root with multiplicity one which moves from $ x =0 $ at $p _0 $  (respectively $x =1 $ at $p _1 $) to arbitrarily close to $x =1 $ (respectively $x =0 $) at the origin. 
 ``Frustration'' at the origin is avoided since $V$ has no roots for $C =D =0 $. The curve between $p _0 $ and $p _1 $ is not far from being linear. For comparison, the straight line segment $\overline{ p _0 p _1} $ is shown in Figure \ref{mspace} as a blue dashed line.

\begin{figure}[H]
\begin{center}
\includegraphics[scale=0.6]{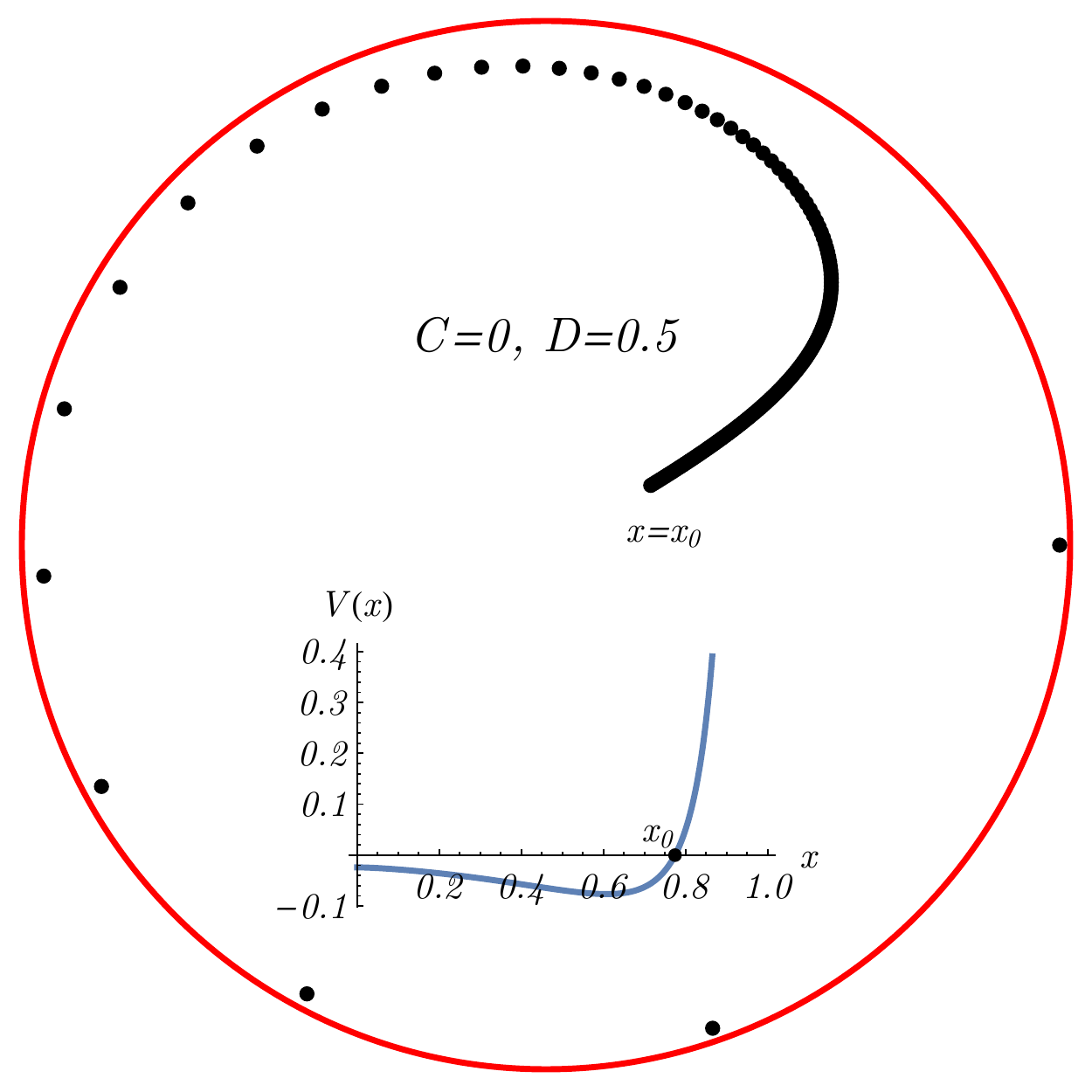}
\caption{Example of motion for $C =0$, $D =0.5$. The black dots, plotted for uniformly spaced values of $\lambda$,  show a possible motion starting at the turning point $x _0 $ and ending on the boundary of the moduli space.}
\label{plotc0}
\end{center}
\end{figure}

Next we discuss the instanton motion corresponding to these different cases.
In general, it follows from (\ref{claws}) that $\dot\beta  $ increases from the centre ($ x =1 $) to the boundary ($ x =0$) of the moduli space, while $\dot \chi $ behaves in the opposite way. Figures \ref{plotc0d0}, \ref{plotc0}, \ref{plotd0}, \ref{plotcd}
show the image of the curve
\begin{equation}
t \mapsto (1 - x(t)) ( \cos \beta (t)  , \sin \beta (t)  )
\end{equation} 
on the the equatorial section $ \alpha =\pi /2 $ of  $\MMM \simeq \mathring{B} ^3 _R $ for various kinds of geodesic motion. The $\chi$ fibration is not shown. The curve has been obtained by numerical integration. Points on the boundary $x =0 $ of $\MMM $  are represented in red.
\begin{figure}[ht]
\begin{center}
\includegraphics[scale=0.6]{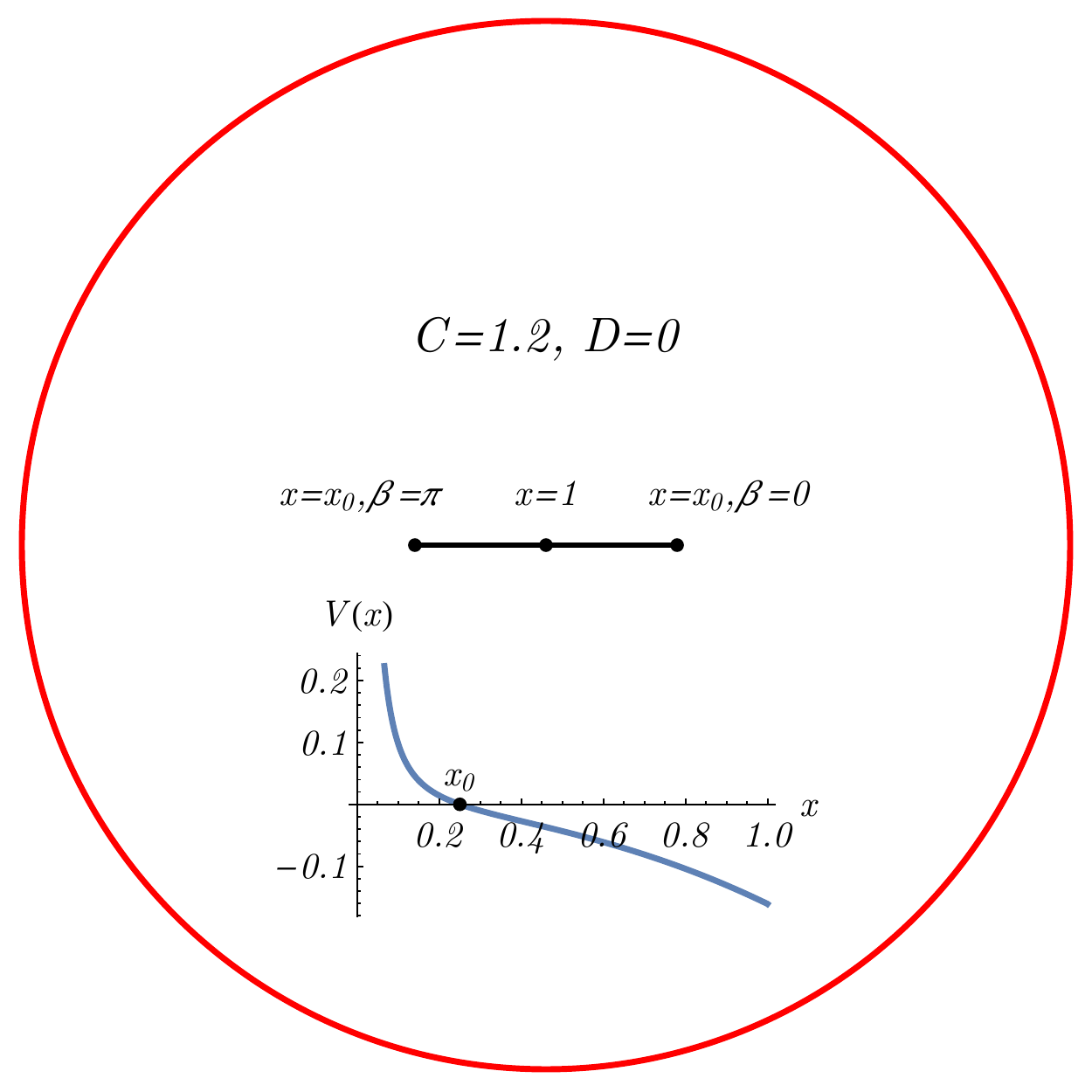}
\caption{Example of motion for $ C =1.2, D =0 $.
The instanton oscillates in size between its maximum value $x =1$ and its minimum $x =x _0 $, with its centre alternating between the point $ \alpha =\pi /2$, $\beta $ and its antipodal. The plot in the figure is for $\beta = 0 $. }
\label{plotd0}
\end{center}
\end{figure}
\begin{figure}[htbp]
\begin{center}
\includegraphics[scale=0.6]{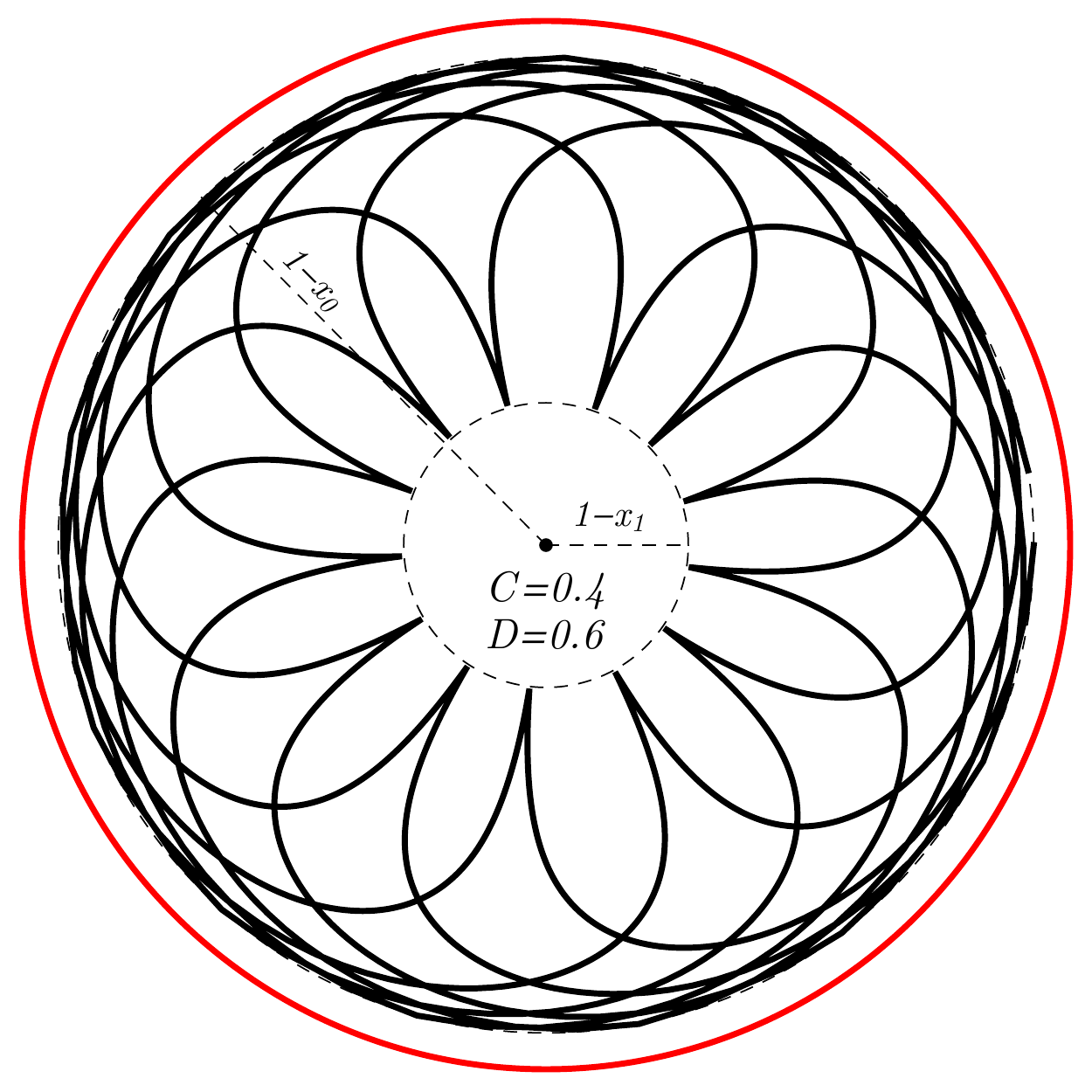}
\caption{Top: Example of motion for $ C =1.2$, $D =0.6 $.
The instanton   size oscillates between the values $x _0 $ and $x _1 $ while  $ \beta $ is monotonically increasing.
The resulting curve is bounded by the two circles of radii $1 - x _1 $ and $1 - x _0 $. It is generally not closed --- in the figure thirteen periods are shown.}
\label{plotcd}
\end{center}
\end{figure}

For $C = D =0 $, we have incomplete geodesics with the instanton shrinking to zero size ($x =0$) in finite time, see Figure \ref{plotc0d0}. The angle $\chi$ is constant for $x \in(0,1] $. It becomes undefined at the singular boundary  $ x =0 $ where the circle fibration collapses to  zero size. The angle $\beta $  is constant but for possibly changing value at $x =1 $, where $g$ vanishes. This happens if the size of the instanton is initially increasing. As $x$ reaches its maximum value $x =1 $ the instanton centre jumps to its antipodal point, i.e.~$(\alpha = \pi /2 , \beta )\rightarrow  (\pi - \alpha =\alpha , \pi + \beta )$. From that moment onwards, $x$ decreases as the instanton shrinks, eventually reaching zero size, while $\beta$ stays constant.

For $C =0$, $D \neq 0$ we also have incomplete geodesics with the instanton size $x$  decreasing from the value of the only root of $V$ to $x =0 $ in finite time, see Figure \ref{plotc0}. The angle $\chi$ is constant, while $\dot\beta $ has constant sign equal to the sign of $D$. Therefore, the instanton hits the boundary of moduli space at an angle which is neither radial nor tangential, moving along a spiral.  The limiting case in which the   only root of $V$ is $x =0 $  corresponds to an instanton with zero size, undefined  $\chi$ and $\beta $ linearly varying  in time as $\beta  (t) =\beta (0) + (D/ g (0) )\, t $, but does not belong to the moduli space. 

\begin{figure}[htbp]
\begin{center}
\includegraphics[scale=0.6]{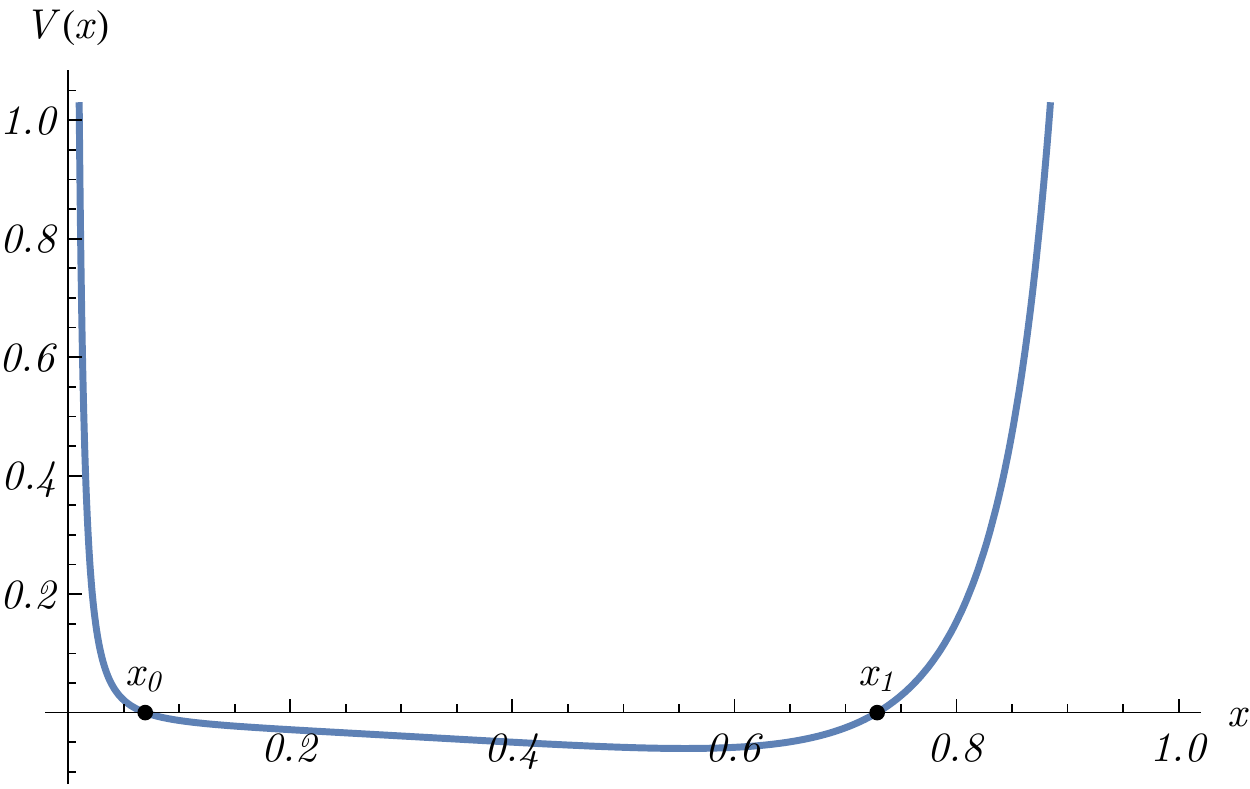}
\caption{Graph of $V$  for the values $C =0.4 $, $D =0.6 $ corresponding to Figure \ref{plotcd}. The turning points at $x _0, x _1  $ are marked by black dots.}
\label{vcd}
\end{center}
\end{figure}

For $D = 0$, $C \neq 0 $  we have bounded orbits, see Figure \ref{plotd0}. Starting from its minimum value $x  _0  $, equal to the only root of $V$, the size $x$ of the instanton increases to its maximum value $x =1 $, and then decreases again to $x _0  $, with $ \beta $ flipping by $\pi$ as $x$ reaches the value of one and staying constant otherwise. The quantity $ \dot  \chi $ has constant sign equal to the sign of $C$. In the limiting case $ x  _0  =1 $ we have an instanton of constant size,  undefined $\beta  $ (the instanton is sitting at the centre of the moduli space hence the polar angle is not defined) and $\chi $  varying linearly in time as 
$ \chi (t) = \chi (0) + (C/ f (1) ) \, t $.

For $ C \neq 0 \neq D $ we also have bounded,  generally not closed orbits, with $x$ bouncing between the values $x _0, x _1 \in(0,1)$ of the two roots of $V$, see Figure \ref{plotcd}. The frequency of the motion increases as the area of the region bounded by the graph of $V$ and the $x$-axis decreases. Both $\beta $ and $\chi$ are varying with $\dot \beta, \dot \chi $ having constant sign. In the limiting case $ x _0 =x _1 $ we have an instanton with constant $x =x _0 $, and $\beta $, $\chi$   varying linearly in time as 
$\beta  (t) =\beta (0) + (D/ g (x_0) )\, t $, $ \chi (t) = \chi (0) + C \, t/ ( x _0 ^2 f (x_0) ) $.

\section{Conclusions}
\label{s5} 
Our parameterisation of circle-invariant  1-instantons on the 4-sphere in terms of the 3-dimensional coset $SL(2,\mathbb{C})/SU(2)$ has a  simple interpretation: suitably chosen parameters on the coset  give the size of the instanton and its position on an equatorial 2-sphere which is kept fixed by the circle action. The instanton size is positive but  at most equal to the radius of the 4-sphere, with equality corresponding to a uniformly spread out instanton.  Framing adds a single internal phase degree of freedom, leading to a 4-dimensional moduli space.

The $L^2$ metric we have computed has the expected $SO(3)\times U(1)$ symmetry, but we have not been able to ascertain any other special feature, like a  K\"ahler structure. However, the symmetry is sufficient to determine  geodesics and to understand their generic features. These turn out to be remarkably similar to those found for degree one  lumps on a 2-sphere in \cite{Speight:1997ub}. 

In fact, such lumps are also characterised by a size parameter taking values in a finite interval, a position on the 2-sphere, and an internal phase. For lumps, the phase is  $SU(2)$-valued, the moduli space six dimensional, and the  isometry group is  $SO(3)\times SO(3)$. Nonetheless, the geodesic behaviour is similar to that found here, 
essentially because of similar properties of the  momenta of inertia with respect to spatial and internal rotations.  In both models, the former decreases and the latter increases with the size of the soliton. This behaviour is the basic reason why, for generic geodesics,
 the  size parameter oscillates while  the solitons spin both in space and phase. 

The circle action and associated Higgs field play an essential r\^ole in the definition of the tangent space and of the  $L ^2 $ metric in the framed directions. It is therefore not obvious how our discussion of framing can be generalised if circle invariance is dropped. Framing is well understood in the case of instantons on $\mathbb{H}  $ and one would expect there to be a compact counterpart also for  non-circle invariant instantons.  However,  using the terminology of the Introduction, we do not know a natural way of selecting generators of `physically relevant' gauge transformations in this case,  and  have left this issue for a  future investigation.

As explained in the Introduction, one motivation for considering circle-invariant instantons  is their relation to hyberbolic monopoles. The $L^2$ metric on the moduli space of hyperbolic monopoles diverges, and it is a long-standing problem to define a natural metric or other geometric structure on this space. Our metric solves this problem at some level, but it does not satisfy all the requirements that have been considered in the literature. 

In particular, there is no obvious limit where our metric tends to the $L^2$ metric on the moduli space of a single Euclidean monopole. Taking the radius  $R$ of the 4-sphere  to infinity  does not work, as we saw  in Section \ref{s4}, since this limit does not deform hyperbolic into Euclidean monopoles. Our metric also does not fit into the framework  of pluricomplex structures  which was proposed in \cite{Bielawski:Euclidean,Bielawski:Pluricomplex}
 as the appropriate  generalisation of hyperK\"ahler structures  for  the moduli spaces of hyperbolic monopoles. 

Nonetheless, the metric we computed is naturally defined in terms of a field theory on the 4-sphere, and the configurations parameterised by our moduli space are bona fide hyperbolic monopoles. The geodesics we have found therefore approximate the motion of hyperbolic monopoles according to a naturally defined flow, namely the time evolution of  Yang-Mills theory on $\mathbb{R} \times S^4 _R $.

\section*{Acknowledgements}
G.F.~thanks Lutz Habermann for useful discussions.
We both thank Sir Michael Atiyah for interesting discussions and acknowledge support through  the EPSRC grant ``Dynamics in geometric models of matter'',
EP/K00848X/1.

\appendix
\section{Quaternionic notation}
\label{qnot} 
We denote by $\mathbb{H}  $ the space of quaternions, and by $\qi, \qj, \qk $ the standard imaginary unit quaternions.
We write a quaternion in the form
\begin{equation}
q=x + y\qi + z\qj + w \qk
=  x ^1 + x ^2 \qi + x ^3 \qj + x ^4  \qk
\end{equation}
and denote quaternionic conjugation by a bar, $\bar q =x-y\qi-z\qj-w\qk $. The imaginary part of a quaternion $q$ is
\begin{equation}
\Im(q)= y \qi + z \qj + w \qk.
\end{equation} 
We   identify $\mathbb{R}  ^4 $ with $\mathbb{H}  $ via the isomorphism
\begin{equation}
\mathbb{R}  ^4 \ni(x ,y,z,w) \leftrightarrow q=x + \qi y + \qj z + \qk w\in \mathbb{H}.
\end{equation} 
We take  the metric 
\begin{equation}
\gh =\mathrm{d} \bar q \, \mathrm{d} q  
\end{equation}
on $\mathbb{H}$, corresponding to the Euclidean metric $g _{ \mathbb{R}  ^4 } = \mathrm{d} x ^2 + \mathrm{d} y ^2 + \mathrm{d} z ^2 + \mathrm{d} w ^2 $ on  $\mathbb{R}  ^4 $. 
The squared modulus of a quaternion  $q$ is therefore $\bar q q = |q| ^2 =x ^2 + y ^2 + z ^2 + w ^2 $.

The isomorphism  $\mathbb{H}  \simeq  \mathbb{C}  \oplus  \mathbb{C}  \qj $ mapping $x + y\qi + z\qj + w\qk $ to $(x + iy, (z + i w) \qj) $ allows us to identify $n \times n $ quaternionic valued matrices with $2n \times 2n $ complex valued matrices. Let $ M =A + B \qj $ be a quaternionic valued matrix. Then the corresponding complex valued matrix $\widetilde  M $ is given by
\begin{equation}
\widetilde  M =
\left( 
\begin{array}{cc}
A  & B \\ 
-\bar B &\bar A,
\end{array}
 \right) 
\end{equation} 
where $\bar A $ is the conjugate transpose of $A$.
The conjugate of a quaternion corresponds to the transpose conjugate of its associated matrix.

We will be dealing with  the following groups
\begin{align}
\label{sl2h} 
SL(2, \mathbb{H}  )&=\left\{M=\begin{pmatrix} a &b \\ c &d \end{pmatrix} : a,b,c,d\in \mathbb{H} , \mathrm{det}(\widetilde M) =1  \right\},\\
Sp(2 )&=\left\{\begin{pmatrix} a &b \\ c &d \end{pmatrix} : a,b,c,d\in \mathbb{H} :|a |^2 + |c |^2 =1= |b |^2 + |d| ^2, \bar a b +  \bar c d= 0 \right\},\\
Sp(1)  &= \left\{ u\in \mathbb{H}  : |u| ^2 =1   \right\}.
\end{align} 
The group $Sp (2)  $ is the subgroup of $SL(2, \mathbb{H}  )$  preserving the quaternionic inner product $\bar q  q   $. The group $Sp (1) $ is isomorphic to $SU (2) $,  $\widetilde{Sp} (1) =SU (2) $.

For the unit quaternions $1,\qi, \qj, \qk $ the above correspondence gives
\begin{equation}
\label{ccccr} 
\widetilde 1 = \mathrm{Id}_2   , \quad 
\widetilde\qi = i \sigma _3 , \quad 
\widetilde\qj = i \sigma _2 , \quad 
\widetilde\qk = i \sigma _1,
\end{equation} 
where $\{ \sigma _i \} $ are the Pauli matrices.
Consider a purely imaginary quaternion $q =y\qi + z\qj + w\qk $. Using  (\ref{ccccr}) we can identify it with an element $\widetilde q $ of $\mathfrak{su} (2)  $.
By taking the inner product 
\begin{equation}
\langle A, B \rangle _{ \mathfrak{su} (2) } = - \frac{1}{2}  \mathrm{Tr}(AB) ,
\end{equation} 
  we have $ \langle \bar q, q \rangle _{ \mathfrak{su} (2) } =y ^2 + z ^2 + w ^2 =\bar q q$ consistently with our previous definition.
  
  If $\alpha , \beta  \in \Lambda (\mathbb{H}  , \mathfrak{sp} (1)) $, we denote by
\begin{equation}
\begin{split} 
\fp{ \alpha}{ \beta }&
=  - \frac{1}{2} \mathrm{Tr} \left( \alpha \wedge \hat * \beta  \right), \qquad 
| \alpha | ^2 _{ \mathbb{H}  }
=  \fp{ \alpha }{ \alpha },
\end{split} 
\end{equation} 
where $\hat * $ denotes the Hodge operator with respect to the metric $\gh $.
For quaternionic valued forms $\alpha$, $\beta$  of degree $p$ and $q$ we have $\overline{ \alpha \wedge \beta }= (-1 )^{ pq }\bar \beta \wedge \bar \alpha $.

Some useful relations are
\begin{equation}
\begin{split} 
\Im (\bar q\,  \mathrm{d} q ) &=
(x \, \mathrm{d} y-y \, \mathrm{d} x + w \, \mathrm{d} z- z \, \mathrm{d} w )\, \qi + 
(x \, \mathrm{d} z - z \, \mathrm{d} x + y \, \mathrm{d} w - w \, \mathrm{d} y )\, \qj \\ &+ 
(x \, \mathrm{d} w-w \, \mathrm{d} x + z \, \mathrm{d} y - y \, \mathrm{d} z )\, \qk,\\
\Im ( q\,  \mathrm{d} \bar q ) &=
(y \, \mathrm{d} x - x \, \mathrm{d} y+ w \, \mathrm{d} z- z \, \mathrm{d} w )\, \qi + 
( z \, \mathrm{d} x-x \, \mathrm{d} z + y \, \mathrm{d} w - w \, \mathrm{d} y )\, \qj \\ &+ 
(w \, \mathrm{d} x - x \, \mathrm{d} w + z \, \mathrm{d} y - y \, \mathrm{d} z )\, \qk,\\
|\Im (\bar q\,  \mathrm{d} q ) |^2 _{ \mathbb{H}  }&=
 3 |q |^2 ,\\
 |\mathrm{d}\bar q \wedge \mathrm{d} q |_{ \mathbb{H}  }^2  &
= 4! .
\end{split} 
\end{equation}

\section{Stereographic projection}
\label{stp} 

In order to be able to take the flat space limit $R \rightarrow \infty $ of various quantities, we consider stereographic projection from a sphere of arbitrary radius $R$.
Embed $S ^4 _R  $ in $ \mathbb{H} \times \mathbb{R} $ as $S ^4 _R =\{( q , v) : |q |^2 + v^2 =R ^2 \}  $. 
Stereographic projection from the north pole  $p ^N=(0,R) $ of $ S ^4 _R $  is then given by the map
\begin{equation}
\label{st1} 
\begin{split} 
\phi _N : S ^4 _R\setminus\{p _N \}  \rightarrow \mathbb{H}  , \quad
(q , v ) &\mapsto  \frac{R }{R - v } \, q, \quad \text{$v \neq R $},
  \end{split} 
\end{equation} 
with inverse
\begin{equation}
\begin{split}
\phi _N ^{-1} :\mathbb{H}  \rightarrow S ^4 _R\setminus\{p _N \} , \quad 
q &\mapsto \frac{R}{R ^2 + |q |^2 } (2 R q, |q |^2 - R ^2 ).
\end{split}
\end{equation} 
The map $\phi _N $ can be extended to a conformal isometry from $S ^4 _R $ to $\hat{\mathbb{H}  } $, the one-point compactification of $\mathbb{H}  $,
by setting $p ^N =(0,R)  \xmapsto{ \phi _N } \infty $.

We identify $S ^4 _R $ with the quaternionic projective space $P ^1 (\mathbb{H}  ) $,
 the quotient space of $ S ^7 _{ \sqrt{ R } }=\{ (q ^1 , q ^2)\in \mathbb{H}  ^2 : |q ^1 |^2 + |q ^2 |^2 =R \} \subset \mathbb{H}  ^2 $ under right multiplication by unit quaternions,
\begin{equation}
\begin{split} 
&P ^1 (\mathbb{H}  ) 
=\{ [ q ^1 , q ^2 ]:(q ^1 , q ^2 )\in  S ^7 _{ \sqrt{ R } }\}, \\
& [q ^1 , q ^2 ]= [ r ^1 , r ^2 ] \Leftrightarrow ( r ^1 , r ^2 ) =( q ^1 h, q ^2 h ) \text{ for some } h\in \mathbb{H} ,\,  |h |=1.
 \end{split} 
\end{equation} 
The Hopf projection is given by
\begin{equation}
\pi _H : S ^7 _{ \sqrt{ R }} \subset\mathbb{H}  ^2  \rightarrow {S}  ^4 _R\subset \mathbb{H}  \times \mathbb{R}, \quad 
( q ^1 , q ^2 )\mapsto (2 q ^1 \bar q ^2,  |q ^1 |^2 - |q ^2 |^2).
\end{equation} 
Note that for the target space of $\pi _H $ to be the 4-sphere of radius $R$, we need to take its domain to be the 7-sphere of radius $\sqrt{ R }$.
The Hopf projection descends to an isomorphism $\hat\pi _H : P ^1 (\mathbb{H}  )\rightarrow   S ^4 _R$ which we use to identify $S ^4 _R $ with $P ^1 (\mathbb{H}  )$. Its inverse is 
\begin{equation}
\hat \pi _H ^{-1} : S ^4 _R  \rightarrow P ^1 ( \mathbb{H}  ), \quad (q, v) \mapsto \frac{1}{\sqrt{ 2(R-v)}}[q,R-v].
\end{equation} 

In terms of $P ^1 (\mathbb{H}  )$, $p _N $ corresponds to the point $[R,0] $, and stereographic projection from $p _N $   to the map
\begin{equation}
\label{st2} 
\begin{split} 
\hat\phi _N : P ^1 (\mathbb{H}  )\setminus\{[R,0]\} \rightarrow  \mathbb{H}, \quad 
[q ^1 , q ^2 ] & \mapsto Rq ^1 (q ^2 )^{-1},
\end{split} 
\end{equation} 
with inverse
\begin{equation}
\label{invstp} 
\begin{split}
\hat\phi _N ^{-1} :  \mathbb{H} \rightarrow P ^1 (\mathbb{H}  ) \setminus\{[R,0]\}, \quad
q &\mapsto  \sqrt{\frac{R}{ R ^2  + |q |^2 } }[q,R].
\end{split}
\end{equation} 
The map $\hat\phi _N $ can be extended to a conformal isometry from $P ^1 (\mathbb{H}  ) $ to $\hat{\mathbb{H}  } $, by setting $[R,0] \xmapsto{\hat\phi _N } \infty. $
Equations (\ref{st1}), (\ref{st2}) are related by    
\begin{equation} 
  \phi _N\circ \hat\pi _H =\hat\phi _N .
  \end{equation}
The projection $S ^7 _{ \sqrt{ R } } \xrightarrow{ \pi _H } S ^4 _R \xrightarrow{ \phi _N } \mathbb{H}   $  is given by $(q ^1 , q ^2 ) \mapsto  R q ^1  (q ^2 )^{-1} $.

Let $g _{ S ^4 _R }$ be the round metric on $S ^4 _R $. By pulling it back to $ \mathbb{H}  $ via $ \phi _N ^{-1} $ we obtain
\begin{equation}
(\phi _N ^{-1} ) ^\ast g _{ S ^4 _R }
= \frac{4 R ^4 }{( R ^2 + |q |^2) ^2  } \,  g _{ \mathbb{R}  ^4 }.
\end{equation} 

\section{Behaviour of some quantities under conformal transformations}
\label{confb} 
Let $(M,  g) $, $(\hat M, \hat  g) $ be Riemannian $n$-manifolds. For any smooth map $\psi: \hat M \rightarrow M $, $p $-form $\zeta \in \ell ^p ( M) $,  gauge potential $ A $ on $ M $, we have, with $ * =  * _{ g} $, $\hat * =  * _{\hat g} $, $ \hat \zeta =  \psi ^\ast \zeta $, $\hat A = \psi ^\ast A $,
\begin{equation}
\begin{split} 
\psi ^\ast ( *  \zeta )
&=* _{ \psi ^\ast  g }  \hat\zeta , \qquad 
\psi ^\ast( \mathrm{d} _{ A} \zeta )
=\mathrm{d} _{  \hat  A } \hat\zeta .
\end{split} 
\end{equation} 
In particular if $\psi$ is a conformal isometry then  $ \psi ^\ast  g = \ell ^2 \hat g $ and,
taking into account that $ \mathrm{d} ^\dagger _A = (-1)^{ np+n+1} * \mathrm{d} * $, 
\begin{align} 
\label{axcvd} 
\psi ^\ast  (*  \zeta ) &
= \ell ^{ n-2p} \, \hat * \,\hat  \zeta ,\\
\label{astrf} 
\psi ^\ast ( *  \mathrm{d}  _{A}*\zeta  )&
= \ell ^{2p-n-2 } \, \hat * \,  \mathrm{d} _{ \hat A} \,  \hat *  \,  (\ell ^{n-2p} \,  \hat\zeta ).
\end{align} 
For $p =0 $, defining $\triangle _{  A } = \mathrm{d} ^\dagger _{A} \mathrm{d} _{A }$,
\begin{equation}
\label{lc} 
\psi ^\ast\triangle _{ A } \zeta 
=( -1) ^{ n+1} \ell ^{ - n}\, \hat * \, \big ( \mathrm{d} _{ \hat A } \, \hat* \,   (\ell ^{ n - 2 } \mathrm{d} _{ \hat  A } \hat \zeta )\big).
\end{equation}

\section{Computation of  $(\hat \phi _N ^{-1} )^\ast \mathrm{d} _{  \All } ^\dagger  {\delta _\alpha   \Alaa }  $ and $ (\hat \phi _N ^{-1} )^\ast (  G _{ \All}  \mathrm{d} _{  \All } ^\dagger  { \delta _\alpha  \Alaa } )$}
\label{projjj} 
In this appendix we compute some quantities needed to calculate $ \pg{(\delta _\alpha \Alaa )}$.
We make use of the equations of Appendix \ref{confb} for $n =4 $, $M = P ^1 (\mathbb{H}  ) $ with metric 
\begin{equation} 
\gph =  \frac{R ^4}{( R ^2  + |q| ^2 )^2} \, \gh,
\end{equation}
 $\hat M = \mathbb{H}  $ with metric $\gh =\mathrm{d} \bar q \, \mathrm{d} q $.
The conformal isometry $\psi$  is inverse stereographic projection from $[R,0] $,   given by $\hat\phi _N ^{-1} $, see Equation (\ref{invstp}),
and has conformal factor  $\ell ^2 =R ^4 /( R ^2  + |q| ^2 )^2 $.
If $  \zeta  $ is some quantity defined on $P$, we denote by $ \hat \zeta   = ( \hat\phi _N ^{-1}  )^\ast  \zeta    $,
the corresponding quantity on $\hat P $. Vice versa, if $ \hat \upsilon  $ is some quantity defined on $\hat P$, we denote by 
$\upsilon  =( \hat\phi _N   )^\ast \hat \upsilon  $ the corresponding quantity on $P$.

Let us first calculate 
\begin{equation}
\begin{split} 
(\hat \phi _N ^{-1} )^\ast \mathrm{d} _{  \All } ^\dagger  {\delta _\alpha   \Alaa }
= - \frac{(R ^2  + |q| ^2 )^4 }{R ^8 } \, \hat *\,   \mathrm{d} _{\Al}  \hat *    \left( \frac{R ^4 }{(R ^2  + |q| ^2 ) ^2 }  \delta_\alpha  \Ala \right) .
\end{split} 
\end{equation} 
 Write
$\Im(\bar q \, \mathrm{d} q ) =\gamma _k \, \mathrm{d} x ^k $, with 
\begin{equation}
\label{gammas} 
\gamma _1 =-(y\qi + z\qj + w\qk ), \quad 
\gamma _2 =x \qi -w\qj + z\qk, \quad 
\gamma _3 =w\qi + x\qj -y\qk, \quad 
\gamma _4 =-z\qi + y\qj + x\qk.
\end{equation} 
Equation (\ref{oooopl}) becomes
\begin{equation}
\delta _\alpha  \Ala 
= - \frac{1}{2R}  \left(  \frac{\lambda ^2 - R ^2 }{\lambda ^2  + |q |^2 } \right) 
\left[ \left( \frac{2x}{\lambda ^2 + |q| ^2 } \right)  \gamma _k \, \mathrm{d} x ^k +  \mathrm{d} \gamma _1 \right].
\end{equation} 
Write
$\hat u= (R ^4/(R ^2 + |q| ^2 )^2) \delta _\alpha  \Ala$, then
\begin{equation}
\label{vgvhgj} 
\hat * \, \mathrm{d} _{\Al}\,   \hat * \, \hat u 
 =  \mathrm{d} _{\Al}   (\hat u ( \partial _j )) ( \partial _j )
 =\partial _j  (\hat u (\partial _j )) + [\Al (\partial _j ), \hat u( \partial _j )] .
 \end{equation} 
By using $ x ^j \gamma _j  =0$ and (\ref{gammas}) we get
\begin{equation} 
\begin{split} 
\partial _j  (\hat u (\partial _j )) &
=  \frac{2 R ^3  (\lambda ^2 - R ^2  ) \,  \gamma _1 }{(\lambda ^2 + |q| ^2 ) (R ^2 + |q| ^2  )^3  }.
\end{split} 
\end{equation} 
By using $\left[  \gamma _j ,  \partial _j   \gamma _1 \right]  =4 \gamma _1 $  we get
\begin{equation}
\begin{split}
[\Al(\partial _j ) , \hat u (\partial _j ) ] &
= -  \frac{2R ^3 (\lambda ^2 - R ^2 ) \,  \gamma _1 }{(\lambda ^2  + |q| ^2  )^2  ( R ^2 + |q| ^2  )^2 } .
\end{split}
\end{equation} 
Hence
\begin{equation}
\label{jhgjfk} 
\begin{split}
(\hat \phi _N ^{-1} )^\ast \mathrm{d} _{  \All } ^\dagger  {\delta _\alpha   \Alaa }  &
= - \frac{2}{ R ^5 }\frac{( \lambda  ^2 - R ^2  ) ^2(R ^2  + |q| ^2)}{(\lambda ^2 + |q| ^2  )^2  }   \, \gamma _1. 
\end{split}
\end{equation} 

In order to compute $ (\hat \phi _N ^{-1} )^\ast (  G _{ \All}  \mathrm{d} _{  \All } ^\dagger  { \delta _\alpha  \Alaa } )$
it is convenient to first  calculate  $((\hat \phi _N ^{-1} ) ^\ast  \circ   \triangle _{ \All} \circ \phi _N ^\ast )( \gamma _1 /(\lambda  ^2 + |q| ^2))$. Applying (\ref{lc}) gives
\begin{equation}
\begin{split}
\label{nbmcnx} 
&((\hat \phi _N ^{-1} ) ^\ast  \circ   \triangle _{ \All} \circ \phi _N ^\ast )\left(  \frac{\gamma _1}{\lambda  ^2 + |q| ^2} \right)  \\ &
=\frac{ (R ^2 + |q| ^2 ) ^2 }{R ^4 } \hat \triangle _{\Al}  \left( \frac{\gamma _1 }{\lambda ^2 + |q| ^2 } \right)\\ &
- \frac{( R ^2 + |q| ^2 ) ^4 }{R ^8 } \, \hat * \,   \left[ \mathrm{d}  \left( \frac{R ^4 }{(R ^2 + |q| ^2) ^2  } \right)   \wedge \, \hat * \,  \mathrm{d} _{ \Al }\left( \frac{\gamma _1 }{\lambda ^2 + |q| ^2 } \right) \right] \\ &
=\frac{ (R ^2 + |q| ^2 ) ^2 }{R ^4 } \hat \triangle _{\Al}  \left( \frac{\gamma _1 }{\lambda ^2 + |q| ^2 } \right)
+ \frac{4}{R ^4 } \left( \frac{ (R ^2 + |q| ^2)( \lambda ^2 - |q| ^2  )  }{(\lambda ^2 + |q| ^2) ^2 } \right) \, \gamma _1 . 
\end{split}
\end{equation} 
The first term in (\ref{nbmcnx}) is
\begin{equation}
 \begin{split}
 &\hat \triangle _{\Al} \left( \frac{\gamma _1 }{\lambda ^2 + |q| ^2 } \right) 
 =- \mathrm{d} _{\Al} \left[  \mathrm{d} _{\Al} \left( \frac{\gamma _1 }{\lambda ^2 + |q| ^2 }  \right) (\partial _j ) \right] (\partial _j )\\ &
 = - \partial ^2 _j \left( \frac{\gamma _1 }{\lambda ^2 + |q| ^2 }  \right)
  - \partial _j \left( \left[ \Al (\partial _j ) , \frac{\gamma _1 }{\lambda ^2 + |q| ^2 }  \right]  \right) \\ &
  - \left[ \Al (\partial _j ) , \partial _j  \left( \frac{\gamma _1 }{\lambda ^2 + |q| ^2 }  \right) \right] 
  - \left[ \Al (\partial _j ) , \left[\Al (\partial _j ), \frac{\gamma _1 }{\lambda ^2 + |q| ^2 } \right] \right] .
 \end{split}
 \end{equation} 
We compute the various terms as follows,
\begin{align}  
  \partial ^2 _j \left( \frac{\gamma _1 }{\lambda ^2 + |q| ^2 }  \right) &
  = - 4 \gamma  _1  \frac{(3 \lambda ^2 + |q| ^2) }{(\lambda ^2 + |q| ^2 ) ^3 },\\
   \partial _j \left( \left[ \Al (\partial _j ) , \frac{\gamma _1 }{\lambda ^2 + |q| ^2 }  \right]  \right) &
  = \frac{4 \gamma _1 }{(\lambda ^2 + |q| ^2 )^2 }=
  \left[ \Al (\partial _j ) , \partial _j  \left( \frac{\gamma _1 }{\lambda ^2 + |q| ^2 }  \right) \right] ,
\end{align} 
  \begin{equation} 
  \begin{split} 
  \left[ \Al (\partial _j ) , \left[\Al (\partial _j ), \frac{\gamma _1 }{\lambda ^2 + |q| ^2 } \right] \right] &
  = - 8 \gamma _1 \frac{|q| ^2 }{(\lambda ^2 + |q| ^2  )^3 },
  \end{split} 
\end{equation} 
since
 \begin{equation}
 [ \gamma _j ,[ \gamma _j , \gamma _1 ]]
 = \gamma _j \gamma _j \gamma _1 + \gamma _1 \gamma _j \gamma _j -2 \gamma _j \gamma _1 \gamma _j 
 =-3 |q| ^2 \gamma _1 - 3 |q| ^2 \gamma _1 -2 |q| ^2 \gamma _1 
 =-8 |q| ^2 \gamma _1 .
 \end{equation} 
Hence
\begin{equation}
\label{hjkfld} 
\hat\triangle _{\Al} \left( \frac{\gamma _1 }{\lambda ^2 + |q| ^2 }\right) 
= \frac{4 \gamma _1 }{(\lambda ^2 + |q| ^2 ) ^2 },
\end{equation} 
Inserting (\ref{hjkfld})  in (\ref{nbmcnx}) we obtain
\begin{equation}
\begin{split} 
&((\hat \phi _N ^{-1} ) ^\ast  \circ   \triangle _{ \All} \circ \phi _N ^\ast )\left( \frac{\gamma _1 }{\lambda ^2 + |q| ^2 } \right)
= \frac{4}{R ^4 } \frac{ (R ^2 + |q| ^2  ) (R ^2 + \lambda ^2 )}{(\lambda ^2 + |q| ^2 )^2 } \, \gamma _1 .
\end{split} 
\end{equation} 

Let us finally come to $(\hat \phi _N ^{-1} )^\ast (G_{ \All}  \mathrm{d} _{  \All } ^\dagger  { \delta _\alpha  \Alaa }  )$.
Using (\ref{jhgjfk}), we have
\begin{equation}
\label{ggggf} 
\begin{split} 
&(\hat \phi _N ^{-1} )^\ast \left(   G_{ \All}  \mathrm{d} _{  \All } ^\dagger  { \delta _\alpha  \Alaa } \right)  
= (\hat \phi _N ^{-1} )^\ast G_{ \All} \hat \phi _N  ^\ast  (\hat \phi _N ^{-1} ) ^\ast \left(  \mathrm{d} _{ \All } ^\dagger  { \delta _\alpha  \Alaa } \right)  \\ &
= (\hat \phi _N ^{-1} )^\ast G_{ \All}  \hat \phi _N  ^\ast
\left(  - \frac{2}{ R ^5 }( \lambda  ^2 - R ^2  )^2 \frac{  (R ^2  + |q| ^2 )   }{(\lambda ^2 + |q| ^2  )^2  }\, \gamma _1  \right) \\ &
= \left(  (\hat \phi _N ^{-1} )^\ast \circ G_{ \All} \circ   \hat \phi _N  ^\ast  \circ  (\hat \phi ^{-1} _N) ^\ast \circ  \triangle_{\All} \circ \hat\phi _N  ^\ast \right) 
\left( - \frac{1}{2R} \frac{( \lambda  ^2 - R ^2  )^2 }{(\lambda  ^2 + R ^2 )}\frac{\gamma _1 }{\lambda ^2 + |q| ^2 } \right)    \\ &
= - \frac{1}{2R}   \frac {(\lambda  ^2 - R ^2) ^2 }{( \lambda ^2 + R ^2)(\lambda  ^2 + |q| ^2) } \,  \gamma _1 .
\end{split} 
\end{equation}

\bibliographystyle{abbrv}
\bibliography{circle_inv}
\end{document}